\newcommand{\TBR}{{\rm TBR}}
\newcommand{\MP}{{\rm MP}}
\newcommand{\MAF}{{\rm MAF}}
\newcommand{\steven}[1]{\textcolor{black}{#1}}
\newcommand{\revisionSK}[1]{\textcolor{black}{#1}}
\newcommand{\purple}{\textcolor{black}}
\newcommand{\blue}{\textcolor{black}}
\title{Reflections on kernelizing and computing unrooted agreement forests}
\author{Rim van Wersch \and Steven Kelk \and Simone Linz \and Georgios Stamoulis}
\institute{R. van Wersch \at
              Department of Data Science and Knowledge Engineering (DKE), Maastricht University, The Netherlands,\\ \email{rim@hyperboreanventures.com}          
           \and
           S. Kelk \at 
           Department of Data Science and Knowledge Engineering (DKE), Maastricht University, The Netherlands,\\ \email{steven.kelk@maastrichtuniversity.nl}
           \and
           S. Linz \at
              School of Computer Science, University of Auckland, New Zealand,\\ \email{s.linz@auckland.ac.nz}
           \and 
           G. Stamoulis \at
            Department of Data Science and Knowledge Engineering (DKE), Maastricht University, The Netherlands,\\ \email{georgios.stamoulis@maastrichtuniversity.nl}  
}
\providecommand{\keywords}[1]{\textit{Keywords:} #1}
\begin{document}
\maketitle

\begin{abstract}
Phylogenetic trees are leaf-labelled trees used to model the evolution of species. Here we explore the practical impact of kernelization (i.e. data reduction) on the NP-hard problem of computing the TBR distance between two unrooted binary phylogenetic trees. This problem is better-known in the literature as the maximum agreement forest problem, where the goal is to partition the two trees into a \steven{minimum number of} common, non-overlapping subtrees. We have implemented two well-known reduction rules, the subtree and chain reduction, and five more recent, theoretically stronger reduction rules, and compare the reduction achieved with and without the stronger rules. We find that the new rules yield smaller reduced instances and thus have clear \steven{practical} added value. In many cases they also cause the TBR distance to decrease in a \steven{controlled} fashion, which can further facilitate solving the problem in practice. Next, we compare the achieved reduction to the known worst-case theoretical bounds of $15k-9$ and $11k-9$ respectively, on the number of leaves of the two reduced trees, where $k$ is the TBR distance, observing in both cases a far larger reduction in practice. As a \steven{by-product of} our experimental framework we obtain a number of new insights into the actual computation of TBR distance. We find, for example, that very strong lower bounds on TBR distance can be obtained efficiently by randomly sampling certain carefully constructed partitions of the leaf labels, and identify instances which seem particularly challenging to solve exactly. The reduction rules have been implemented within our new solver \emph{Tubro} which combines kernelization with an Integer Linear Programming (ILP) approach. Tubro also incorporates a number of additional features, such as a cluster reduction and a practical upper-bounding heuristic, and it can leverage combinatorial insights emerging from the proofs of correctness of the reduction rules to simplify the ILP.

\end{abstract}

\keywords{phylogenetics, agreement forest, TBR distance, kernelization, fixed parameter tractability}

\section{Introduction}
The central challenge of phylogenetics, which is the study of \purple{phylogenetic (evolutionary)} trees, is to infer a \purple{tree} whose leaves are bijectively \purple{labeled} by a set \blue{$X$} of species and which accurately represents the evolutionary events \purple{that} gave rise to $X$. There are many methods to construct phylogenetic trees; we refer to a standard text such as \cite{felsenstein2004inferring} for an overview. The complexity of this problem stems from the fact that we typically only have indirect data available, such as DNA sequences of \blue{a set $X$ of} species. It is quite common for different methods to construct different trees, or the same method to construct different trees depending on which part of a genome the DNA data is extracted from \cite{HusonRuppScornavacca10,richards2018variation,yoshida2019multilocus}. In such cases we might wish to formally quantify the \emph{dissimilarity} of these trees, and this has fuelled research into different distance measures on pairs of phylogenetic trees \cite{Kuhner2015}. Such distances also help us to understand (i) the space of all phylogenetic trees on a fixed number of leaves
\cite{john2017shape}, (ii) the construction of phylogenetic supertrees \cite{Whidden2014}, which produces a single phylogenetic tree summarizing multiple trees; and (iii) phylogenetic networks \cite{HusonRuppScornavacca10}, which generalize phylogenetic trees to graphs.

In this article we focus on the following problem: given two unrooted binary phylogenetic trees $T$ and $T'$, both on leaf set $X$, what is the minimum number of Tree Bisection and Reconnection (TBR) moves required to turn $T$ into $T'$? We defer formal definitions to Section~\ref{sec:def}, but in essence a TBR move consists of deleting an edge of a tree \purple{and} then introducing a new edge to reconnect the two \purple{resulting} components back together. This distance measure, denoted $d_{\TBR}$, is NP-hard to compute \cite{AllenSteel2001,hein1996complexity}. Computation of $d_{\TBR}$ is essentially equivalent to the well-known problem of computing a \emph{maximum agreement forest} of $T$ and $T'$. \purple{Such a forest is} a partition of $X$ into \purple{the} minimum number of blocks \purple{so} that each block induces a subtree in $T$ that is isomorphic (up to subdivision) to that induced in  $T'$, and \blue{no two} blocks induce \blue{overlapping subtrees} in $T$ and $T'$.
This minimum number, $d_{\MAF}$, is equal to $d_{\TBR} + 1$ \cite{AllenSteel2001}. Due to this equivalence, everything in this article that applies to $d_{\TBR}$ also applies to the computation of $d_{\MAF}$.

In the last few years there has been quite some research into the \emph{fixed parameter tractability} of computing $d_{\TBR}$. Fixed parameter tractable  (FPT) algorithms are those with running time $O( h(k) \cdot \text{poly}(n) )$ where $n$ is the size of the input and $h$ is a computable function that depends only on some well-chosen parameter $k$ \cite{downey2013fundamentals,Cygan:2015:PA:2815661}. Such algorithms have the potential to run quickly when $k$ is small, even if $n$ is large. In any case, they run more quickly than algorithms with running times of the form $O( n^{h(k)} )$. A common technique for developing \purple{FPT} algorithms is to navigate a computational search tree such that the number of \blue{vertices} in the tree is bounded by a function of $k$. Using this branching technique, running times of $O( 4^k \cdot \text{poly}(|X|) )$ \cite{whidden2013fixed} and
later $O( 3^k \cdot \text{poly}(|X|) )$~\cite{chen2015parameterized} have been obtained for the computation of $d_{\TBR}$, where $k = d_{\TBR}$. \steven{Another main} technique is \emph{kernelization} \cite{kernelization2019}, which is the focus of this article. The goal here is to apply polynomial-time pre-processing reduction rules such that the reduced instance - the \emph{kernel} - has size bounded by a function of $k$. Such a kernel can then be solved by an exact exponential-time algorithm, which ultimately also yields a running time of the form  $O( h(k) \cdot \text{poly}(n) )$. For the computation of $d_\TBR$, a kernel of size $28k$ was established in 2001; recently this \purple{was} reduced to $15k-9$, and later $11k-9$, by two of the present authors \cite{tightkernel,kelk2020new}. The $28k$ and $15k-9$ results are based on two well-known reduction rules, the \emph{subtree} and \emph{chain} reduction rules, while the $11k-9$ result was obtained by adding five new reduction rules to this repertoire. \blue{Roughly speaking, the latter five reduction rules target short chains that cannot be (further) reduced by the chain reduction rule and, so, the $11k-9$ result is based on repeated applications of all seven reduction rules.} The $15k-9$ and $11k-9$ bounds are both tight under their respective sets of reduction rules.

In this article we adopt an experimental approach to answering the following question: \purple{
do}
the new reduction rules from \cite{kelk2020new} produce smaller kernels \emph{in practice} than, say, when only the subtree and chain reductions are applied? This mirrors several recent articles in the algorithmic graph theory literature where the practical effectiveness of kernelization has also been analyzed~ \cite{fellows2018known,ferizovic2020engineering,henzinger2020shared,mertzios2020power,alber2006experiments}.
The question is relevant, since earlier studies of kernelization in phylogenetics have noted that, despite its theoretical importance, in an empirical setting the chain reduction seems to have very limited effect compared to the subtree reduction \blue{\cite{hickey2008spr,vanIersel20161075}}. In this sense, it is natural to ask whether the five new reduction rules have any real added value in practice. Happily, our experiments indicate that the answer is \emph{yes}; \steven{across a range of experimental settings, the average percentage of \blue{species contained in $X$} that survive after application of the entire ensemble of reduction rules, is \blue{substantially} smaller than when only the subtree and chain reductions are applied}.  We also explore whether the reduction achieved in practice is, expressed as a function of $k$, significantly better than the theoretical worst-case bound of $11k-9$. Again, we answer this affirmatively. \purple{Our} experiments show that kernels of size $6k$ or lower are obtained for 93\% of the tree pairs that we used in our experimental study. We also derive insights into how frequently the phenomenon of \emph{parameter reduction} occurs: this is when a reduction rule \purple{is triggered} that does not preserve $d_{\TBR}$ but instead reduces it in some predictable and well-understood fashion. Parameter reduction is particularly helpful if the kernel is subsequently solved by an FPT algorithm whose running time is exponentially dependent on $d_{\TBR}$.

\begin{figure}[t]
\center
\scalebox{1}{\input{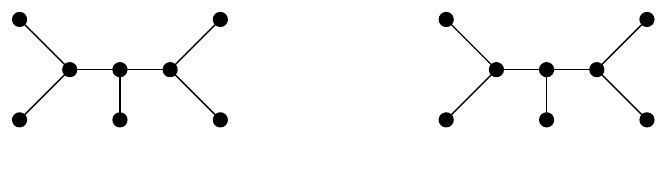_t}}
\caption{Two unrooted binary phylogenetic trees on $X=\{a,b,c,d,e\}$.}
\label{fig:trees}
\end{figure}

Our experimental framework also yields a number of new insights concerning the actual \emph{computation} of $d_{\TBR}$  as opposed to only reducing the problem in size. To compute the empirical kernel size
mentioned in the previous paragraph, it is necessary to know $d_{\TBR}$, or a good lower bound on $d_{\TBR}$. To that end our experiments deploy the only existing, publicly-available TBR solver uSPR~\cite{whidden2018calculating} to compute $d_{\TBR}$, where possible, which is not always the case. In this way we build up a picture of which instances are particularly challenging to compute in practice. For instances where computation of $d_{\TBR}$ is too challenging \purple{for uSPR}, we use a novel lower bound on $d_\TBR$ based on randomly sampling certain types of so-called \emph{convex characters}, which are specially constructed partitions of $X$ \cite{SempleSteel2003}. This turns out to be \emph{extremely} effective, often yielding lower bounds within a few seconds that are very close to $d_{\TBR}$. Arguably, alongside our study of kernel size, the discovery of this bound is the most important contribution of this paper.

A final contribution of this article is our new solver \emph{Tubro}. This was initially intended simply as a vehicle for the new reduction rules from \cite{kelk2020new}. However, during development we reasoned that it would be useful for \emph{Tubro} to incorporate its own solver. To this end, we give a new Integer Linear Programming (ILP) formulation for computation of $d_{\TBR}$, and prove its correctness. Tubro feeds this ILP to the powerful solver Gurobi \cite{gurobi}, strengthening the formulation 
with heuristically computed upper bounds and the aforementioned lower bound. 
We note that Tubro is capable of solving a number of comparatively small instances that are out-of-range for uSPR in our experiments.
Tubro \purple{additionally} incorporates a number of extra features that are possibly of independent interest, such as a \emph{cluster} reduction \cite{bordewich2017fixed,li2017computing}. It is also able to significantly reduce the number of variables in the ILP by specifying that even in fully kernelized instances certain chains should never be cut, leveraging an insight used in the proofs of correctness in \cite{kelk2020new}. An executable version of Tubro is available upon request.

The paper is organized as follows. The next section contains some preliminaries that are used throughout the paper and details the seven reduction rules that are underlying the kernelization experiments presented \purple{here}. In Section~\ref{subsec:def_dmpbound}, we introduce the maximum parsimony distance between two unrooted binary phylogenetic trees and the associated sampling strategy to obtain a lower bound on the TBR distance. The following three sections describe our experiments. We first present the experimental setup in Section~\ref{sec:exp-framework} and, subsequently, summarize \steven{and analyse} the results in Section~\ref{sec:results}. A \steven{high-level} discussion follows in Section~\ref{sec:toughcookies}, \steven{where we
reflect on what our experiments tell us about which TBR instances are easy, and difficult, to solve, and why this might be.} In our conclusion (Section~\ref{sec:discussion}) we make a number of suggestions for further research.
Technical details of Tubro are deferred to the appendix, which establishes correctness of the ILP that Tubro is based on and gives details of the various options that can be switched on or off when running Tubro, such as chain preservation and cluster reduction.

\section{Preliminaries}
\subsection{Definitions} \label{sec:def}

Our notation closely follows \cite{kelk2020new}. Throughout this paper, \blue{$X$ denotes a finite set of \emph{taxa}.}
An {\it unrooted binary phylogenetic tree} $T$ on $X$  is a  simple, connected, and undirected tree whose leaves are bijectively labeled with $X$ and whose other vertices all have degree 3. See Figure \ref{fig:trees} \purple{for an example of two unrooted binary phylogenetic trees on $X=\{a,b,c,d,e\}$}. For simplicity and since most phylogenetic trees in this paper are unrooted and binary, we refer to an unrooted binary phylogenetic trees as a {\it phylogenetic tree}. If a definition or statement applies to all unrooted phylogenetic trees, regardless of whether they are binary or not, we make this explicit. Two leaves, say $a$ and $b$, of $T$ are called a {\it cherry} $\{a,b\}$ of $T$ if they are adjacent to a common vertex.
For $X' \purple{\subseteq} X$, we write $T[X']$ to denote the unique, minimal subtree of $T$ that connects all elements in $X'$. For brevity we call $T[X']$ the \emph{embedding} of  $X'$ in $T$.
Furthermore, we refer to the phylogenetic tree on $X'$ obtained from $T[X']$ by suppressing non-root degree-2 vertices as  the {\it restriction of $T$ to $X'$}  and we denote this by $T|X'$.\\
\\
\noindent{\bf Tree bisection and reconnection.} Let $T$ be a phylogenetic tree on $X$. Apply the following three-step operation to $T$:
\begin{enumerate}
\item Delete an edge in $T$ and suppress any resulting degree-2 vertex. Let $T_1$ and $T_2$ be the two resulting phylogenetic trees.
\item If $T_1$ (resp. $T_2$) has at least one edge, subdivide an edge in $T_1$ (resp. $T_2$) with a new vertex $v_1$ (resp. $v_2$) and otherwise set $v_1$ (resp. $v_2$) to be the single isolated vertex of $T_1$ (resp. $T_2$).
\item Add a new edge $\{v_1,v_2\}$ to obtain a new phylogenetic tree $T'$ on $X$.
\end{enumerate}
We say that $T'$ has been obtained from $T$ by a single  \purple{{\it tree bisection and reconnection (TBR) operation} (or, {\it TBR move})}. 
Furthermore, we define the TBR {\it distance}  between two phylogenetic trees $T$ and $T'$ on $X$,  denoted by $d_\TBR(T,T')$, to be the minimum number of TBR operations that \purple{are} required to transform $T$ into $T'$.
\steven{\purple{To illustrate,} the trees $T$ and $T'$ in Figure \ref{fig:tbr} have a TBR distance of 1.}
It is well known that $d_\TBR$ is a metric~\cite{AllenSteel2001}. By building on an earlier result by Hein et al.~\cite[Theorem 8]{hein1996complexity}, Allen and Steel~\cite{AllenSteel2001} showed that computing the TBR distance is an NP-hard problem. \\

\begin{figure}[t]
\center
\scalebox{1}{\input{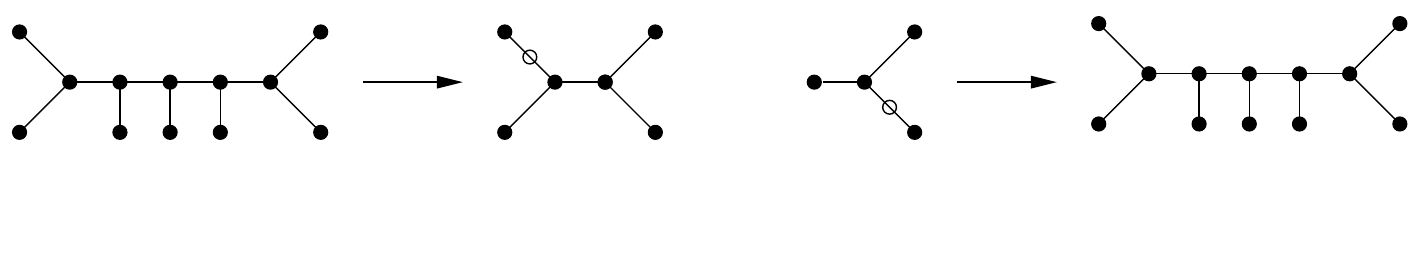_t}}
\caption{A single TBR operation that transforms $T$ into $T'$. First, $T_1$ and $T_2$ are obtained from $T$ by deleting the edge $\{u_1,u_2\}$ in $T$. Second, $T'$ is obtained from $T_1$ and $T_2$ by subdividing an edge in both trees as indicated by the open circles $v_1$ and $v_2$ and adding a new edge $\{v_1,v_2\}$.}
\label{fig:tbr}
\end{figure}

\noindent {\bf Agreement forests.}
Let $T$ and $T'$ be two phylogenetic trees  on $X$. Furthermore, let $F = \{B_0, B_1,B_2,\ldots,B_k\}$ be a partition of $X$, where each block $B_i$ with $i\in\{0,1,2,\ldots,k\}$ is  referred to as a \emph{component} of $F$. We say that $F$ is an \emph{agreement forest} for $T$ and $T'$ if the following conditions hold.
\begin{enumerate}
\item [(1)] For each $i\in\{0,1,2,\ldots,k\}$, we have $T|B_i = T'|B_i$.
\item [(2)] For each pair $i,j\in\{0,1,2,\ldots,k\}$ with $i \neq j$, we have that
$T[B_i]$ and $T[B_j]$ are vertex-disjoint in $T$, and $T'[B_i]$ and $T'[B_j]$ are vertex-disjoint in $T'$.
\end{enumerate}
\noindent
Let $F=\{B_0,B_1,B_2,\ldots,B_k\}$ be an agreement forest for $T$ and $T'$. The \emph{size} of $F$ is simply its number of components; i.e. $k+1$. Moreover, an agreement forest with the minimum number of components (over all agreement forests for $T$ and $T'$) is called a \emph{maximum agreement forest \purple{(MAF)}} for $T$ and $T'$. The number of components of a maximum agreement forest for $T$ and $T'$ is denoted by $d_\MAF(T,T')$. The following theorem is well known.

\begin{theorem}\cite[Theorem 2.13]{AllenSteel2001}
Let $T$ and $T'$ be two phylogenetic trees  on $X$. Then $$d_\TBR(T,T') = d_\MAF(T,T') - 1.$$
\end{theorem}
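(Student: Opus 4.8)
The plan is to prove the two inequalities $d_\TBR(T,T') \le d_\MAF(T,T') - 1$ and $d_\MAF(T,T') - 1 \le d_\TBR(T,T')$ separately, anchored on the trivial observations that the partition $\{X\}$ is an agreement forest for $T'$ and itself (so $d_\MAF(T',T') = 1$), and that $d_\MAF(T,T')=1$ holds exactly when $T=T'$, matching $d_\TBR(T,T')=0$.

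For $d_\MAF(T,T') - 1 \le d_\TBR(T,T')$, I would fix a shortest TBR sequence $T = T_0, T_1, \ldots, T_m = T'$ with $m = d_\TBR(T,T')$ and prove the key lemma that a single TBR move changes $d_\MAF(\cdot, T')$ by at most one; telescoping along the sequence then gives $d_\MAF(T,T') \le d_\MAF(T',T') + m = 1 + d_\TBR(T,T')$. To prove the lemma, let $F = \{B_0,\ldots,B_k\}$ be a maximum agreement forest for $T_i$ and $T'$, and let $(L_1, L_2)$ be the bipartition of $X$ induced by the edge of $T_i$ deleted in the move producing $T_{i+1}$. I would refine $F$ by intersecting every block with $L_1$ and with $L_2$; since the embeddings $T_i[B_j]$ are vertex-disjoint, the deleted edge lies in at most one of them, so the refinement splits at most one block and produces at most $k+2$ blocks. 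Verifying that this refined partition is an agreement forest for $T_{i+1}$ and $T'$ reduces to the two defining conditions: condition (1) follows from $(T|B)|B' = T|B'$ for $B' \subseteq B$, and disjointness in $T_{i+1}$ follows because each refined block lies entirely on one side of the deleted edge. Reversibility of TBR then supplies the reverse direction of the lemma. The delicate point here is disjointness \emph{in $T'$} of the two pieces of the split block $B_0$, since a priori $T'[B_0 \cap L_1]$ and $T'[B_0 \cap L_2]$ might overlap; I would resolve this using $T_i|B_0 = T'|B_0$, so that the bipartition $(B_0 \cap L_1, B_0 \cap L_2)$, being induced by a single edge of $T_i|B_0$, is induced by a single edge of $T'|B_0$ as well, forcing the two pieces onto opposite sides of that edge and hence making their embeddings in $T'$ vertex-disjoint.

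For the reverse inequality $d_\TBR(T,T') \le d_\MAF(T,T') - 1$, I would argue by induction on the size $k+1$ of a maximum agreement forest $F = \{B_0,\ldots,B_k\}$ for $T$ and $T'$. The base case $k=0$ forces $T = T'$. For the inductive step I would establish a reduction lemma: when $k \ge 1$ there is a single TBR move transforming $T$ into some $\hat T$ for which $(F \setminus \{B_i, B_j\}) \cup \{B_i \cup B_j\}$ is an agreement forest for $\hat T$ and $T'$ of size $k$, for a suitable pair of blocks $B_i, B_j$. Together with the fact that $d_\TBR$ is a metric and the inductive hypothesis applied to $(\hat T, T')$, this yields $d_\TBR(T,T') \le 1 + (k-1) = k$. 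Concretely I would pick a block $B_i$ that is pendant in $T$, that is, a leaf of the tree obtained by contracting each embedding, detach it by deleting the single edge of $T$ whose removal separates the taxa $B_i$ from the rest, and reattach a subdivided copy so that $B_i$ and a designated neighbour $B_j$ together induce exactly $T'|(B_i \cup B_j)$; because the embedding of $B_j$ is untouched by the move, the attachment edge inside it is well defined.

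I expect this reduction lemma to be the main obstacle. The tension is that detachability by a single edge deletion requires $B_i$ to be pendant in $T$, whereas the position into which it must be reinserted, and the identity of the neighbour $B_j$ with which it merges, are dictated by $T'$. Reconciling these demands a careful choice of the pair $(B_i, B_j)$ together with correct bookkeeping of the degree-two suppressions and edge subdivisions, so that the merged partition genuinely satisfies both agreement-forest conditions while the component count drops by exactly one.
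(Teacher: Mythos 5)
A point of reference first: the paper does not prove this theorem at all — it is stated with a citation to Allen and Steel — so your proposal can only be measured against the classical argument, whose two-inequality structure you have indeed adopted. Your first half ($d_\MAF(T,T')-1 \le d_\TBR(T,T')$) is essentially complete and correct: refining a maximum agreement forest along the bipartition induced by the deleted edge, noting that vertex-disjointness of the embeddings forces the deleted edge to lie in at most one embedding (so at most one block splits, giving at most one extra component per move), and your resolution of the delicate point — transporting the splitting edge through $T_i|B_0 = T'|B_0$ to force the two pieces of $B_0$ onto opposite sides of the corresponding edge of $T'|B_0$ — are all sound, and telescoping together with reversibility of TBR finishes that inequality.

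The genuine gap is in the second half, and it is exactly the step you flag yourself: the reduction lemma is asserted, but the ``careful choice of the pair $(B_i,B_j)$'' is never made, and without it the lemma is not proved. Two things are missing. First, existence of a block $B_i$ pendant in $T$: contract each embedding $T[B_l]$ to a single vertex; since the embeddings are vertex-disjoint connected subgraphs of a tree and cover all leaves, the contracted graph is a tree whose leaves are all contracted blocks, so at least two blocks are pendant in $T$ (and this also shows the pendant block is separated from the rest of $T$ by exactly one edge, as your detachment step requires). Second, and more importantly, the constraint that actually dictates $B_j$ is one you never state: after merging, $T'[B_i \cup B_j]$ contains the path in $T'$ joining $T'[B_i]$ to $T'[B_j]$, and condition (2) in $T'$ fails unless this path avoids every other embedding — this is the analogue of the disjointness-in-$T'$ issue you were careful about in the first half, but it is absent here. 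So $B_j$ must be chosen as the first block whose embedding is met when walking in $T'$ away from $T'[B_i]$; with that choice, disjointness in $T'$ of the merged block from all others holds, and the attachment points $v_1,v_2$ are then transported through the identifications $T|B_i = T'|B_i$ and $T|B_j = T'|B_j$ exactly as you describe. Once this is in place, the tension you worry about dissolves: pendancy in $T$ constrains only $B_i$, while the neighbour condition constrains only $B_j$ given $B_i$, so the two requirements never conflict, and any pendant $B_i$ paired with its $T'$-neighbour $B_j$ works. As written, however, the proposal stops short of supplying the one idea (the $T'$-neighbour condition, forced by disjointness in $T'$) on which the whole second inequality rests.
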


\steven{A maximum agreement forest for the trees $T$ and $T'$ shown in Figure \ref{fig:tbr}, which have TBR distance 1, therefore contains two components. $F=\{\{a,b,c,d\},\{e,f,g\}\}$ is an example of such a forest (in fact, here it is the only maximum agreement forest).}

We conclude this section with a number of algorithmic definitions. A \emph{parameterized problem} is a problem for which the inputs are of the form $(x,k)$, where $k$ is a non-negative integer, called the \emph{parameter}. A parameterized problem is \emph{fixed-parameter tractable} (FPT) if there exists an algorithm that solves\footnote{Note that the formalism described here actually concerns \emph{decision} (i.e. yes/no) problems, which in the context of the current article is most naturally ``Is $d_{\TBR} \leq k$?''. An FPT algorithm for answering this question can easily be transformed into an algorithm for computing $d_{\TBR}$ with similar asymptotic time complexity by increasing $k$ incrementally from 0 until a yes-answer is obtained.} any instance $(x,k)$ in $h(k)\cdot |x|^{O(1)}$ time, where \revisionSK{$h(\cdot)$} is a computable function depending only on $k$. A parameterized problem has a \emph{kernel} of size \revisionSK{$g(k)$}, where $g(\cdot)$ is a computable function depending only on $k$, if there exists a polynomial time algorithm transforming any instance $(x,k)$ into an equivalent \blue{instance} $(x',k')$, with $|x'|,k' \leq g(k)$. \blue{Here, {\it equivalent} means that $(x,k)$ is a yes-instance if and only if $(x',k')$ is a yes-instance.} If $g(k)$ is a polynomial in $k$ then we call this a \emph{polynomial kernel}; if $g(k) = O(k)$ then it is a \emph{linear kernel}. \blue{A well-known theorem from parameterized complexity states} that a parameterized problem is fixed-parameter tractable if and only if it has a (not necessarily polynomial) kernel. For more background information on fixed parameter tractability and kernelization, we refer the reader to standard texts such as~\cite{Cygan:2015:PA:2815661,downey2013fundamentals,kernelization2019}.

In this article, we take $d_{\TBR}$ as the parameter $k$ and take $|X|$, the number of leaves, as the size of the instance $|x|$. The reduction rules described in the following section produce a linear kernel and run in $\text{poly}(|X|)$ time.


\subsection{Description of the subtree, chain and other reduction rules}
\label{subsc:def_reducrules}

In this section we describe the existing reduction rules - seven in total - that will be analyzed in this article.

Let $T$ and $T'$ be two phylogenetic trees on $X$. We say that a subtree of $T$ is {\it pendant} if it can be detached from $T$ by deleting a single edge. \blue{For reasons of simplicity, we assume for the remainder of this paragraph that $|X|\geq 4$. Note that no generality is lost because $d_\TBR(T,T')=0$ if $|X|<4$.} For $n\geq 2$, let $C = (\ell_1,\ell_2\ldots,\ell_n)$ be a sequence of distinct taxa in $X$. For each $i\in\{1,2\ldots,n\}$, let $p_i$ denote the unique parent of $\ell_i$ in $T$. We call $C$ an $n$-chain of $T$ if there exists a walk $p_1,p_2,\ldots,p_n$ in $T$ and the elements \blue{$p_2,p_3,\ldots,p_{n-1}$} are all pairwise distinct. 
Note that $\ell_1$ and $\ell_2$ may have a common parent or $\ell_{n-1}$ and $\ell_n$ may have a common parent. Furthermore, if  $p_1 = p_2$ or $p_{n-1} = p_n$ holds, then $C$ is pendant in $T$. To ease reading, we sometimes write $C$ to denote the set $\{\ell_1,\ell_2,\ldots,\ell_n\}$. It will always be clear from the context whether $C$ refers to the associated sequence or set of taxa. If a pendant subtree $S$ (resp. an $n$-chain $C$) exists in $T$ and $T'$, we say that $S$ (resp. $C$) is a {\it common} subtree (resp. chain) of $T$ and $T'$.

We are now in a position to state all seven reduction rules. Let $T$ and $T'$ be two phylogenetic trees on \purple{$X$}. \blue{We already note here that  an application of Reduction 1, 2, 6, or 7 results in two new trees whose TBR distance is the same as that of $T$ and $T'$, and an application of Reduction 3, 4, or 5, results in two new trees whose TBR distance is one less than that of $T$ and $T'$.}\\

\noindent {\bf Reduction 1.}~\cite{AllenSteel2001} If $T$ and $T'$ have a maximal common pendant subtree $S$ with at least two leaves, then reduce $T$ and $T'$ to $T_r$ and $T'_r$, respectively, by replacing $S$ with a single leaf with a new label.

\noindent {\bf Reduction 2.}~\cite{AllenSteel2001} If $T$ and $T'$ have a maximal common $n$-chain $C=(\ell_1,\ell_2,\ldots,\ell_n)$ with $n\geq 4$, then reduce $T$ and $T'$ to $T_r=T|X\setminus \{\ell_4,\ell_5,\ldots,\ell_n\}$ and $T_r'=T'|X\setminus \{\ell_4,\ell_5,\ldots,\ell_n\}$, respectively.

\noindent {\bf Reduction 3.}~\cite{kelk2020new} If $T$ and $T'$ have a common 3-chain $C=(\ell_1,\ell_2,\ell_3)$ such that $\{\ell_1,\ell_2\}$ is a cherry in $T$ and $\{\ell_2,\ell_3\}$ is a cherry in $T'$, then reduce $T$ and $T'$ to $T_r=T|X\setminus C$ and $T_r'=T'|X\setminus C$, respectively.

\noindent {\bf Reduction 4.}~\cite{kelk2020new} If $T$ and $T'$ have a common 3-chain $C=(\ell_1,\ell_2,\ell_3)$ such that $\{\ell_2,\ell_3\}$ is a cherry in $T$ and $\{\ell_3,x\}$ is a cherry in $T'$ with $x\in X\setminus C$, then reduce $T$ and $T'$ to $T_r=T|X\setminus \{x\}$ and $T_r'=T'|X\setminus \{x\}$, respectively.

\noindent {\bf Reduction 5.}~\cite{kelk2020new} If $T$ and $T'$ have two common 2-chains $C_1=(\ell_1,\ell_2)$ and $C_2=(\ell_3,\ell_4)$ such that $T$ has cherries $\{\ell_2,x\}$ and $\{\ell_3,\ell_4\}$, and $T'$ has cherries $\{\ell_1,\ell_2\}$ and $\{\ell_4,x\}$ with $x\in X\setminus (C_1\cup C_2)$, then reduce $T$ and $T'$ to $T_r=T|X\setminus \{x\}$ and $T_r'=T'|X\setminus \{x\}$, respectively.

\noindent {\bf Reduction 6.}~\cite{kelk2020new} If $T$ and $T'$ have two common 3-chains $C_1=(\ell_1,\ell_2,\ell_3)$ and $C_2=(\ell_4,\ell_5,\ell_6)$ such that $T$ has cherries $\{\ell_2,\ell_3\}$ and $\{\ell_4,\ell_5\}$, and $(\ell_1,\ell_2,\ldots,\ell_6)$ is a 6-chain of $T'$, then reduce $T$ and $T'$ to $T_r=T|X\setminus \{\ell_4,\ell_5\}$ and $T_r'=T'|X\setminus \{\ell_4,\ell_5\}$, respectively.

\noindent {\bf Reduction 7.}~\cite{kelk2020new} If $T$ and $T'$ have common chains $C_1=(\ell_1,\ell_2,\ell_3)$ and $C_2=(\ell_4,\ell_5)$ such that $T$ has cherries $\{\ell_2,\ell_3\}$ and $\{\ell_4,\ell_5\}$, and $(\ell_1,\ell_2,\ldots,\ell_5)$ is a 5-chain of $T'$, then reduce $T$ and $T'$ to $T_r=T|X\setminus \{\ell_4\}$ and $T_r'=T'|X\setminus \{\ell_4\}$, respectively.\\

\noindent An example of Reduction 7 is illustrated in Figure~\ref{fig:reduction7}. Reduction 1 is known as {\it subtree reduction} while Reduction 2 is known as {\it chain reduction} in the literature \blue{(for example, see~\cite{AllenSteel2001})}. \revisionSK{\mbox{Reductions~3--7} assume that Reductions 1 and 2 have already been applied to exhaustion.}

The following two lemmas summarize results established in~\cite{AllenSteel2001,tightkernel,kelk2020new}.

\begin{figure}[t]
\center
\scalebox{1}{\input{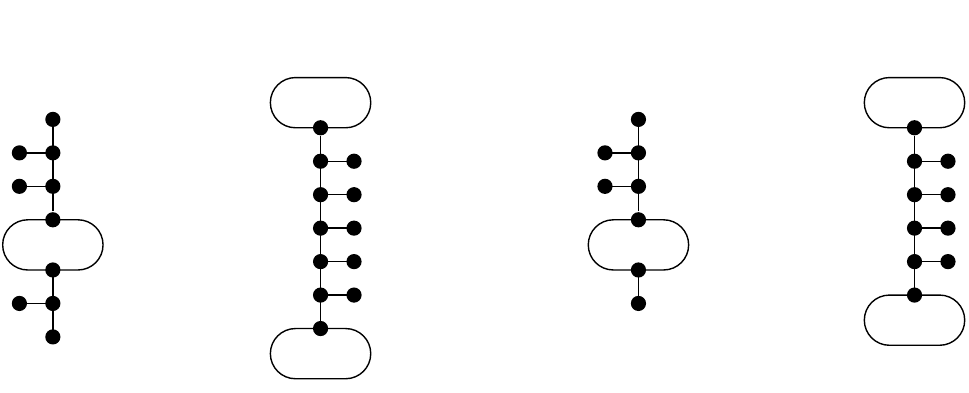_t}}
\caption{An example of Reduction 7.  \blue{Ovals} indicate subtrees.}
\label{fig:reduction7}
\end{figure}

\begin{lemma}
Let $T$ and $T'$ be two phylogenetic trees on $X$.
If $T_r$ and $T_r'$ are two phylogenetic trees obtained from $T$ and $T'$, respectively, by  a single application of Reduction 1,2, 6, or 7, then $d_\TBR(T,T') = d_\TBR(T_r,T_r')$. Moreover, if $T_r$ and $T_r'$ are two trees obtained from $T$ and $T'$, respectively, by a single application of Reduction 3, 4, or 5, then $d_\TBR(T,T')-1 = d_\TBR(T_r,T_r')$.
\end{lemma}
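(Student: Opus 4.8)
The plan is to prove each of the seven equality statements separately, since the lemma really bundles together seven distinct claims about seven reduction rules. Reductions 1 and 2 are classical, so I would cite \cite{AllenSteel2001} for these and focus the real work on Reductions 3--7 from \cite{kelk2020new}. The natural framework throughout is the agreement-forest characterization: by the theorem of Allen and Steel, $d_\TBR(T,T') = d_\MAF(T,T') - 1$, so it suffices to track how the size of a maximum agreement forest changes under each reduction. For the TBR-preserving rules (1, 2, 6, 7) the goal is to show $d_\MAF(T,T') = d_\MAF(T_r,T_r')$, and for the parameter-reducing rules (3, 4, 5) the goal is $d_\MAF(T,T') = d_\MAF(T_r,T_r') + 1$.

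For each reduction I would argue by establishing two inequalities. The easier direction is typically ``$\leq$'': given a maximum agreement forest $F_r$ for the reduced trees $T_r, T_r'$, I would describe an explicit surgical procedure that reinserts the deleted taxa (or the collapsed chain/subtree) to build an agreement forest $F$ for $T, T'$, and count how many new components this forces. For example, for Reduction 7, where a single taxon $\ell_4$ is deleted, I would take $F_r$ and show that $\ell_4$ can be placed back --- either appended to an existing block along the relevant chain, or as a new singleton block in the worst case --- while verifying that conditions (1) and (2) in the definition of an agreement forest are maintained; the specific cherry/chain structure assumed in the hypothesis is exactly what guarantees the reinsertion does not create overlaps. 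The harder direction, ``$\geq$'', requires taking an arbitrary maximum agreement forest $F$ for $T, T'$ and showing it can be transformed, without increasing the number of components, into one that respects the deleted structure, so that deleting the affected taxa yields a valid forest for $T_r, T_r'$ of no larger size.

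I expect the main obstacle to lie in this harder ``$\geq$'' direction, and specifically in the exchange arguments needed to normalize an arbitrary maximum agreement forest into a canonical form compatible with the reduction. The subtlety is that an optimal $F$ need not cut the chains or cherries in the ``expected'' way: one must show that any cut respecting the hypothesis structure can be rerouted to an equivalent cut that isolates the deleted taxa cleanly, using the common-chain and common-cherry assumptions to guarantee that such rerouting never increases component count and never violates vertex-disjointness in either $T$ or $T'$. This is where the careful case analysis specific to each of Reductions 3, 4, 5, 6, 7 enters; the precise combinatorial position of the taxon $x$ (for Reductions 4 and 5) or the chain endpoints relative to the rest of the tree is what makes the $-1$ appear for Reductions 3--5 but not for 6 and 7.

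Rather than reproving everything from scratch, I would lean heavily on the structural lemmas and exchange arguments already developed in \cite{kelk2020new}, where these five reductions originate, and present the proof as a consolidated restatement: cite \cite{AllenSteel2001} for Reductions 1 and 2, and \cite{kelk2020new} for Reductions 3--7, then give the reinsertion/deletion correspondence in enough detail to make the $+1$ versus $+0$ bookkeeping transparent. The key conceptual point I would emphasize is \emph{why} Reductions 3, 4, and 5 strictly decrease the distance: in each of these cases the hypothesis forces a configuration in which an \emph{entire component} of any maximum agreement forest is devoted to the reduced structure, so removing it drops $d_\MAF$ by exactly one, whereas in Reductions 6 and 7 the reduced taxa can always be absorbed into a neighbouring component of an optimal forest without changing its size.
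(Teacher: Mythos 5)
Your proposal takes essentially the same approach as the paper: the paper gives no proof of this lemma at all, presenting it explicitly as a summary of results established in \cite{AllenSteel2001,tightkernel,kelk2020new}, which is precisely the citation strategy (Reductions 1--2 from \cite{AllenSteel2001}, Reductions 3--7 from \cite{kelk2020new}) that you adopt, and your sketch of the two-inequality, reinsertion-versus-exchange machinery is consistent with how those cited proofs actually proceed. One caveat worth fixing: your final paragraph claims that for Reductions 3--5 an entire component of \emph{any} maximum agreement forest is devoted to the reduced structure, but this overstates what holds and what \cite{kelk2020new} proves --- the correct (and sufficient) statement, as the paper itself phrases it, is that the located taxon comprises a singleton component in \emph{some} maximum agreement forest, which is exactly what your proposed exchange argument in the ``$\geq$'' direction delivers.
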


\begin{lemma}
Let $S$ and $S'$ be two phylogenetic trees on $X$  that cannot be reduced by Reduction 1 or 2, and let $T$ and $T'$ be \blue{two} phylogenetic trees on $Y$ that cannot be reduced by any of the seven reductions. If $d_\TBR(S,S')\geq 2$, then $|X|\leq 15d_\TBR(S,S')-9$. Furthermore, if $d_\TBR(T,T')\geq 2$, then $|Y|\leq 11d_\TBR(T,T')-9$.
\end{lemma}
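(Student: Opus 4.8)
The plan is to fix a maximum agreement forest and to bound $|X|$ (respectively $|Y|$) by charging each taxon to a structural feature of this forest whose total multiplicity is controlled by $d_\TBR$ and by the exhausted reduction rules. For the first bound, let $F=\{B_0,\dots,B_k\}$ be a maximum agreement forest for $S$ and $S'$, so that $k=d_\MAF(S,S')-1=d_\TBR(S,S')$ by the theorem of Allen and Steel quoted above. The components $S|B_i$ are obtained by deleting a set of ``interface'' edges of $S$ and suppressing resulting degree-$2$ vertices; since $F$ has only $k+1$ components, the deleted edges together with the internal vertices of $S$ lying between components form a structure of size $O(k)$, and the same holds in $S'$. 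I would call the endpoints of these interface edges, together with the cherries and branch points of the individual $S|B_i$, the \emph{special} vertices; there are $O(k)$ of them in total.

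The key point is that every taxon that is not itself a special leaf lies on a path between two special vertices, i.e.\ inside a \emph{chain}, and the exhausted rules bound these contributions. First I would argue that a maximal common pendant subtree with at least two leaves would contradict exhaustion of Reduction~1, so the pieces hanging at the tips of the skeleton are single taxa. Second, a common chain of length at least $4$ would contradict exhaustion of Reduction~2, so each maximal chain between two skeleton vertices contributes at most $3$ taxa. A genuine subtlety here is that a chain or pendant subtree must be \emph{common} to both $S$ and $S'$ in order to be reducible, so I must verify that the chains arising from the skeleton really are common chains of $S$ and $S'$; this is where the interaction between the $S$-skeleton and the $S'$-skeleton matters, since a chain common to both trees need not sit on ``the same'' skeleton edge in each. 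Summing the per-chain bound of $3$ over the $O(k)$ skeleton elements and the $O(k)$ special leaves, then carefully accounting for the two boundary components and the shared attachment vertices, is what yields the linear bound $|X|\le 15k-9$; the additive constant $-9$ and the hypothesis $d_\TBR\ge 2$ are forced precisely by pinning down these boundary and base cases.

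For the second bound the skeleton argument is identical, but the per-chain budget of $3$ is now too pessimistic. After Reductions~1 and~2 are exhausted the only surviving chains have length $1,2$ or $3$, and it is exactly certain mutual configurations of such short chains that drive the constant up to $15$. The idea is to show that each configuration of nearby short chains that would push the count above $11d_\TBR(T,T')-9$ matches the hypothesis of exactly one of Reductions~3--7, so that on a pair $T,T'$ reduced under all seven rules these configurations are absent. I would therefore enumerate the finitely many ways two short chains (with their cherries) can be embedded relative to the attachment vertices in \emph{both} trees simultaneously, check in each case that the configuration is either removed by one of Reductions~3--7 or else contributes within the tighter budget, and re-run the summation to obtain $|Y|\le 11d_\TBR(T,T')-9$.

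The main obstacle is the exhaustive case analysis behind the chain-contribution step and the improvement to $11k-9$: one must classify how short chains and cherries can lie relative to the interface vertices in the two trees at once, match each ``bad'' case to the correct one of Reductions~3--7, and track the constants precisely enough to land on the tight values $15k-9$ and $11k-9$ rather than merely an $O(k)$ bound. The bookkeeping around the components $B_0$ and the mismatch between the $S$- and $S'$-skeletons is the most delicate part, and it is also where the additive $-9$ and the exclusion of the degenerate cases $d_\TBR\le 1$ are forced.
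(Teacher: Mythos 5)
The paper does not actually prove this lemma: it is stated explicitly as a summary of results established in \cite{AllenSteel2001,tightkernel,kelk2020new}, so there is no in-paper argument to compare against. Judged on its own terms, your outline reproduces the classical maximum-agreement-forest counting framework that those papers use: fix a MAF with $d_\TBR+1$ components, mark the $O(k)$ interface/attachment vertices, and bound the taxa in between via the exhausted reductions (a common pendant subtree contributes one leaf by Reduction~1, a maximal common chain at most three leaves by Reduction~2). That framework is sound, but it is essentially Allen and Steel's original argument, and carried out as you describe it yields a bound of roughly $28k$ --- which is precisely the 2001 result the paper mentions --- not $15k-9$. The entire difficulty of \cite{tightkernel} is that the naive ``sum $3$ per chain over $O(k)$ skeleton pieces'' count is too lossy: the chains cut out by the skeleton of one tree need not be common chains of both trees, the attachment points of the two trees interleave inside a single component, and one needs a genuinely finer charging argument over both trees simultaneously to drive the constant down to $15$. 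The additive $-9$ and the hypothesis $d_\TBR\geq 2$ also emerge from that refined analysis, not from boundary bookkeeping. You flag exactly this subtlety (``a chain common to both trees need not sit on `the same' skeleton edge in each'') but you do not resolve it, and resolving it is the quantitative heart of the result; note that the $28k$ and $15k-9$ bounds rest on the \emph{same} two reduction rules, so the improvement lives entirely in the counting you leave unexecuted.

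The same holds, more strongly, for the $11k-9$ bound: your plan to enumerate the mutual configurations of short chains and match each ``bad'' configuration to one of Reductions~3--7 is the right strategy, but the proposal contains no actual enumeration, no verification that the surviving configurations fit the tighter budget, and no re-summation --- it is a statement of the problem that \cite{kelk2020new} solves, not a proof. So while your high-level approach points in the same direction as the cited literature, there is a genuine gap: the charging schemes that produce the exact constants $15$ and $11$ (both of which are tight, so no slack argument can substitute for them) are asserted rather than established.
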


Note that \blue{an application of Reduction} 3, 4, \blue{or} 5 triggers a \emph{parameter reduction}, whereby the TBR distance is reduced. In these cases, an element of $X$ is located which definitely comprises a singleton component in some maximum agreement forest, and whose deletion thus lowers the TBR distance by 1. Reductions 1, 2, 6 and 7, on the other hand, preserve TBR distance. Reductions~6 and 7 work by truncating short chains, i.e. chains which escape Reduction 2, to be even shorter. \revisionSK{It was noted in \cite{kelk2020new} that an application of Reduction 4 or 5 immediately triggers an application of Reduction 1.}

\section{Maximum Parsimony distance and a new approach to computing lower bounds on TBR distance}
\label{subsec:def_dmpbound}

Throughout this section, an unrooted phylogenetic tree $T$ is not necessarily binary, i.e. each internal vertex of $T$ has degree at least 3. A {\it character} $f$ on $X$ is a function $f:X \rightarrow C$, where $C=\{c_1,c_2,\ldots,c_r\}$ is a set of {\it character states} for some positive integer $r$. Let $T$ be an unrooted phylogenetic tree on $X$ with vertex set $V$, and let $f$ be a character on $X$ whose set of character states is $C$. An {\it extension} $g$ of $f$ to $V$ is  a function $g: V \rightarrow C$ such that $g(\ell)=f(\ell)$ for each $\ell\in X$. Given an extension $g$ of $f$, let $l_g(T)$ denote the number of edges $\{u,v\}$ in $T$ such that $g(u)\ne g(v)$. Then the {\it parsimony score} of $f$ on $T$, denoted by $l_f(T)$, is obtained by minimizing  $l_g(T)$ over all possible extensions $g$ of $f$; this score can easily be computed in polynomial time~\cite{fitch1971}.  \blue{Following standard terminology in phylogenetics~\cite{SempleSteel2003}, we say that a character $f$ is} \emph{convex} on $T$ if $l_f(T) = r-1$. \revisionSK{Equivalently, a convex character is a character where the $r$ minimal spanning trees induced by the $r$ states of the character are vertex disjoint}.

For two unrooted phylogenetic trees $T$ and $T'$ on $X$, the {\it maximum parsimony distance} $d_{\MP}$~\cite{fischer2014,moulton2015} is defined as  $$d_{\MP}(T,T')=\max_f |l_f(T)-l_f(T')|.$$  It is known that at least one such maximizing $f$ has the following two properties: (i) it is \blue{convex}
on at least one of $T$ and $T'$, (ii) each state in $f$ occurs on at least 2 taxa \cite{kelk2017complexity}.

It has been noted several times in the literature that $d_{\MP}$, which is itself NP-hard to compute, is a lower bound on $d_{\TBR}$ \cite{fischer2014,moulton2015}. Experiments in \cite{kelk2016reduction} on very small trees (up to 25 taxa), in which $d_{\MP}$ was computed exactly using the exponential-time algorithm from \cite{kelk2017note}, suggest that $d_{\MP}$ is often very close to $d_{\TBR}$. Inspired by this, we here propose a new scaleable strategy for computing good lower bounds on $d_{\TBR}$. Rather than computing $d_{\MP}$ exactly, which is too time-intensive, we leverage the fact that \emph{every} (convex) character $f$ gives a lower bound on $d_{\MP}$ and thus on $d_{\TBR}$. By sampling many such characters $f$, and selecting the one that maximizes  $|l_f(T)-l_f(T')|$ we expect to achieve a strong lower bound on $d_{\MP}$. This sampling strategy is made possible by \cite[Corollary 5]{kelk2017note} which states that for a given unrooted phylogenetic tree $T$, a character that is convex on $T$ whereby each state occurs on at least $p \geq 2$ taxa ($p$ constant) can be sampled uniformly at random in linear time and space; let us call such characters \emph{eligible}. The overall strategy, therefore, is to randomly choose between $T$ and $T'$, and then to uniformly at random select an eligible character $f$ (taking $p=2$), repeating this as often as desired, and at the end returning the largest value of $|l_f(T)-l_f(T')|$ observed. We henceforth call this the $d_{\MP}$ \emph{lower bound sampling strategy} or simply $d_{\MP}$ \emph{lower bound} if no confusion arises from the context and denote this bound by $d^\ell_\MP$.\\

\section{Experimental framework}\label{sec:exp-framework}

\subsection{uSPR}
\label{subsec:def_uspr}
To compute TBR distances in our experiments we mainly used the solver uSPR that is described in \cite{whidden2018calculating} and
available from \url{https://github.com/cwhidden/uspr}, version 1.0.1. This solver is designed to compute the so-called unrooted subtree prune and regraft distance between two phylogenetic trees $T$ and $T'$ which is always at least as large as $d_\TBR(T,T')$.  As one of its subroutines, uSPR  incorporates an exact TBR solver, which can be invoked independently  (using the \texttt{--tbr} switch). The TBR solver uses iterative deepening. \purple{More precisely, for two phylogenetic trees $T$ and $T'$ on $X$,} it starts from a polynomial-time computed lower bound $k'$ and repeatedly asks ``Is $d_{\TBR}(T,T') \leq k'$?'' for increasing values of $k'$ until the question is answered positively. According to \cite{whidden2018calculating} the TBR solver of uSPR incorporates a number of the enhanced branching cases described in Chen's algorithm~\cite{chen2015parameterized}, which computes TBR distance in time $O( 3^{d_{\TBR}} \cdot \text{poly}(|X|) )$.

\subsection{Tubro}
\label{subsec:def_tubro}
The polynomial-time reduction rules described in Section \ref{subsc:def_reducrules} have been implemented in our package \emph{Tubro}. Tubro's main role in our experiments is to apply these reduction rules. However, it also has a secondary role. Specifically, Tubro can compute TBR distance using Integer Linear Programming (ILP). As explained in the next section, we use Tubro in an attempt to compute TBR distances for those pairs of phylogenetic trees that are out of range of uSPR. Given that the ILP formulation has $O(|X|^4)$ constraints, however, Tubro is limited to pairs of phylogenetic trees on $X$ which, after reduction, have (roughly) $|X| \leq 70$. Tubro has many other features; we defer a full
description of the package to the appendix.

\subsection{High-level description of the experiments}

All experiments were conducted on
the \revisionSK{Windows Subsystem for Linux (WSL) [Ubuntu 16.04.6 LTS]}, running under Windows 10, on a 64-bit HP Envy Laptop 13-ad0xx (quad-core i7-7500 @ 2.7 GHz), with 8 \steven{Gb} of memory. \revisionSK{In experiments such as this, which mainly run in memory and whereby disk accesses are limited, WSL has comparable performance to a native Linux system\footnote{\revisionSK{See \url{https://en.wikipedia.org/wiki/Windows_Subsystem_for_Linux}, section `Benchmarks'. Accessed 20th August 2021.}}.}

Our main dataset comprises 735 {\it tree pairs}, where each such pair $(T,T')$ consists of two phylogenetic trees on the same set of taxa and that were constructed as follows. For each $t \in  \{50, 100, 150, 200, 250, 300, 350\}$
we generated a random phylogenetic tree $T$ on $t$ taxa with \emph{skew} $s \in \{50, 70, 90\}$, where the concept of skew is explained below. Then, for each $k \in \{5, 10, 15, 20, 25, 30, 35\}$ we applied $k$ random TBR moves to $T$ to obtain
a phylogenetic tree $T'$. This ensures that $d_\TBR(T,T') \leq k$. Note that equality might not hold, due to different random TBR moves potentially cancelling each other out.  We produced 5 \emph{replicates} of each tree pair: that is, for each parameter combination
$(t,s,k)$ we independently produced 5 different pairs of trees. This yields $7 \times 3 \times 7 \times 5 = 735$ pairs of trees \purple{in total}. The skew $s$ refers to the random generation of \blue{a} binary phylogenetic tree via the following recursive process. \revisionSK{(The process actually generates a rooted binary phylogenetic tree, but upon completion it is turned into an unrooted tree by suppressing the degree-2 root.) \blue{Starting with a rooted binary phylogenetic tree on two leaves, we} place a taxon on one side of the root
with probability $\frac{s}{100}$, and on the other side with probability $(1 - \frac{s}{100})$, \steven{and then recurse on the two sides of the \revisionSK{root} until the leaves are reached}.} Phylogenetic trees with a skew of 50 will be fairly balanced, corresponding to the standard Yule-Harding distribution~\cite{harding1971probabilities}
while a skew of 90 will tend to produce a heavily skewed, more ``linear'' phylogenetic tree with smaller subtrees hanging off a main backbone. We include the skew parameter to incorporate more variety into the topology of the starting tree $T$.

For each tree pair $(T,T')$, we computed and collected the following core data:
\begin{enumerate}[label=\bf{D\arabic*},start=1]
\item The number of taxa in the instance after application of the subtree reduction, henceforth denoted $s(T,T')$.
\item The number of taxa in the instance after application of the subtree and chain reduction, henceforth denoted $sc(T,T')$.
\item The number of taxa in the instance after application of all seven reductions, henceforth denoted $scn(T,T'$). For brevity we often call such tree pairs \emph{fully reduced}.
\item The \purple{number} of parameter reductions that took place.
\item The lower bound $d_\MP^\ell(T,T')$ on $d_\TBR(T,T')$ obtained by sampling convex characters for 10 seconds  \emph{after} the subtree reduction, since $d_{\MP}$ is also preserved by this reduction rule \cite{kelk2016reduction}.
We computed this because at the outset of the experiments it was not clear whether we would be able to compute $d_\TBR(T,T')$ exactly. However, the exact TBR distance or a \emph{lower} bound on this distance is needed to compute empirical kernel sizes (details follow below).
\item For each tree pair $(T,T')$, the exact value of $d_\TBR(T,T')$ if known. The distance was declared \emph{known} if at least one of the following solution approaches terminated.
\begin{itemize}
\item[(i)] Run uSPR  for 5 minutes on the original tree pair $(T, T')$.
\item[(ii)] Run uSPR  for 5 minutes on the fully reduced tree pair obtained from $(T,T')$  (and take into account the effect of any parameter reduction achieved during the kernelization).
\item[(iii)] Run Tubro for 5 minutes on the fully reduced tree pair obtained from $(T,T')$ (and take into account the effect of any parameter reduction achieved during the kernelization). We only ran Tubro on those instances for which (i) and (ii) both failed to terminate, since it is not the goal of this article to directly compare
the solving power of uSPR and Tubro.
\end{itemize}
\end{enumerate}
Note that, if uSPR (or Tubro) produces an intermediate lower bound of $k$ for a tree pair $(T,T')$, where $k$ is the number of TBR moves that were applied to create the pair, it follows that $d_{\TBR}(T,T')=k$. However, if the solver does not also terminate in the allotted time, we do not consider such instances known. This is because the solvers do not have this upper bound information available outside the experimental framework that we describe here.

From D1-D6, we have computed various secondary statistics which are presented in more detail in Section~\ref{sec:results}. Most notably:
\begin{enumerate}
\item {\bf \steven{The availability of exact TBR distances / difficult trees.}} The distribution of tree pairs whose TBR distance could not be computed exactly with uSPR or Tubro.
\item {\bf Average \steven{percentage} of remaining taxa.} For each pair $t$ and $k$, the average percentage of remaining taxa after the subtree reduction, after the subtree and chain reduction, and after all seven reductions over all 15 tree pairs with this parameter combination.
\item {\bf Empirical kernel size.} For each tree pair $(T,T')$, and for each of the different levels of reduction  $s(T,T')$, $sc(T,T')$, and $scn(T,T')$, the number of taxa in the reduced instance divided by $d_\TBR(T,T')$ if the TBR distance is known and, otherwise, divided by $d_\MP^\ell(T,T')$.
\item {\bf Parameter reductions.} The distribution of tree pairs that have undergone at least one of the three reductions that trigger parameter reductions.
\item {\bf The quality of the $d_{\MP}$ lower bound.} The distribution of tree pairs $(T,T')$ for which $d_\MP^\ell(T,T')< d_\TBR(T,T')$.
\end{enumerate}

For a second dataset that comprises 90 tree pairs of larger size, we generated 5 replicates for each parameter combination $(t,s,k)$ with  $t\in\{500,1000,1500,2000,2500,3000\}$, $s\in\{50,70, 90\}$, and $k=35$. To this end, we have followed the same approach as described for the main dataset  and collected the same data and statistics for each tree pair. We will refer to this dataset as the {\it larger trees dataset}.

While we will focus on analyzing the main dataset in the following section, we use the larger trees dataset to confirm that our results do not only apply to pairs of phylogenetic trees with at most 350 taxa but instead describe general trends and observations that are not restricted to trees of a certain size.

We have made both our datasets, plus the spreadsheets describing our results, available on the page
\steven{\url{https://github.com/skelk2001/kernelizing-agreement-forests/}}. The GitHub page also includes a stand-alone implementation of the $d_{\MP}$ lower
bound code, since we feel this is of independent interest. Source code for Tubro is available upon request.

\section{Results and analysis}\label{sec:results}

\steven{In each subsection below we combine the presentation of our results with some analysis and reflection.}


\subsection{The availability of exact TBR distances / \steven{difficult trees}} \label{sec:exactTBR}
%

We start by providing the number of tree pairs of both  datasets for which we have computed the exact TBR distance, i.e. the number of tree pairs for which the TBR distance was declared \steven{\emph{known}}.
\begin{itemize}
\item 625 (85.0$\%$) of the 735 tree pairs of the main dataset could be solved by uSPR in 5 minutes, when using the original (unreduced)  trees.
\item 646 (87.9$\%$) of the 735 tree pairs could be solved by uSPR in 5 minutes, when using the fully reduced trees. In all cases where the original trees could be solved in 5 minutes, so too could the fully reduced trees.
\item Of the remaining 89 tree pairs, Tubro could solve 51 after allowing the ILP solver Gurobi to run for 5 minutes. Hence, in total, the exact TBR distance was calculated for 646+51=697 (94.8$\%$) of the tree pairs. For the remaining 38
tree pairs,  $d_\MP^\ell$ was used as a lower bound on the exact TBR distance. This lower bound was in particular used in the determination of empirical kernel sizes \steven{in  Section~\ref{sec:f(k)}}.
\item For the larger trees dataset, 86 tree pairs could be solved by running uSPR on the unreduced trees for 5 minutes and and additional 3 tree pairs could be solved by running uSPR on the reduced tree pairs. The remaining tree pair could not be solved by Tubro.
\end{itemize}
\steven{It is not the goal of this subsection to discuss how much kernelization does, or does not, help us to solve more challenging instances in practice. However, the experiments did yield some auxiliary insights in this direction, which we now describe.}
Applying all seven reductions allowed uSPR to solve 21 more instances of the main dataset than prior to reduction. While welcome, this is not a particularly large increase. This is not so surprising, because uSPR is a branching algorithm with exponential dependency on $d_\TBR(T,T')$ and only polynomial dependency on the number of taxa. Plus, uSPR contains an internal subtree reduction which already helps to reduce the number of taxa quite significantly. On the other hand, uSPR could potentially exploit parameter reduction, which does occur reasonably often (see Section~\ref{sec:param-red}).  There is some one-sided evidence that parameter reduction did help. Specifically, \blue{20 of the 21 instances that
uSPR solved \emph{after} reduction, exhibited parameter reduction.} On the other hand, amongst the
89 instances that uSPR could still not solve after reduction, only $27.2\%$ exhibited parameter reduction.

Regarding Tubro, we note that kernelization certainly \steven{helped} in the following rather vacuous sense: the generation and solving time for the underlying ILP becomes prohibitively large for instances with more than (roughly) 70 taxa. Prior to kernelization, 85.7\%  of the tree pairs had more than 70 taxa, and after full kernelization only 50.5\% had this property. As an \revisionSK{unparameterized} exponential-time algorithm, Tubro is naturally assisted more than an \revisionSK{FPT} branching algorithm by \revisionSK{the reduction in instance size (i.e., $|X|$) achieved by} kernelization. \steven{The 51 instances that Tubro could solve, but which uSPR could not, \blue{are summarized in Figure~\ref{fig:51-solved}. The average TBR distance of these 51 tree pairs is} 27 with a standard deviation of 4.1. Of the 51 instances, 14 had 50 taxa, 22 had 100 taxa, 11 had 150 taxa, 3 had 250 taxa, and 1 had 250 taxa.}

\begin{figure}[h!]
\begin{center}
\scalebox{0.7}{\input{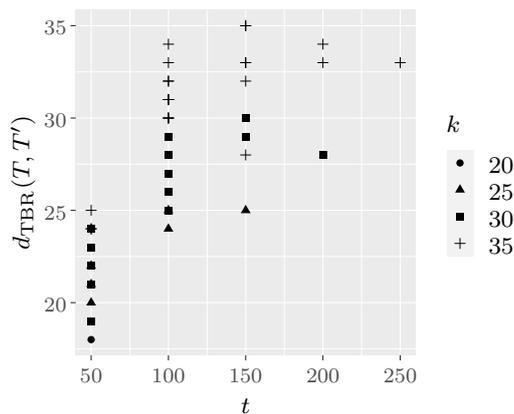}}
\caption{\blue{Distribution of all 51 tree pairs of the main dataset that could be solved by Tubro but could not be solved by uSPR when applied to the fully reduced instances. Note that some of these 51 tree pairs have the same value for $t$, $k$, and $d_\TBR$ in which case their visualized data points coincide.}}
\label{fig:51-solved}
\end{center}
\end{figure}

\begin{figure}[h!]
\noindent\begin{minipage}[t]{0.33\textwidth}
\scalebox{0.5}{\input{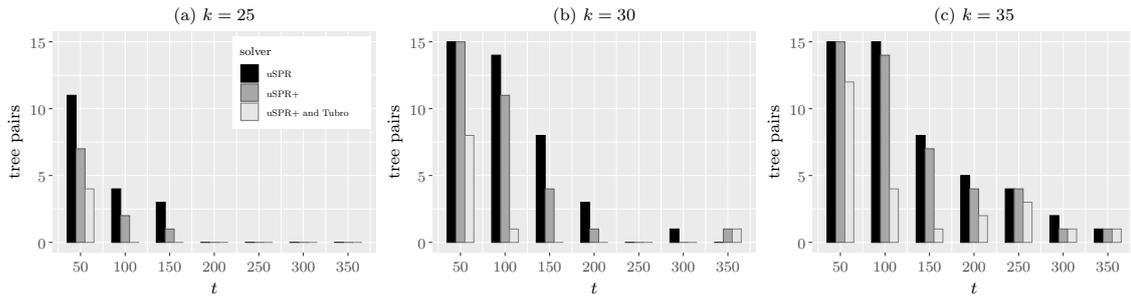}}
\end{minipage}
\begin{minipage}[t]{0.33\textwidth}
\scalebox{0.5}{
\begin{tikzpicture}[x=1pt,y=1pt]
\definecolor{fillColor}{RGB}{255,255,255}
\path[use as bounding box,fill=fillColor,fill opacity=0.00] (0,0) rectangle (289.08,231.26);
\begin{scope}
\path[clip] (  0.00,  0.00) rectangle (289.08,231.26);
\definecolor{drawColor}{RGB}{255,255,255}
\definecolor{fillColor}{RGB}{255,255,255}

\path[draw=drawColor,line width= 0.6pt,line join=round,line cap=round,fill=fillColor] (  0.00,  0.00) rectangle (289.08,231.26);
\end{scope}
\begin{scope}
\path[clip] ( 41.33, 39.69) rectangle (283.58,206.13);
\definecolor{fillColor}{gray}{0.92}

\path[fill=fillColor] ( 41.33, 39.69) rectangle (283.58,206.13);
\definecolor{drawColor}{RGB}{255,255,255}

\path[draw=drawColor,line width= 0.3pt,line join=round] ( 41.33, 72.48) --
	(283.58, 72.48);

\path[draw=drawColor,line width= 0.3pt,line join=round] ( 41.33,122.91) --
	(283.58,122.91);

\path[draw=drawColor,line width= 0.3pt,line join=round] ( 41.33,173.35) --
	(283.58,173.35);

\path[draw=drawColor,line width= 0.3pt,line join=round] ( 45.66, 39.69) --
	( 45.66,206.13);

\path[draw=drawColor,line width= 0.3pt,line join=round] ( 79.03, 39.69) --
	( 79.03,206.13);

\path[draw=drawColor,line width= 0.3pt,line join=round] (112.40, 39.69) --
	(112.40,206.13);

\path[draw=drawColor,line width= 0.3pt,line join=round] (145.77, 39.69) --
	(145.77,206.13);

\path[draw=drawColor,line width= 0.3pt,line join=round] (179.14, 39.69) --
	(179.14,206.13);

\path[draw=drawColor,line width= 0.3pt,line join=round] (212.51, 39.69) --
	(212.51,206.13);

\path[draw=drawColor,line width= 0.3pt,line join=round] (245.87, 39.69) --
	(245.87,206.13);

\path[draw=drawColor,line width= 0.3pt,line join=round] (279.24, 39.69) --
	(279.24,206.13);

\path[draw=drawColor,line width= 0.6pt,line join=round] ( 41.33, 47.26) --
	(283.58, 47.26);

\path[draw=drawColor,line width= 0.6pt,line join=round] ( 41.33, 97.70) --
	(283.58, 97.70);

\path[draw=drawColor,line width= 0.6pt,line join=round] ( 41.33,148.13) --
	(283.58,148.13);

\path[draw=drawColor,line width= 0.6pt,line join=round] ( 41.33,198.57) --
	(283.58,198.57);

\path[draw=drawColor,line width= 0.6pt,line join=round] ( 62.35, 39.69) --
	( 62.35,206.13);

\path[draw=drawColor,line width= 0.6pt,line join=round] ( 95.72, 39.69) --
	( 95.72,206.13);

\path[draw=drawColor,line width= 0.6pt,line join=round] (129.08, 39.69) --
	(129.08,206.13);

\path[draw=drawColor,line width= 0.6pt,line join=round] (162.45, 39.69) --
	(162.45,206.13);

\path[draw=drawColor,line width= 0.6pt,line join=round] (195.82, 39.69) --
	(195.82,206.13);

\path[draw=drawColor,line width= 0.6pt,line join=round] (229.19, 39.69) --
	(229.19,206.13);

\path[draw=drawColor,line width= 0.6pt,line join=round] (262.56, 39.69) --
	(262.56,206.13);
\definecolor{drawColor}{RGB}{0,0,0}
\definecolor{fillColor}{RGB}{0,0,0}

\path[draw=drawColor,line width= 0.6pt,line cap=rect,fill=fillColor] ( 52.34, 47.26) rectangle ( 59.01,198.57);

\path[draw=drawColor,line width= 0.6pt,line cap=rect,fill=fillColor] ( 85.71, 47.26) rectangle ( 92.38,188.48);

\path[draw=drawColor,line width= 0.6pt,line cap=rect,fill=fillColor] (119.07, 47.26) rectangle (125.75,127.96);

\path[draw=drawColor,line width= 0.6pt,line cap=rect,fill=fillColor] (152.44, 47.26) rectangle (159.12, 77.52);

\path[draw=drawColor,line width= 0.6pt,line cap=rect,fill=fillColor] (185.81, 47.26) rectangle (192.48, 47.26);

\path[draw=drawColor,line width= 0.6pt,line cap=rect,fill=fillColor] (219.18, 47.26) rectangle (225.85, 57.35);

\path[draw=drawColor,line width= 0.6pt,line cap=rect,fill=fillColor] (252.55, 47.26) rectangle (259.22, 47.26);
\definecolor{drawColor}{RGB}{93,93,93}
\definecolor{fillColor}{RGB}{167,167,167}

\path[draw=drawColor,line width= 0.6pt,line cap=rect,fill=fillColor] ( 59.01, 47.26) rectangle ( 65.69,198.57);

\path[draw=drawColor,line width= 0.6pt,line cap=rect,fill=fillColor] ( 92.38, 47.26) rectangle ( 99.05,158.22);

\path[draw=drawColor,line width= 0.6pt,line cap=rect,fill=fillColor] (125.75, 47.26) rectangle (132.42, 87.61);

\path[draw=drawColor,line width= 0.6pt,line cap=rect,fill=fillColor] (159.12, 47.26) rectangle (165.79, 57.35);

\path[draw=drawColor,line width= 0.6pt,line cap=rect,fill=fillColor] (192.48, 47.26) rectangle (199.16, 47.26);

\path[draw=drawColor,line width= 0.6pt,line cap=rect,fill=fillColor] (225.85, 47.26) rectangle (232.53, 47.26);

\path[draw=drawColor,line width= 0.6pt,line cap=rect,fill=fillColor] (259.22, 47.26) rectangle (265.89, 57.35);
\definecolor{drawColor}{RGB}{128,128,128}
\definecolor{fillColor}{RGB}{230,230,230}

\path[draw=drawColor,line width= 0.6pt,line cap=rect,fill=fillColor] ( 65.69, 47.26) rectangle ( 72.36,127.96);

\path[draw=drawColor,line width= 0.6pt,line cap=rect,fill=fillColor] ( 99.05, 47.26) rectangle (105.73, 57.35);

\path[draw=drawColor,line width= 0.6pt,line cap=rect,fill=fillColor] (132.42, 47.26) rectangle (139.10, 47.26);

\path[draw=drawColor,line width= 0.6pt,line cap=rect,fill=fillColor] (165.79, 47.26) rectangle (172.46, 47.26);

\path[draw=drawColor,line width= 0.6pt,line cap=rect,fill=fillColor] (199.16, 47.26) rectangle (205.83, 47.26);

\path[draw=drawColor,line width= 0.6pt,line cap=rect,fill=fillColor] (232.53, 47.26) rectangle (239.20, 47.26);

\path[draw=drawColor,line width= 0.6pt,line cap=rect,fill=fillColor] (265.89, 47.26) rectangle (272.57, 57.35);
\end{scope}
\begin{scope}
\path[clip] (  0.00,  0.00) rectangle (289.08,231.26);
\definecolor{drawColor}{gray}{0.30}

\node[text=drawColor,anchor=base east,inner sep=0pt, outer sep=0pt, scale=  1.40] at ( 36.38, 42.44) {0};

\node[text=drawColor,anchor=base east,inner sep=0pt, outer sep=0pt, scale=  1.40] at ( 36.38, 92.87) {5};

\node[text=drawColor,anchor=base east,inner sep=0pt, outer sep=0pt, scale=  1.40] at ( 36.38,143.31) {10};

\node[text=drawColor,anchor=base east,inner sep=0pt, outer sep=0pt, scale=  1.40] at ( 36.38,193.75) {15};
\end{scope}
\begin{scope}
\path[clip] (  0.00,  0.00) rectangle (289.08,231.26);
\definecolor{drawColor}{gray}{0.20}

\path[draw=drawColor,line width= 0.6pt,line join=round] ( 38.58, 47.26) --
	( 41.33, 47.26);

\path[draw=drawColor,line width= 0.6pt,line join=round] ( 38.58, 97.70) --
	( 41.33, 97.70);

\path[draw=drawColor,line width= 0.6pt,line join=round] ( 38.58,148.13) --
	( 41.33,148.13);

\path[draw=drawColor,line width= 0.6pt,line join=round] ( 38.58,198.57) --
	( 41.33,198.57);
\end{scope}
\begin{scope}
\path[clip] (  0.00,  0.00) rectangle (289.08,231.26);
\definecolor{drawColor}{gray}{0.20}

\path[draw=drawColor,line width= 0.6pt,line join=round] ( 62.35, 36.94) --
	( 62.35, 39.69);

\path[draw=drawColor,line width= 0.6pt,line join=round] ( 95.72, 36.94) --
	( 95.72, 39.69);

\path[draw=drawColor,line width= 0.6pt,line join=round] (129.08, 36.94) --
	(129.08, 39.69);

\path[draw=drawColor,line width= 0.6pt,line join=round] (162.45, 36.94) --
	(162.45, 39.69);

\path[draw=drawColor,line width= 0.6pt,line join=round] (195.82, 36.94) --
	(195.82, 39.69);

\path[draw=drawColor,line width= 0.6pt,line join=round] (229.19, 36.94) --
	(229.19, 39.69);

\path[draw=drawColor,line width= 0.6pt,line join=round] (262.56, 36.94) --
	(262.56, 39.69);
\end{scope}
\begin{scope}
\path[clip] (  0.00,  0.00) rectangle (289.08,231.26);
\definecolor{drawColor}{gray}{0.30}

\node[text=drawColor,anchor=base,inner sep=0pt, outer sep=0pt, scale=  1.40] at ( 62.35, 25.10) {50};

\node[text=drawColor,anchor=base,inner sep=0pt, outer sep=0pt, scale=  1.40] at ( 95.72, 25.10) {100};

\node[text=drawColor,anchor=base,inner sep=0pt, outer sep=0pt, scale=  1.40] at (129.08, 25.10) {150};

\node[text=drawColor,anchor=base,inner sep=0pt, outer sep=0pt, scale=  1.40] at (162.45, 25.10) {200};

\node[text=drawColor,anchor=base,inner sep=0pt, outer sep=0pt, scale=  1.40] at (195.82, 25.10) {250};

\node[text=drawColor,anchor=base,inner sep=0pt, outer sep=0pt, scale=  1.40] at (229.19, 25.10) {300};

\node[text=drawColor,anchor=base,inner sep=0pt, outer sep=0pt, scale=  1.40] at (262.56, 25.10) {350};
\end{scope}
\begin{scope}
\path[clip] (  0.00,  0.00) rectangle (289.08,231.26);
\definecolor{drawColor}{RGB}{0,0,0}

\node[text=drawColor,anchor=base,inner sep=0pt, outer sep=0pt, scale=  1.60] at (162.45,  8.61) {$t$};
\end{scope}
\begin{scope}
\path[clip] (  0.00,  0.00) rectangle (289.08,231.26);
\definecolor{drawColor}{RGB}{0,0,0}

\node[text=drawColor,rotate= 90.00,anchor=base,inner sep=0pt, outer sep=0pt, scale=  1.60] at ( 16.52,122.91) {tree pairs};
\end{scope}
\begin{scope}
\path[clip] (  0.00,  0.00) rectangle (289.08,231.26);
\definecolor{drawColor}{RGB}{0,0,0}

\node[text=drawColor,anchor=base,inner sep=0pt, outer sep=0pt, scale=  1.60] at (162.45,214.74) {(b) $k=30$};
\end{scope}
\end{tikzpicture}}
\end{minipage}
\begin{minipage}[t]{0.33\textwidth}
\scalebox{0.5}{
\begin{tikzpicture}[x=1pt,y=1pt]
\definecolor{fillColor}{RGB}{255,255,255}
\path[use as bounding box,fill=fillColor,fill opacity=0.00] (0,0) rectangle (289.08,231.26);
\begin{scope}
\path[clip] (  0.00,  0.00) rectangle (289.08,231.26);
\definecolor{drawColor}{RGB}{255,255,255}
\definecolor{fillColor}{RGB}{255,255,255}

\path[draw=drawColor,line width= 0.6pt,line join=round,line cap=round,fill=fillColor] (  0.00,  0.00) rectangle (289.08,231.26);
\end{scope}
\begin{scope}
\path[clip] ( 41.33, 39.69) rectangle (283.58,206.13);
\definecolor{fillColor}{gray}{0.92}

\path[fill=fillColor] ( 41.33, 39.69) rectangle (283.58,206.13);
\definecolor{drawColor}{RGB}{255,255,255}

\path[draw=drawColor,line width= 0.3pt,line join=round] ( 41.33, 72.48) --
	(283.58, 72.48);

\path[draw=drawColor,line width= 0.3pt,line join=round] ( 41.33,122.91) --
	(283.58,122.91);

\path[draw=drawColor,line width= 0.3pt,line join=round] ( 41.33,173.35) --
	(283.58,173.35);

\path[draw=drawColor,line width= 0.3pt,line join=round] ( 45.66, 39.69) --
	( 45.66,206.13);

\path[draw=drawColor,line width= 0.3pt,line join=round] ( 79.03, 39.69) --
	( 79.03,206.13);

\path[draw=drawColor,line width= 0.3pt,line join=round] (112.40, 39.69) --
	(112.40,206.13);

\path[draw=drawColor,line width= 0.3pt,line join=round] (145.77, 39.69) --
	(145.77,206.13);

\path[draw=drawColor,line width= 0.3pt,line join=round] (179.14, 39.69) --
	(179.14,206.13);

\path[draw=drawColor,line width= 0.3pt,line join=round] (212.51, 39.69) --
	(212.51,206.13);

\path[draw=drawColor,line width= 0.3pt,line join=round] (245.87, 39.69) --
	(245.87,206.13);

\path[draw=drawColor,line width= 0.3pt,line join=round] (279.24, 39.69) --
	(279.24,206.13);

\path[draw=drawColor,line width= 0.6pt,line join=round] ( 41.33, 47.26) --
	(283.58, 47.26);

\path[draw=drawColor,line width= 0.6pt,line join=round] ( 41.33, 97.70) --
	(283.58, 97.70);

\path[draw=drawColor,line width= 0.6pt,line join=round] ( 41.33,148.13) --
	(283.58,148.13);

\path[draw=drawColor,line width= 0.6pt,line join=round] ( 41.33,198.57) --
	(283.58,198.57);

\path[draw=drawColor,line width= 0.6pt,line join=round] ( 62.35, 39.69) --
	( 62.35,206.13);

\path[draw=drawColor,line width= 0.6pt,line join=round] ( 95.72, 39.69) --
	( 95.72,206.13);

\path[draw=drawColor,line width= 0.6pt,line join=round] (129.08, 39.69) --
	(129.08,206.13);

\path[draw=drawColor,line width= 0.6pt,line join=round] (162.45, 39.69) --
	(162.45,206.13);

\path[draw=drawColor,line width= 0.6pt,line join=round] (195.82, 39.69) --
	(195.82,206.13);

\path[draw=drawColor,line width= 0.6pt,line join=round] (229.19, 39.69) --
	(229.19,206.13);

\path[draw=drawColor,line width= 0.6pt,line join=round] (262.56, 39.69) --
	(262.56,206.13);
\definecolor{drawColor}{RGB}{0,0,0}
\definecolor{fillColor}{RGB}{0,0,0}

\path[draw=drawColor,line width= 0.6pt,line cap=rect,fill=fillColor] ( 52.34, 47.26) rectangle ( 59.01,198.57);

\path[draw=drawColor,line width= 0.6pt,line cap=rect,fill=fillColor] ( 85.71, 47.26) rectangle ( 92.38,198.57);

\path[draw=drawColor,line width= 0.6pt,line cap=rect,fill=fillColor] (119.07, 47.26) rectangle (125.75,127.96);

\path[draw=drawColor,line width= 0.6pt,line cap=rect,fill=fillColor] (152.44, 47.26) rectangle (159.12, 97.70);

\path[draw=drawColor,line width= 0.6pt,line cap=rect,fill=fillColor] (185.81, 47.26) rectangle (192.48, 87.61);

\path[draw=drawColor,line width= 0.6pt,line cap=rect,fill=fillColor] (219.18, 47.26) rectangle (225.85, 67.43);

\path[draw=drawColor,line width= 0.6pt,line cap=rect,fill=fillColor] (252.55, 47.26) rectangle (259.22, 57.35);
\definecolor{drawColor}{RGB}{93,93,93}
\definecolor{fillColor}{RGB}{167,167,167}

\path[draw=drawColor,line width= 0.6pt,line cap=rect,fill=fillColor] ( 59.01, 47.26) rectangle ( 65.69,198.57);

\path[draw=drawColor,line width= 0.6pt,line cap=rect,fill=fillColor] ( 92.38, 47.26) rectangle ( 99.05,188.48);

\path[draw=drawColor,line width= 0.6pt,line cap=rect,fill=fillColor] (125.75, 47.26) rectangle (132.42,117.87);

\path[draw=drawColor,line width= 0.6pt,line cap=rect,fill=fillColor] (159.12, 47.26) rectangle (165.79, 87.61);

\path[draw=drawColor,line width= 0.6pt,line cap=rect,fill=fillColor] (192.48, 47.26) rectangle (199.16, 87.61);

\path[draw=drawColor,line width= 0.6pt,line cap=rect,fill=fillColor] (225.85, 47.26) rectangle (232.53, 57.35);

\path[draw=drawColor,line width= 0.6pt,line cap=rect,fill=fillColor] (259.22, 47.26) rectangle (265.89, 57.35);
\definecolor{drawColor}{RGB}{128,128,128}
\definecolor{fillColor}{RGB}{230,230,230}

\path[draw=drawColor,line width= 0.6pt,line cap=rect,fill=fillColor] ( 65.69, 47.26) rectangle ( 72.36,168.31);

\path[draw=drawColor,line width= 0.6pt,line cap=rect,fill=fillColor] ( 99.05, 47.26) rectangle (105.73, 87.61);

\path[draw=drawColor,line width= 0.6pt,line cap=rect,fill=fillColor] (132.42, 47.26) rectangle (139.10, 57.35);

\path[draw=drawColor,line width= 0.6pt,line cap=rect,fill=fillColor] (165.79, 47.26) rectangle (172.46, 67.43);

\path[draw=drawColor,line width= 0.6pt,line cap=rect,fill=fillColor] (199.16, 47.26) rectangle (205.83, 77.52);

\path[draw=drawColor,line width= 0.6pt,line cap=rect,fill=fillColor] (232.53, 47.26) rectangle (239.20, 57.35);

\path[draw=drawColor,line width= 0.6pt,line cap=rect,fill=fillColor] (265.89, 47.26) rectangle (272.57, 57.35);
\end{scope}
\begin{scope}
\path[clip] (  0.00,  0.00) rectangle (289.08,231.26);
\definecolor{drawColor}{gray}{0.30}

\node[text=drawColor,anchor=base east,inner sep=0pt, outer sep=0pt, scale=  1.40] at ( 36.38, 42.44) {0};

\node[text=drawColor,anchor=base east,inner sep=0pt, outer sep=0pt, scale=  1.40] at ( 36.38, 92.87) {5};

\node[text=drawColor,anchor=base east,inner sep=0pt, outer sep=0pt, scale=  1.40] at ( 36.38,143.31) {10};

\node[text=drawColor,anchor=base east,inner sep=0pt, outer sep=0pt, scale=  1.40] at ( 36.38,193.75) {15};
\end{scope}
\begin{scope}
\path[clip] (  0.00,  0.00) rectangle (289.08,231.26);
\definecolor{drawColor}{gray}{0.20}

\path[draw=drawColor,line width= 0.6pt,line join=round] ( 38.58, 47.26) --
	( 41.33, 47.26);

\path[draw=drawColor,line width= 0.6pt,line join=round] ( 38.58, 97.70) --
	( 41.33, 97.70);

\path[draw=drawColor,line width= 0.6pt,line join=round] ( 38.58,148.13) --
	( 41.33,148.13);

\path[draw=drawColor,line width= 0.6pt,line join=round] ( 38.58,198.57) --
	( 41.33,198.57);
\end{scope}
\begin{scope}
\path[clip] (  0.00,  0.00) rectangle (289.08,231.26);
\definecolor{drawColor}{gray}{0.20}

\path[draw=drawColor,line width= 0.6pt,line join=round] ( 62.35, 36.94) --
	( 62.35, 39.69);

\path[draw=drawColor,line width= 0.6pt,line join=round] ( 95.72, 36.94) --
	( 95.72, 39.69);

\path[draw=drawColor,line width= 0.6pt,line join=round] (129.08, 36.94) --
	(129.08, 39.69);

\path[draw=drawColor,line width= 0.6pt,line join=round] (162.45, 36.94) --
	(162.45, 39.69);

\path[draw=drawColor,line width= 0.6pt,line join=round] (195.82, 36.94) --
	(195.82, 39.69);

\path[draw=drawColor,line width= 0.6pt,line join=round] (229.19, 36.94) --
	(229.19, 39.69);

\path[draw=drawColor,line width= 0.6pt,line join=round] (262.56, 36.94) --
	(262.56, 39.69);
\end{scope}
\begin{scope}
\path[clip] (  0.00,  0.00) rectangle (289.08,231.26);
\definecolor{drawColor}{gray}{0.30}

\node[text=drawColor,anchor=base,inner sep=0pt, outer sep=0pt, scale=  1.40] at ( 62.35, 25.10) {50};

\node[text=drawColor,anchor=base,inner sep=0pt, outer sep=0pt, scale=  1.40] at ( 95.72, 25.10) {100};

\node[text=drawColor,anchor=base,inner sep=0pt, outer sep=0pt, scale=  1.40] at (129.08, 25.10) {150};

\node[text=drawColor,anchor=base,inner sep=0pt, outer sep=0pt, scale=  1.40] at (162.45, 25.10) {200};

\node[text=drawColor,anchor=base,inner sep=0pt, outer sep=0pt, scale=  1.40] at (195.82, 25.10) {250};

\node[text=drawColor,anchor=base,inner sep=0pt, outer sep=0pt, scale=  1.40] at (229.19, 25.10) {300};

\node[text=drawColor,anchor=base,inner sep=0pt, outer sep=0pt, scale=  1.40] at (262.56, 25.10) {350};
\end{scope}
\begin{scope}
\path[clip] (  0.00,  0.00) rectangle (289.08,231.26);
\definecolor{drawColor}{RGB}{0,0,0}

\node[text=drawColor,anchor=base,inner sep=0pt, outer sep=0pt, scale=  1.60] at (162.45,  8.61) {$t$};
\end{scope}
\begin{scope}
\path[clip] (  0.00,  0.00) rectangle (289.08,231.26);
\definecolor{drawColor}{RGB}{0,0,0}

\node[text=drawColor,rotate= 90.00,anchor=base,inner sep=0pt, outer sep=0pt, scale=  1.60] at ( 16.52,122.91) {tree pairs};
\end{scope}
\begin{scope}
\path[clip] (  0.00,  0.00) rectangle (289.08,231.26);
\definecolor{drawColor}{RGB}{0,0,0}

\node[text=drawColor,anchor=base,inner sep=0pt, outer sep=0pt, scale=  1.60] at (162.45,214.74) {(c) $k=35$};
\end{scope}
\end{tikzpicture}}
\end{minipage}
\caption{\blue{Distribution of all 109 (black, uSPR), 88 (dark gray, uSPR+), and 38 (light gray, uSPR+ and Tubro) tree pairs of the main dataset \revisionSK{with $k \geq 25$} that could not be solved exactly with uSPR applied to the subtree reduced tree pairs, with uSPR  applied to the fully reduced tree pairs, and with Tubro applied to the fully reduced tree pairs, respectively. Note that one tree pair with $k=20$ and $t=50$ could not be solved with uSPR or uSPR+ but could be solved with Tubro.}}
\label{fig:histo-unsolved-tree-pairs}
\end{figure}

\blue{Three histograms that present the distribution of all tree pairs of the main dataset that could not be solved with the different solvers} over all  combinations of $k\in\{5,10, 15,\ldots,35\}$ and $t=\{50,100,150,\ldots,350\}$ \revisionSK{are} shown in Figure~\ref{fig:histo-unsolved-tree-pairs}. \blue{ The 110 tree pairs that could not be solved with uSPR applied to the subtree reduced tree pairs are presented by black bars. Similarly, the 89 (resp. 38) tree pairs that could not be solved with uSPR applied to the fully reduced tree pairs (resp. uSPR or Tubro applied to the fully reduced tree pairs) are presented by dark gray (resp. light gray) bars. Regardless of which solver was used, all unsolvable tree pairs \revisionSK{were} generated by applying $k\geq 25$ random TBR moves, except for one pair with $k=20$ and $t=50$ that could not be solved with uSPR but could be solved with Tubro \revisionSK{(which is why the numbers 109 and 88 are reported in the figure caption).}
Turning to the 38 tree pairs that could not be solved by any of the solvers uSPR and Tubro,}  \revisionSK{12 had skew 50, 17 had skew 70 and 9 had skew 90, so tree pairs with a skew of 70 seem mildly over-represented. Furthermore, amongst the 38 unsolved instances, a majority of 24  tree pairs have the property $t=50$.}
Given the parameterized running time of uSPR, which increases exponentially as a function of $k$, it is not so surprising that uSPR had difficulties with the large TBR distance $(k \geq 25)$ of the 38 unsolved tree pairs. It is less obvious why Tubro, which is based on  ILP, finds these instances difficult. Clearly, \blue{the ILP generated by Tubro quickly becomes prohibitively large for tree pairs that have more than 70 taxa after kernelization}.
However, for $t=50$ the ILP generated by Tubro will be comparatively small, and quick to generate, so this does not explain why Tubro struggles for some such tree pairs \purple{while} it succeeds on others (14 of the 51 tree pairs that Tubro could solve but uSPR could not, had $t=50$.) It is difficult to attach far-reaching
conclusions to this, but in any case it seems that instances with a small number of leaves, where the TBR distance is high (as a function of the number of leaves) are potentially a challenge for both uSPR and Tubro. For uSPR, these instances might
even be \emph{harder} than instances with the same TBR distance, but more taxa. \revisionSK{An \emph{informal}, intuitive argument for this could be that, when the TBR distance is high and the number of taxa is low, the TBR moves required to turn one tree into the other heavily `overlap' and `interfere' with each other. Viewed through the lens of agreement
forests, the underlying maximum agreement forest subsequently has very little structure, causing uSPR and Tubro to approach their worst-case behaviour.} We return to these tractability issues in Section \ref{sec:toughcookies}.

\subsection{\steven{Average percentage of remaining taxa}}
\label{subsec:avgreduc}

\begin{figure}[h!]
\noindent\begin{minipage}[t]{0.33\textwidth}
\scalebox{0.48}{
\begin{tikzpicture}[x=1pt,y=1pt]
\definecolor{fillColor}{RGB}{255,255,255}
\path[use as bounding box,fill=fillColor,fill opacity=0.00] (0,0) rectangle (289.08,289.08);
\begin{scope}
\path[clip] (  0.00,  0.00) rectangle (289.08,289.08);
\definecolor{drawColor}{RGB}{255,255,255}
\definecolor{fillColor}{RGB}{255,255,255}

\path[draw=drawColor,line width= 0.6pt,line join=round,line cap=round,fill=fillColor] (  0.00,  0.00) rectangle (289.08,289.08);
\end{scope}
\begin{scope}
\path[clip] ( 48.32, 39.69) rectangle (283.58,263.95);
\definecolor{fillColor}{gray}{0.92}

\path[fill=fillColor] ( 48.32, 39.69) rectangle (283.58,263.95);
\definecolor{drawColor}{RGB}{255,255,255}

\path[draw=drawColor,line width= 0.3pt,line join=round] ( 48.32, 75.37) --
	(283.58, 75.37);

\path[draw=drawColor,line width= 0.3pt,line join=round] ( 48.32,126.34) --
	(283.58,126.34);

\path[draw=drawColor,line width= 0.3pt,line join=round] ( 48.32,177.31) --
	(283.58,177.31);

\path[draw=drawColor,line width= 0.3pt,line join=round] ( 48.32,228.27) --
	(283.58,228.27);

\path[draw=drawColor,line width= 0.3pt,line join=round] ( 59.02, 39.69) --
	( 59.02,263.95);

\path[draw=drawColor,line width= 0.3pt,line join=round] (130.31, 39.69) --
	(130.31,263.95);

\path[draw=drawColor,line width= 0.3pt,line join=round] (201.60, 39.69) --
	(201.60,263.95);

\path[draw=drawColor,line width= 0.3pt,line join=round] (272.89, 39.69) --
	(272.89,263.95);

\path[draw=drawColor,line width= 0.6pt,line join=round] ( 48.32, 49.89) --
	(283.58, 49.89);

\path[draw=drawColor,line width= 0.6pt,line join=round] ( 48.32,100.85) --
	(283.58,100.85);

\path[draw=drawColor,line width= 0.6pt,line join=round] ( 48.32,151.82) --
	(283.58,151.82);

\path[draw=drawColor,line width= 0.6pt,line join=round] ( 48.32,202.79) --
	(283.58,202.79);

\path[draw=drawColor,line width= 0.6pt,line join=round] ( 48.32,253.76) --
	(283.58,253.76);

\path[draw=drawColor,line width= 0.6pt,line join=round] ( 94.66, 39.69) --
	( 94.66,263.95);

\path[draw=drawColor,line width= 0.6pt,line join=round] (165.95, 39.69) --
	(165.95,263.95);

\path[draw=drawColor,line width= 0.6pt,line join=round] (237.24, 39.69) --
	(237.24,263.95);
\definecolor{drawColor}{RGB}{0,0,0}

\path[draw=drawColor,line width= 0.6pt,line join=round] ( 59.02,159.16) --
	( 94.66,127.49) --
	(130.31,112.32) --
	(165.95, 99.77) --
	(201.60, 92.89) --
	(237.24, 88.67) --
	(272.89, 84.25);

\path[draw=drawColor,line width= 0.6pt,dash pattern=on 1pt off 3pt ,line join=round] ( 59.02,148.29) --
	( 94.66,110.78) --
	(130.31, 96.37) --
	(165.95, 86.45) --
	(201.60, 78.70) --
	(237.24, 75.80) --
	(272.89, 72.60);

\path[draw=drawColor,line width= 0.6pt,dash pattern=on 4pt off 4pt ,line join=round] ( 59.02,136.60) --
	( 94.66,102.21) --
	(130.31, 84.05) --
	(165.95, 78.16) --
	(201.60, 74.03) --
	(237.24, 70.23) --
	(272.89, 67.71);
\definecolor{fillColor}{RGB}{0,0,0}

\path[draw=drawColor,line width= 0.4pt,line join=round,line cap=round,fill=fillColor] ( 59.02,159.16) circle (  1.96);

\path[draw=drawColor,line width= 0.4pt,line join=round,line cap=round,fill=fillColor] ( 94.66,127.49) circle (  1.96);

\path[draw=drawColor,line width= 0.4pt,line join=round,line cap=round,fill=fillColor] (130.31,112.32) circle (  1.96);

\path[draw=drawColor,line width= 0.4pt,line join=round,line cap=round,fill=fillColor] (165.95, 99.77) circle (  1.96);

\path[draw=drawColor,line width= 0.4pt,line join=round,line cap=round,fill=fillColor] (201.60, 92.89) circle (  1.96);

\path[draw=drawColor,line width= 0.4pt,line join=round,line cap=round,fill=fillColor] (237.24, 88.67) circle (  1.96);

\path[draw=drawColor,line width= 0.4pt,line join=round,line cap=round,fill=fillColor] (272.89, 84.25) circle (  1.96);

\path[draw=drawColor,line width= 0.4pt,line join=round,line cap=round,fill=fillColor] ( 59.02,148.29) circle (  1.96);

\path[draw=drawColor,line width= 0.4pt,line join=round,line cap=round,fill=fillColor] ( 94.66,110.78) circle (  1.96);

\path[draw=drawColor,line width= 0.4pt,line join=round,line cap=round,fill=fillColor] (130.31, 96.37) circle (  1.96);

\path[draw=drawColor,line width= 0.4pt,line join=round,line cap=round,fill=fillColor] (165.95, 86.45) circle (  1.96);

\path[draw=drawColor,line width= 0.4pt,line join=round,line cap=round,fill=fillColor] (201.60, 78.70) circle (  1.96);

\path[draw=drawColor,line width= 0.4pt,line join=round,line cap=round,fill=fillColor] (237.24, 75.80) circle (  1.96);

\path[draw=drawColor,line width= 0.4pt,line join=round,line cap=round,fill=fillColor] (272.89, 72.60) circle (  1.96);

\path[draw=drawColor,line width= 0.4pt,line join=round,line cap=round,fill=fillColor] ( 59.02,136.60) circle (  1.96);

\path[draw=drawColor,line width= 0.4pt,line join=round,line cap=round,fill=fillColor] ( 94.66,102.21) circle (  1.96);

\path[draw=drawColor,line width= 0.4pt,line join=round,line cap=round,fill=fillColor] (130.31, 84.05) circle (  1.96);

\path[draw=drawColor,line width= 0.4pt,line join=round,line cap=round,fill=fillColor] (165.95, 78.16) circle (  1.96);

\path[draw=drawColor,line width= 0.4pt,line join=round,line cap=round,fill=fillColor] (201.60, 74.03) circle (  1.96);

\path[draw=drawColor,line width= 0.4pt,line join=round,line cap=round,fill=fillColor] (237.24, 70.23) circle (  1.96);

\path[draw=drawColor,line width= 0.4pt,line join=round,line cap=round,fill=fillColor] (272.89, 67.71) circle (  1.96);
\end{scope}
\begin{scope}
\path[clip] (  0.00,  0.00) rectangle (289.08,289.08);
\definecolor{drawColor}{gray}{0.30}

\node[text=drawColor,anchor=base east,inner sep=0pt, outer sep=0pt, scale=  1.40] at ( 43.37, 45.07) {0};

\node[text=drawColor,anchor=base east,inner sep=0pt, outer sep=0pt, scale=  1.40] at ( 43.37, 96.03) {25};

\node[text=drawColor,anchor=base east,inner sep=0pt, outer sep=0pt, scale=  1.40] at ( 43.37,147.00) {50};

\node[text=drawColor,anchor=base east,inner sep=0pt, outer sep=0pt, scale=  1.40] at ( 43.37,197.97) {75};

\node[text=drawColor,anchor=base east,inner sep=0pt, outer sep=0pt, scale=  1.40] at ( 43.37,248.94) {100};
\end{scope}
\begin{scope}
\path[clip] (  0.00,  0.00) rectangle (289.08,289.08);
\definecolor{drawColor}{gray}{0.20}

\path[draw=drawColor,line width= 0.6pt,line join=round] ( 45.57, 49.89) --
	( 48.32, 49.89);

\path[draw=drawColor,line width= 0.6pt,line join=round] ( 45.57,100.85) --
	( 48.32,100.85);

\path[draw=drawColor,line width= 0.6pt,line join=round] ( 45.57,151.82) --
	( 48.32,151.82);

\path[draw=drawColor,line width= 0.6pt,line join=round] ( 45.57,202.79) --
	( 48.32,202.79);

\path[draw=drawColor,line width= 0.6pt,line join=round] ( 45.57,253.76) --
	( 48.32,253.76);
\end{scope}
\begin{scope}
\path[clip] (  0.00,  0.00) rectangle (289.08,289.08);
\definecolor{drawColor}{gray}{0.20}

\path[draw=drawColor,line width= 0.6pt,line join=round] ( 94.66, 36.94) --
	( 94.66, 39.69);

\path[draw=drawColor,line width= 0.6pt,line join=round] (165.95, 36.94) --
	(165.95, 39.69);

\path[draw=drawColor,line width= 0.6pt,line join=round] (237.24, 36.94) --
	(237.24, 39.69);
\end{scope}
\begin{scope}
\path[clip] (  0.00,  0.00) rectangle (289.08,289.08);
\definecolor{drawColor}{gray}{0.30}

\node[text=drawColor,anchor=base,inner sep=0pt, outer sep=0pt, scale=  1.40] at ( 94.66, 25.10) {100};

\node[text=drawColor,anchor=base,inner sep=0pt, outer sep=0pt, scale=  1.40] at (165.95, 25.10) {200};

\node[text=drawColor,anchor=base,inner sep=0pt, outer sep=0pt, scale=  1.40] at (237.24, 25.10) {300};
\end{scope}
\begin{scope}
\path[clip] (  0.00,  0.00) rectangle (289.08,289.08);
\definecolor{drawColor}{RGB}{0,0,0}

\node[text=drawColor,anchor=base,inner sep=0pt, outer sep=0pt, scale=  1.60] at (165.95,  8.61) {\itshape $t$};
\end{scope}
\begin{scope}
\path[clip] (  0.00,  0.00) rectangle (289.08,289.08);
\definecolor{drawColor}{RGB}{0,0,0}

\node[text=drawColor,rotate= 90.00,anchor=base,inner sep=0pt, outer sep=0pt, scale=  1.60] at ( 16.52,151.82) {remaining taxa $\left[\%\right]$};
\end{scope}
\begin{scope}
\path[clip] (  0.00,  0.00) rectangle (289.08,289.08);
\definecolor{drawColor}{RGB}{0,0,0}

\node[text=drawColor,anchor=base,inner sep=0pt, outer sep=0pt, scale=  1.60] at (165.95,272.56) {(a) $k=5$};
\end{scope}
\end{tikzpicture}}
\end{minipage}
\begin{minipage}[t]{0.33\textwidth}
\scalebox{0.48}{
\begin{tikzpicture}[x=1pt,y=1pt]
\definecolor{fillColor}{RGB}{255,255,255}
\path[use as bounding box,fill=fillColor,fill opacity=0.00] (0,0) rectangle (289.08,289.08);
\begin{scope}
\path[clip] (  0.00,  0.00) rectangle (289.08,289.08);
\definecolor{drawColor}{RGB}{255,255,255}
\definecolor{fillColor}{RGB}{255,255,255}

\path[draw=drawColor,line width= 0.6pt,line join=round,line cap=round,fill=fillColor] (  0.00,  0.00) rectangle (289.08,289.08);
\end{scope}
\begin{scope}
\path[clip] ( 48.32, 39.69) rectangle (283.58,263.95);
\definecolor{fillColor}{gray}{0.92}

\path[fill=fillColor] ( 48.32, 39.69) rectangle (283.58,263.95);
\definecolor{drawColor}{RGB}{255,255,255}

\path[draw=drawColor,line width= 0.3pt,line join=round] ( 48.32, 75.37) --
	(283.58, 75.37);

\path[draw=drawColor,line width= 0.3pt,line join=round] ( 48.32,126.34) --
	(283.58,126.34);

\path[draw=drawColor,line width= 0.3pt,line join=round] ( 48.32,177.31) --
	(283.58,177.31);

\path[draw=drawColor,line width= 0.3pt,line join=round] ( 48.32,228.27) --
	(283.58,228.27);

\path[draw=drawColor,line width= 0.3pt,line join=round] ( 59.02, 39.69) --
	( 59.02,263.95);

\path[draw=drawColor,line width= 0.3pt,line join=round] (130.31, 39.69) --
	(130.31,263.95);

\path[draw=drawColor,line width= 0.3pt,line join=round] (201.60, 39.69) --
	(201.60,263.95);

\path[draw=drawColor,line width= 0.3pt,line join=round] (272.89, 39.69) --
	(272.89,263.95);

\path[draw=drawColor,line width= 0.6pt,line join=round] ( 48.32, 49.89) --
	(283.58, 49.89);

\path[draw=drawColor,line width= 0.6pt,line join=round] ( 48.32,100.85) --
	(283.58,100.85);

\path[draw=drawColor,line width= 0.6pt,line join=round] ( 48.32,151.82) --
	(283.58,151.82);

\path[draw=drawColor,line width= 0.6pt,line join=round] ( 48.32,202.79) --
	(283.58,202.79);

\path[draw=drawColor,line width= 0.6pt,line join=round] ( 48.32,253.76) --
	(283.58,253.76);

\path[draw=drawColor,line width= 0.6pt,line join=round] ( 94.66, 39.69) --
	( 94.66,263.95);

\path[draw=drawColor,line width= 0.6pt,line join=round] (165.95, 39.69) --
	(165.95,263.95);

\path[draw=drawColor,line width= 0.6pt,line join=round] (237.24, 39.69) --
	(237.24,263.95);
\definecolor{drawColor}{RGB}{0,0,0}

\path[draw=drawColor,line width= 0.6pt,line join=round] ( 59.02,189.61) --
	( 94.66,161.88) --
	(130.31,139.68) --
	(165.95,128.92) --
	(201.60,117.73) --
	(237.24,111.36) --
	(272.89,103.86);

\path[draw=drawColor,line width= 0.6pt,dash pattern=on 1pt off 3pt ,line join=round] ( 59.02,184.71) --
	( 94.66,149.24) --
	(130.31,124.73) --
	(165.95,111.46) --
	(201.60,101.43) --
	(237.24, 94.69) --
	(272.89, 88.10);

\path[draw=drawColor,line width= 0.6pt,dash pattern=on 4pt off 4pt ,line join=round] ( 59.02,177.92) --
	( 94.66,139.73) --
	(130.31,116.03) --
	(165.95,101.40) --
	(201.60, 93.60) --
	(237.24, 85.04) --
	(272.89, 82.47);
\definecolor{fillColor}{RGB}{0,0,0}

\path[draw=drawColor,line width= 0.4pt,line join=round,line cap=round,fill=fillColor] ( 59.02,189.61) circle (  1.96);

\path[draw=drawColor,line width= 0.4pt,line join=round,line cap=round,fill=fillColor] ( 94.66,161.88) circle (  1.96);

\path[draw=drawColor,line width= 0.4pt,line join=round,line cap=round,fill=fillColor] (130.31,139.68) circle (  1.96);

\path[draw=drawColor,line width= 0.4pt,line join=round,line cap=round,fill=fillColor] (165.95,128.92) circle (  1.96);

\path[draw=drawColor,line width= 0.4pt,line join=round,line cap=round,fill=fillColor] (201.60,117.73) circle (  1.96);

\path[draw=drawColor,line width= 0.4pt,line join=round,line cap=round,fill=fillColor] (237.24,111.36) circle (  1.96);

\path[draw=drawColor,line width= 0.4pt,line join=round,line cap=round,fill=fillColor] (272.89,103.86) circle (  1.96);

\path[draw=drawColor,line width= 0.4pt,line join=round,line cap=round,fill=fillColor] ( 59.02,184.71) circle (  1.96);

\path[draw=drawColor,line width= 0.4pt,line join=round,line cap=round,fill=fillColor] ( 94.66,149.24) circle (  1.96);

\path[draw=drawColor,line width= 0.4pt,line join=round,line cap=round,fill=fillColor] (130.31,124.73) circle (  1.96);

\path[draw=drawColor,line width= 0.4pt,line join=round,line cap=round,fill=fillColor] (165.95,111.46) circle (  1.96);

\path[draw=drawColor,line width= 0.4pt,line join=round,line cap=round,fill=fillColor] (201.60,101.43) circle (  1.96);

\path[draw=drawColor,line width= 0.4pt,line join=round,line cap=round,fill=fillColor] (237.24, 94.69) circle (  1.96);

\path[draw=drawColor,line width= 0.4pt,line join=round,line cap=round,fill=fillColor] (272.89, 88.10) circle (  1.96);

\path[draw=drawColor,line width= 0.4pt,line join=round,line cap=round,fill=fillColor] ( 59.02,177.92) circle (  1.96);

\path[draw=drawColor,line width= 0.4pt,line join=round,line cap=round,fill=fillColor] ( 94.66,139.73) circle (  1.96);

\path[draw=drawColor,line width= 0.4pt,line join=round,line cap=round,fill=fillColor] (130.31,116.03) circle (  1.96);

\path[draw=drawColor,line width= 0.4pt,line join=round,line cap=round,fill=fillColor] (165.95,101.40) circle (  1.96);

\path[draw=drawColor,line width= 0.4pt,line join=round,line cap=round,fill=fillColor] (201.60, 93.60) circle (  1.96);

\path[draw=drawColor,line width= 0.4pt,line join=round,line cap=round,fill=fillColor] (237.24, 85.04) circle (  1.96);

\path[draw=drawColor,line width= 0.4pt,line join=round,line cap=round,fill=fillColor] (272.89, 82.47) circle (  1.96);
\end{scope}
\begin{scope}
\path[clip] (  0.00,  0.00) rectangle (289.08,289.08);
\definecolor{drawColor}{gray}{0.30}

\node[text=drawColor,anchor=base east,inner sep=0pt, outer sep=0pt, scale=  1.40] at ( 43.37, 45.07) {0};

\node[text=drawColor,anchor=base east,inner sep=0pt, outer sep=0pt, scale=  1.40] at ( 43.37, 96.03) {25};

\node[text=drawColor,anchor=base east,inner sep=0pt, outer sep=0pt, scale=  1.40] at ( 43.37,147.00) {50};

\node[text=drawColor,anchor=base east,inner sep=0pt, outer sep=0pt, scale=  1.40] at ( 43.37,197.97) {75};

\node[text=drawColor,anchor=base east,inner sep=0pt, outer sep=0pt, scale=  1.40] at ( 43.37,248.94) {100};
\end{scope}
\begin{scope}
\path[clip] (  0.00,  0.00) rectangle (289.08,289.08);
\definecolor{drawColor}{gray}{0.20}

\path[draw=drawColor,line width= 0.6pt,line join=round] ( 45.57, 49.89) --
	( 48.32, 49.89);

\path[draw=drawColor,line width= 0.6pt,line join=round] ( 45.57,100.85) --
	( 48.32,100.85);

\path[draw=drawColor,line width= 0.6pt,line join=round] ( 45.57,151.82) --
	( 48.32,151.82);

\path[draw=drawColor,line width= 0.6pt,line join=round] ( 45.57,202.79) --
	( 48.32,202.79);

\path[draw=drawColor,line width= 0.6pt,line join=round] ( 45.57,253.76) --
	( 48.32,253.76);
\end{scope}
\begin{scope}
\path[clip] (  0.00,  0.00) rectangle (289.08,289.08);
\definecolor{drawColor}{gray}{0.20}

\path[draw=drawColor,line width= 0.6pt,line join=round] ( 94.66, 36.94) --
	( 94.66, 39.69);

\path[draw=drawColor,line width= 0.6pt,line join=round] (165.95, 36.94) --
	(165.95, 39.69);

\path[draw=drawColor,line width= 0.6pt,line join=round] (237.24, 36.94) --
	(237.24, 39.69);
\end{scope}
\begin{scope}
\path[clip] (  0.00,  0.00) rectangle (289.08,289.08);
\definecolor{drawColor}{gray}{0.30}

\node[text=drawColor,anchor=base,inner sep=0pt, outer sep=0pt, scale=  1.40] at ( 94.66, 25.10) {100};

\node[text=drawColor,anchor=base,inner sep=0pt, outer sep=0pt, scale=  1.40] at (165.95, 25.10) {200};

\node[text=drawColor,anchor=base,inner sep=0pt, outer sep=0pt, scale=  1.40] at (237.24, 25.10) {300};
\end{scope}
\begin{scope}
\path[clip] (  0.00,  0.00) rectangle (289.08,289.08);
\definecolor{drawColor}{RGB}{0,0,0}

\node[text=drawColor,anchor=base,inner sep=0pt, outer sep=0pt, scale=  1.60] at (165.95,  8.61) {\itshape $t$};
\end{scope}
\begin{scope}
\path[clip] (  0.00,  0.00) rectangle (289.08,289.08);
\definecolor{drawColor}{RGB}{0,0,0}

\node[text=drawColor,rotate= 90.00,anchor=base,inner sep=0pt, outer sep=0pt, scale=  1.60] at ( 16.52,151.82) {remaining taxa $\left[\%\right]$};
\end{scope}
\begin{scope}
\path[clip] (  0.00,  0.00) rectangle (289.08,289.08);
\definecolor{drawColor}{RGB}{0,0,0}

\node[text=drawColor,anchor=base,inner sep=0pt, outer sep=0pt, scale=  1.60] at (165.95,272.56) {(b) $k=10$};
\end{scope}
\end{tikzpicture}}
\end{minipage}
\begin{minipage}[t]{0.33\textwidth}
\scalebox{0.48}{
\begin{tikzpicture}[x=1pt,y=1pt]
\definecolor{fillColor}{RGB}{255,255,255}
\path[use as bounding box,fill=fillColor,fill opacity=0.00] (0,0) rectangle (289.08,289.08);
\begin{scope}
\path[clip] (  0.00,  0.00) rectangle (289.08,289.08);
\definecolor{drawColor}{RGB}{255,255,255}
\definecolor{fillColor}{RGB}{255,255,255}

\path[draw=drawColor,line width= 0.6pt,line join=round,line cap=round,fill=fillColor] (  0.00,  0.00) rectangle (289.08,289.08);
\end{scope}
\begin{scope}
\path[clip] ( 48.32, 39.69) rectangle (283.58,263.95);
\definecolor{fillColor}{gray}{0.92}

\path[fill=fillColor] ( 48.32, 39.69) rectangle (283.58,263.95);
\definecolor{drawColor}{RGB}{255,255,255}

\path[draw=drawColor,line width= 0.3pt,line join=round] ( 48.32, 75.37) --
	(283.58, 75.37);

\path[draw=drawColor,line width= 0.3pt,line join=round] ( 48.32,126.34) --
	(283.58,126.34);

\path[draw=drawColor,line width= 0.3pt,line join=round] ( 48.32,177.31) --
	(283.58,177.31);

\path[draw=drawColor,line width= 0.3pt,line join=round] ( 48.32,228.27) --
	(283.58,228.27);

\path[draw=drawColor,line width= 0.3pt,line join=round] ( 59.02, 39.69) --
	( 59.02,263.95);

\path[draw=drawColor,line width= 0.3pt,line join=round] (130.31, 39.69) --
	(130.31,263.95);

\path[draw=drawColor,line width= 0.3pt,line join=round] (201.60, 39.69) --
	(201.60,263.95);

\path[draw=drawColor,line width= 0.3pt,line join=round] (272.89, 39.69) --
	(272.89,263.95);

\path[draw=drawColor,line width= 0.6pt,line join=round] ( 48.32, 49.89) --
	(283.58, 49.89);

\path[draw=drawColor,line width= 0.6pt,line join=round] ( 48.32,100.85) --
	(283.58,100.85);

\path[draw=drawColor,line width= 0.6pt,line join=round] ( 48.32,151.82) --
	(283.58,151.82);

\path[draw=drawColor,line width= 0.6pt,line join=round] ( 48.32,202.79) --
	(283.58,202.79);

\path[draw=drawColor,line width= 0.6pt,line join=round] ( 48.32,253.76) --
	(283.58,253.76);

\path[draw=drawColor,line width= 0.6pt,line join=round] ( 94.66, 39.69) --
	( 94.66,263.95);

\path[draw=drawColor,line width= 0.6pt,line join=round] (165.95, 39.69) --
	(165.95,263.95);

\path[draw=drawColor,line width= 0.6pt,line join=round] (237.24, 39.69) --
	(237.24,263.95);
\definecolor{drawColor}{RGB}{0,0,0}

\path[draw=drawColor,line width= 0.6pt,line join=round] ( 59.02,210.81) --
	( 94.66,176.01) --
	(130.31,158.07) --
	(165.95,144.62) --
	(201.60,134.32) --
	(237.24,127.86) --
	(272.89,119.32);

\path[draw=drawColor,line width= 0.6pt,dash pattern=on 1pt off 3pt ,line join=round] ( 59.02,207.00) --
	( 94.66,167.18) --
	(130.31,146.29) --
	(165.95,129.46) --
	(201.60,118.39) --
	(237.24,110.59) --
	(272.89,102.74);

\path[draw=drawColor,line width= 0.6pt,dash pattern=on 4pt off 4pt ,line join=round] ( 59.02,203.20) --
	( 94.66,156.31) --
	(130.31,137.32) --
	(165.95,120.49) --
	(201.60,105.83) --
	(237.24,103.75) --
	(272.89, 94.19);
\definecolor{fillColor}{RGB}{0,0,0}

\path[draw=drawColor,line width= 0.4pt,line join=round,line cap=round,fill=fillColor] ( 59.02,210.81) circle (  1.96);

\path[draw=drawColor,line width= 0.4pt,line join=round,line cap=round,fill=fillColor] ( 94.66,176.01) circle (  1.96);

\path[draw=drawColor,line width= 0.4pt,line join=round,line cap=round,fill=fillColor] (130.31,158.07) circle (  1.96);

\path[draw=drawColor,line width= 0.4pt,line join=round,line cap=round,fill=fillColor] (165.95,144.62) circle (  1.96);

\path[draw=drawColor,line width= 0.4pt,line join=round,line cap=round,fill=fillColor] (201.60,134.32) circle (  1.96);

\path[draw=drawColor,line width= 0.4pt,line join=round,line cap=round,fill=fillColor] (237.24,127.86) circle (  1.96);

\path[draw=drawColor,line width= 0.4pt,line join=round,line cap=round,fill=fillColor] (272.89,119.32) circle (  1.96);

\path[draw=drawColor,line width= 0.4pt,line join=round,line cap=round,fill=fillColor] ( 59.02,207.00) circle (  1.96);

\path[draw=drawColor,line width= 0.4pt,line join=round,line cap=round,fill=fillColor] ( 94.66,167.18) circle (  1.96);

\path[draw=drawColor,line width= 0.4pt,line join=round,line cap=round,fill=fillColor] (130.31,146.29) circle (  1.96);

\path[draw=drawColor,line width= 0.4pt,line join=round,line cap=round,fill=fillColor] (165.95,129.46) circle (  1.96);

\path[draw=drawColor,line width= 0.4pt,line join=round,line cap=round,fill=fillColor] (201.60,118.39) circle (  1.96);

\path[draw=drawColor,line width= 0.4pt,line join=round,line cap=round,fill=fillColor] (237.24,110.59) circle (  1.96);

\path[draw=drawColor,line width= 0.4pt,line join=round,line cap=round,fill=fillColor] (272.89,102.74) circle (  1.96);

\path[draw=drawColor,line width= 0.4pt,line join=round,line cap=round,fill=fillColor] ( 59.02,203.20) circle (  1.96);

\path[draw=drawColor,line width= 0.4pt,line join=round,line cap=round,fill=fillColor] ( 94.66,156.31) circle (  1.96);

\path[draw=drawColor,line width= 0.4pt,line join=round,line cap=round,fill=fillColor] (130.31,137.32) circle (  1.96);

\path[draw=drawColor,line width= 0.4pt,line join=round,line cap=round,fill=fillColor] (165.95,120.49) circle (  1.96);

\path[draw=drawColor,line width= 0.4pt,line join=round,line cap=round,fill=fillColor] (201.60,105.83) circle (  1.96);

\path[draw=drawColor,line width= 0.4pt,line join=round,line cap=round,fill=fillColor] (237.24,103.75) circle (  1.96);

\path[draw=drawColor,line width= 0.4pt,line join=round,line cap=round,fill=fillColor] (272.89, 94.19) circle (  1.96);
\end{scope}
\begin{scope}
\path[clip] (  0.00,  0.00) rectangle (289.08,289.08);
\definecolor{drawColor}{gray}{0.30}

\node[text=drawColor,anchor=base east,inner sep=0pt, outer sep=0pt, scale=  1.40] at ( 43.37, 45.07) {0};

\node[text=drawColor,anchor=base east,inner sep=0pt, outer sep=0pt, scale=  1.40] at ( 43.37, 96.03) {25};

\node[text=drawColor,anchor=base east,inner sep=0pt, outer sep=0pt, scale=  1.40] at ( 43.37,147.00) {50};

\node[text=drawColor,anchor=base east,inner sep=0pt, outer sep=0pt, scale=  1.40] at ( 43.37,197.97) {75};

\node[text=drawColor,anchor=base east,inner sep=0pt, outer sep=0pt, scale=  1.40] at ( 43.37,248.94) {100};
\end{scope}
\begin{scope}
\path[clip] (  0.00,  0.00) rectangle (289.08,289.08);
\definecolor{drawColor}{gray}{0.20}

\path[draw=drawColor,line width= 0.6pt,line join=round] ( 45.57, 49.89) --
	( 48.32, 49.89);

\path[draw=drawColor,line width= 0.6pt,line join=round] ( 45.57,100.85) --
	( 48.32,100.85);

\path[draw=drawColor,line width= 0.6pt,line join=round] ( 45.57,151.82) --
	( 48.32,151.82);

\path[draw=drawColor,line width= 0.6pt,line join=round] ( 45.57,202.79) --
	( 48.32,202.79);

\path[draw=drawColor,line width= 0.6pt,line join=round] ( 45.57,253.76) --
	( 48.32,253.76);
\end{scope}
\begin{scope}
\path[clip] (  0.00,  0.00) rectangle (289.08,289.08);
\definecolor{drawColor}{gray}{0.20}

\path[draw=drawColor,line width= 0.6pt,line join=round] ( 94.66, 36.94) --
	( 94.66, 39.69);

\path[draw=drawColor,line width= 0.6pt,line join=round] (165.95, 36.94) --
	(165.95, 39.69);

\path[draw=drawColor,line width= 0.6pt,line join=round] (237.24, 36.94) --
	(237.24, 39.69);
\end{scope}
\begin{scope}
\path[clip] (  0.00,  0.00) rectangle (289.08,289.08);
\definecolor{drawColor}{gray}{0.30}

\node[text=drawColor,anchor=base,inner sep=0pt, outer sep=0pt, scale=  1.40] at ( 94.66, 25.10) {100};

\node[text=drawColor,anchor=base,inner sep=0pt, outer sep=0pt, scale=  1.40] at (165.95, 25.10) {200};

\node[text=drawColor,anchor=base,inner sep=0pt, outer sep=0pt, scale=  1.40] at (237.24, 25.10) {300};
\end{scope}
\begin{scope}
\path[clip] (  0.00,  0.00) rectangle (289.08,289.08);
\definecolor{drawColor}{RGB}{0,0,0}

\node[text=drawColor,anchor=base,inner sep=0pt, outer sep=0pt, scale=  1.60] at (165.95,  8.61) {\itshape $t$};
\end{scope}
\begin{scope}
\path[clip] (  0.00,  0.00) rectangle (289.08,289.08);
\definecolor{drawColor}{RGB}{0,0,0}

\node[text=drawColor,rotate= 90.00,anchor=base,inner sep=0pt, outer sep=0pt, scale=  1.60] at ( 16.52,151.82) {remaining taxa $\left[\%\right]$};
\end{scope}
\begin{scope}
\path[clip] (  0.00,  0.00) rectangle (289.08,289.08);
\definecolor{drawColor}{RGB}{0,0,0}

\node[text=drawColor,anchor=base,inner sep=0pt, outer sep=0pt, scale=  1.60] at (165.95,272.56) {(c) $k=15$};
\end{scope}
\end{tikzpicture}}
\end{minipage}
\begin{minipage}[t]{0.33\textwidth}
\scalebox{0.48}{
\begin{tikzpicture}[x=1pt,y=1pt]
\definecolor{fillColor}{RGB}{255,255,255}
\path[use as bounding box,fill=fillColor,fill opacity=0.00] (0,0) rectangle (289.08,289.08);
\begin{scope}
\path[clip] (  0.00,  0.00) rectangle (289.08,289.08);
\definecolor{drawColor}{RGB}{255,255,255}
\definecolor{fillColor}{RGB}{255,255,255}

\path[draw=drawColor,line width= 0.6pt,line join=round,line cap=round,fill=fillColor] (  0.00,  0.00) rectangle (289.08,289.08);
\end{scope}
\begin{scope}
\path[clip] ( 48.32, 39.69) rectangle (283.58,263.95);
\definecolor{fillColor}{gray}{0.92}

\path[fill=fillColor] ( 48.32, 39.69) rectangle (283.58,263.95);
\definecolor{drawColor}{RGB}{255,255,255}

\path[draw=drawColor,line width= 0.3pt,line join=round] ( 48.32, 75.37) --
	(283.58, 75.37);

\path[draw=drawColor,line width= 0.3pt,line join=round] ( 48.32,126.34) --
	(283.58,126.34);

\path[draw=drawColor,line width= 0.3pt,line join=round] ( 48.32,177.31) --
	(283.58,177.31);

\path[draw=drawColor,line width= 0.3pt,line join=round] ( 48.32,228.27) --
	(283.58,228.27);

\path[draw=drawColor,line width= 0.3pt,line join=round] ( 59.02, 39.69) --
	( 59.02,263.95);

\path[draw=drawColor,line width= 0.3pt,line join=round] (130.31, 39.69) --
	(130.31,263.95);

\path[draw=drawColor,line width= 0.3pt,line join=round] (201.60, 39.69) --
	(201.60,263.95);

\path[draw=drawColor,line width= 0.3pt,line join=round] (272.89, 39.69) --
	(272.89,263.95);

\path[draw=drawColor,line width= 0.6pt,line join=round] ( 48.32, 49.89) --
	(283.58, 49.89);

\path[draw=drawColor,line width= 0.6pt,line join=round] ( 48.32,100.85) --
	(283.58,100.85);

\path[draw=drawColor,line width= 0.6pt,line join=round] ( 48.32,151.82) --
	(283.58,151.82);

\path[draw=drawColor,line width= 0.6pt,line join=round] ( 48.32,202.79) --
	(283.58,202.79);

\path[draw=drawColor,line width= 0.6pt,line join=round] ( 48.32,253.76) --
	(283.58,253.76);

\path[draw=drawColor,line width= 0.6pt,line join=round] ( 94.66, 39.69) --
	( 94.66,263.95);

\path[draw=drawColor,line width= 0.6pt,line join=round] (165.95, 39.69) --
	(165.95,263.95);

\path[draw=drawColor,line width= 0.6pt,line join=round] (237.24, 39.69) --
	(237.24,263.95);
\definecolor{drawColor}{RGB}{0,0,0}

\path[draw=drawColor,line width= 0.6pt,line join=round] ( 59.02,220.05) --
	( 94.66,190.56) --
	(130.31,175.38) --
	(165.95,157.87) --
	(201.60,146.82) --
	(237.24,137.23) --
	(272.89,132.17);

\path[draw=drawColor,line width= 0.6pt,dash pattern=on 1pt off 3pt ,line join=round] ( 59.02,217.60) --
	( 94.66,182.81) --
	(130.31,164.33) --
	(165.95,144.82) --
	(201.60,131.76) --
	(237.24,122.24) --
	(272.89,114.47);

\path[draw=drawColor,line width= 0.6pt,dash pattern=on 4pt off 4pt ,line join=round] ( 59.02,213.80) --
	( 94.66,171.26) --
	(130.31,157.53) --
	(165.95,137.89) --
	(201.60,126.27) --
	(237.24,112.68) --
	(272.89,104.33);
\definecolor{fillColor}{RGB}{0,0,0}

\path[draw=drawColor,line width= 0.4pt,line join=round,line cap=round,fill=fillColor] ( 59.02,220.05) circle (  1.96);

\path[draw=drawColor,line width= 0.4pt,line join=round,line cap=round,fill=fillColor] ( 94.66,190.56) circle (  1.96);

\path[draw=drawColor,line width= 0.4pt,line join=round,line cap=round,fill=fillColor] (130.31,175.38) circle (  1.96);

\path[draw=drawColor,line width= 0.4pt,line join=round,line cap=round,fill=fillColor] (165.95,157.87) circle (  1.96);

\path[draw=drawColor,line width= 0.4pt,line join=round,line cap=round,fill=fillColor] (201.60,146.82) circle (  1.96);

\path[draw=drawColor,line width= 0.4pt,line join=round,line cap=round,fill=fillColor] (237.24,137.23) circle (  1.96);

\path[draw=drawColor,line width= 0.4pt,line join=round,line cap=round,fill=fillColor] (272.89,132.17) circle (  1.96);

\path[draw=drawColor,line width= 0.4pt,line join=round,line cap=round,fill=fillColor] ( 59.02,217.60) circle (  1.96);

\path[draw=drawColor,line width= 0.4pt,line join=round,line cap=round,fill=fillColor] ( 94.66,182.81) circle (  1.96);

\path[draw=drawColor,line width= 0.4pt,line join=round,line cap=round,fill=fillColor] (130.31,164.33) circle (  1.96);

\path[draw=drawColor,line width= 0.4pt,line join=round,line cap=round,fill=fillColor] (165.95,144.82) circle (  1.96);

\path[draw=drawColor,line width= 0.4pt,line join=round,line cap=round,fill=fillColor] (201.60,131.76) circle (  1.96);

\path[draw=drawColor,line width= 0.4pt,line join=round,line cap=round,fill=fillColor] (237.24,122.24) circle (  1.96);

\path[draw=drawColor,line width= 0.4pt,line join=round,line cap=round,fill=fillColor] (272.89,114.47) circle (  1.96);

\path[draw=drawColor,line width= 0.4pt,line join=round,line cap=round,fill=fillColor] ( 59.02,213.80) circle (  1.96);

\path[draw=drawColor,line width= 0.4pt,line join=round,line cap=round,fill=fillColor] ( 94.66,171.26) circle (  1.96);

\path[draw=drawColor,line width= 0.4pt,line join=round,line cap=round,fill=fillColor] (130.31,157.53) circle (  1.96);

\path[draw=drawColor,line width= 0.4pt,line join=round,line cap=round,fill=fillColor] (165.95,137.89) circle (  1.96);

\path[draw=drawColor,line width= 0.4pt,line join=round,line cap=round,fill=fillColor] (201.60,126.27) circle (  1.96);

\path[draw=drawColor,line width= 0.4pt,line join=round,line cap=round,fill=fillColor] (237.24,112.68) circle (  1.96);

\path[draw=drawColor,line width= 0.4pt,line join=round,line cap=round,fill=fillColor] (272.89,104.33) circle (  1.96);
\end{scope}
\begin{scope}
\path[clip] (  0.00,  0.00) rectangle (289.08,289.08);
\definecolor{drawColor}{gray}{0.30}

\node[text=drawColor,anchor=base east,inner sep=0pt, outer sep=0pt, scale=  1.40] at ( 43.37, 45.07) {0};

\node[text=drawColor,anchor=base east,inner sep=0pt, outer sep=0pt, scale=  1.40] at ( 43.37, 96.03) {25};

\node[text=drawColor,anchor=base east,inner sep=0pt, outer sep=0pt, scale=  1.40] at ( 43.37,147.00) {50};

\node[text=drawColor,anchor=base east,inner sep=0pt, outer sep=0pt, scale=  1.40] at ( 43.37,197.97) {75};

\node[text=drawColor,anchor=base east,inner sep=0pt, outer sep=0pt, scale=  1.40] at ( 43.37,248.94) {100};
\end{scope}
\begin{scope}
\path[clip] (  0.00,  0.00) rectangle (289.08,289.08);
\definecolor{drawColor}{gray}{0.20}

\path[draw=drawColor,line width= 0.6pt,line join=round] ( 45.57, 49.89) --
	( 48.32, 49.89);

\path[draw=drawColor,line width= 0.6pt,line join=round] ( 45.57,100.85) --
	( 48.32,100.85);

\path[draw=drawColor,line width= 0.6pt,line join=round] ( 45.57,151.82) --
	( 48.32,151.82);

\path[draw=drawColor,line width= 0.6pt,line join=round] ( 45.57,202.79) --
	( 48.32,202.79);

\path[draw=drawColor,line width= 0.6pt,line join=round] ( 45.57,253.76) --
	( 48.32,253.76);
\end{scope}
\begin{scope}
\path[clip] (  0.00,  0.00) rectangle (289.08,289.08);
\definecolor{drawColor}{gray}{0.20}

\path[draw=drawColor,line width= 0.6pt,line join=round] ( 94.66, 36.94) --
	( 94.66, 39.69);

\path[draw=drawColor,line width= 0.6pt,line join=round] (165.95, 36.94) --
	(165.95, 39.69);

\path[draw=drawColor,line width= 0.6pt,line join=round] (237.24, 36.94) --
	(237.24, 39.69);
\end{scope}
\begin{scope}
\path[clip] (  0.00,  0.00) rectangle (289.08,289.08);
\definecolor{drawColor}{gray}{0.30}

\node[text=drawColor,anchor=base,inner sep=0pt, outer sep=0pt, scale=  1.40] at ( 94.66, 25.10) {100};

\node[text=drawColor,anchor=base,inner sep=0pt, outer sep=0pt, scale=  1.40] at (165.95, 25.10) {200};

\node[text=drawColor,anchor=base,inner sep=0pt, outer sep=0pt, scale=  1.40] at (237.24, 25.10) {300};
\end{scope}
\begin{scope}
\path[clip] (  0.00,  0.00) rectangle (289.08,289.08);
\definecolor{drawColor}{RGB}{0,0,0}

\node[text=drawColor,anchor=base,inner sep=0pt, outer sep=0pt, scale=  1.60] at (165.95,  8.61) {\itshape $t$};
\end{scope}
\begin{scope}
\path[clip] (  0.00,  0.00) rectangle (289.08,289.08);
\definecolor{drawColor}{RGB}{0,0,0}

\node[text=drawColor,rotate= 90.00,anchor=base,inner sep=0pt, outer sep=0pt, scale=  1.60] at ( 16.52,151.82) {remaining taxa $\left[\%\right]$};
\end{scope}
\begin{scope}
\path[clip] (  0.00,  0.00) rectangle (289.08,289.08);
\definecolor{drawColor}{RGB}{0,0,0}

\node[text=drawColor,anchor=base,inner sep=0pt, outer sep=0pt, scale=  1.60] at (165.95,272.56) {(d) $k=20$};
\end{scope}
\end{tikzpicture}}
\end{minipage}
\begin{minipage}[t]{0.33\textwidth}
\scalebox{0.48}{
\begin{tikzpicture}[x=1pt,y=1pt]
\definecolor{fillColor}{RGB}{255,255,255}
\path[use as bounding box,fill=fillColor,fill opacity=0.00] (0,0) rectangle (289.08,289.08);
\begin{scope}
\path[clip] (  0.00,  0.00) rectangle (289.08,289.08);
\definecolor{drawColor}{RGB}{255,255,255}
\definecolor{fillColor}{RGB}{255,255,255}

\path[draw=drawColor,line width= 0.6pt,line join=round,line cap=round,fill=fillColor] (  0.00,  0.00) rectangle (289.08,289.08);
\end{scope}
\begin{scope}
\path[clip] ( 48.32, 39.69) rectangle (283.58,263.95);
\definecolor{fillColor}{gray}{0.92}

\path[fill=fillColor] ( 48.32, 39.69) rectangle (283.58,263.95);
\definecolor{drawColor}{RGB}{255,255,255}

\path[draw=drawColor,line width= 0.3pt,line join=round] ( 48.32, 75.37) --
	(283.58, 75.37);

\path[draw=drawColor,line width= 0.3pt,line join=round] ( 48.32,126.34) --
	(283.58,126.34);

\path[draw=drawColor,line width= 0.3pt,line join=round] ( 48.32,177.31) --
	(283.58,177.31);

\path[draw=drawColor,line width= 0.3pt,line join=round] ( 48.32,228.27) --
	(283.58,228.27);

\path[draw=drawColor,line width= 0.3pt,line join=round] ( 59.02, 39.69) --
	( 59.02,263.95);

\path[draw=drawColor,line width= 0.3pt,line join=round] (130.31, 39.69) --
	(130.31,263.95);

\path[draw=drawColor,line width= 0.3pt,line join=round] (201.60, 39.69) --
	(201.60,263.95);

\path[draw=drawColor,line width= 0.3pt,line join=round] (272.89, 39.69) --
	(272.89,263.95);

\path[draw=drawColor,line width= 0.6pt,line join=round] ( 48.32, 49.89) --
	(283.58, 49.89);

\path[draw=drawColor,line width= 0.6pt,line join=round] ( 48.32,100.85) --
	(283.58,100.85);

\path[draw=drawColor,line width= 0.6pt,line join=round] ( 48.32,151.82) --
	(283.58,151.82);

\path[draw=drawColor,line width= 0.6pt,line join=round] ( 48.32,202.79) --
	(283.58,202.79);

\path[draw=drawColor,line width= 0.6pt,line join=round] ( 48.32,253.76) --
	(283.58,253.76);

\path[draw=drawColor,line width= 0.6pt,line join=round] ( 94.66, 39.69) --
	( 94.66,263.95);

\path[draw=drawColor,line width= 0.6pt,line join=round] (165.95, 39.69) --
	(165.95,263.95);

\path[draw=drawColor,line width= 0.6pt,line join=round] (237.24, 39.69) --
	(237.24,263.95);
\definecolor{drawColor}{RGB}{0,0,0}

\path[draw=drawColor,line width= 0.6pt,line join=round] ( 59.02,227.39) --
	( 94.66,200.75) --
	(130.31,180.73) --
	(165.95,166.30) --
	(201.60,158.40) --
	(237.24,150.64) --
	(272.89,139.08);

\path[draw=drawColor,line width= 0.6pt,dash pattern=on 1pt off 3pt ,line join=round] ( 59.02,226.03) --
	( 94.66,196.67) --
	(130.31,172.66) --
	(165.95,156.44) --
	(201.60,144.97) --
	(237.24,137.05) --
	(272.89,124.41);

\path[draw=drawColor,line width= 0.6pt,dash pattern=on 4pt off 4pt ,line join=round] ( 59.02,221.14) --
	( 94.66,188.25) --
	(130.31,163.06) --
	(165.95,151.21) --
	(201.60,134.53) --
	(237.24,127.31) --
	(272.89,116.60);
\definecolor{fillColor}{RGB}{0,0,0}

\path[draw=drawColor,line width= 0.4pt,line join=round,line cap=round,fill=fillColor] ( 59.02,227.39) circle (  1.96);

\path[draw=drawColor,line width= 0.4pt,line join=round,line cap=round,fill=fillColor] ( 94.66,200.75) circle (  1.96);

\path[draw=drawColor,line width= 0.4pt,line join=round,line cap=round,fill=fillColor] (130.31,180.73) circle (  1.96);

\path[draw=drawColor,line width= 0.4pt,line join=round,line cap=round,fill=fillColor] (165.95,166.30) circle (  1.96);

\path[draw=drawColor,line width= 0.4pt,line join=round,line cap=round,fill=fillColor] (201.60,158.40) circle (  1.96);

\path[draw=drawColor,line width= 0.4pt,line join=round,line cap=round,fill=fillColor] (237.24,150.64) circle (  1.96);

\path[draw=drawColor,line width= 0.4pt,line join=round,line cap=round,fill=fillColor] (272.89,139.08) circle (  1.96);

\path[draw=drawColor,line width= 0.4pt,line join=round,line cap=round,fill=fillColor] ( 59.02,226.03) circle (  1.96);

\path[draw=drawColor,line width= 0.4pt,line join=round,line cap=round,fill=fillColor] ( 94.66,196.67) circle (  1.96);

\path[draw=drawColor,line width= 0.4pt,line join=round,line cap=round,fill=fillColor] (130.31,172.66) circle (  1.96);

\path[draw=drawColor,line width= 0.4pt,line join=round,line cap=round,fill=fillColor] (165.95,156.44) circle (  1.96);

\path[draw=drawColor,line width= 0.4pt,line join=round,line cap=round,fill=fillColor] (201.60,144.97) circle (  1.96);

\path[draw=drawColor,line width= 0.4pt,line join=round,line cap=round,fill=fillColor] (237.24,137.05) circle (  1.96);

\path[draw=drawColor,line width= 0.4pt,line join=round,line cap=round,fill=fillColor] (272.89,124.41) circle (  1.96);

\path[draw=drawColor,line width= 0.4pt,line join=round,line cap=round,fill=fillColor] ( 59.02,221.14) circle (  1.96);

\path[draw=drawColor,line width= 0.4pt,line join=round,line cap=round,fill=fillColor] ( 94.66,188.25) circle (  1.96);

\path[draw=drawColor,line width= 0.4pt,line join=round,line cap=round,fill=fillColor] (130.31,163.06) circle (  1.96);

\path[draw=drawColor,line width= 0.4pt,line join=round,line cap=round,fill=fillColor] (165.95,151.21) circle (  1.96);

\path[draw=drawColor,line width= 0.4pt,line join=round,line cap=round,fill=fillColor] (201.60,134.53) circle (  1.96);

\path[draw=drawColor,line width= 0.4pt,line join=round,line cap=round,fill=fillColor] (237.24,127.31) circle (  1.96);

\path[draw=drawColor,line width= 0.4pt,line join=round,line cap=round,fill=fillColor] (272.89,116.60) circle (  1.96);
\end{scope}
\begin{scope}
\path[clip] (  0.00,  0.00) rectangle (289.08,289.08);
\definecolor{drawColor}{gray}{0.30}

\node[text=drawColor,anchor=base east,inner sep=0pt, outer sep=0pt, scale=  1.40] at ( 43.37, 45.07) {0};

\node[text=drawColor,anchor=base east,inner sep=0pt, outer sep=0pt, scale=  1.40] at ( 43.37, 96.03) {25};

\node[text=drawColor,anchor=base east,inner sep=0pt, outer sep=0pt, scale=  1.40] at ( 43.37,147.00) {50};

\node[text=drawColor,anchor=base east,inner sep=0pt, outer sep=0pt, scale=  1.40] at ( 43.37,197.97) {75};

\node[text=drawColor,anchor=base east,inner sep=0pt, outer sep=0pt, scale=  1.40] at ( 43.37,248.94) {100};
\end{scope}
\begin{scope}
\path[clip] (  0.00,  0.00) rectangle (289.08,289.08);
\definecolor{drawColor}{gray}{0.20}

\path[draw=drawColor,line width= 0.6pt,line join=round] ( 45.57, 49.89) --
	( 48.32, 49.89);

\path[draw=drawColor,line width= 0.6pt,line join=round] ( 45.57,100.85) --
	( 48.32,100.85);

\path[draw=drawColor,line width= 0.6pt,line join=round] ( 45.57,151.82) --
	( 48.32,151.82);

\path[draw=drawColor,line width= 0.6pt,line join=round] ( 45.57,202.79) --
	( 48.32,202.79);

\path[draw=drawColor,line width= 0.6pt,line join=round] ( 45.57,253.76) --
	( 48.32,253.76);
\end{scope}
\begin{scope}
\path[clip] (  0.00,  0.00) rectangle (289.08,289.08);
\definecolor{drawColor}{gray}{0.20}

\path[draw=drawColor,line width= 0.6pt,line join=round] ( 94.66, 36.94) --
	( 94.66, 39.69);

\path[draw=drawColor,line width= 0.6pt,line join=round] (165.95, 36.94) --
	(165.95, 39.69);

\path[draw=drawColor,line width= 0.6pt,line join=round] (237.24, 36.94) --
	(237.24, 39.69);
\end{scope}
\begin{scope}
\path[clip] (  0.00,  0.00) rectangle (289.08,289.08);
\definecolor{drawColor}{gray}{0.30}

\node[text=drawColor,anchor=base,inner sep=0pt, outer sep=0pt, scale=  1.40] at ( 94.66, 25.10) {100};

\node[text=drawColor,anchor=base,inner sep=0pt, outer sep=0pt, scale=  1.40] at (165.95, 25.10) {200};

\node[text=drawColor,anchor=base,inner sep=0pt, outer sep=0pt, scale=  1.40] at (237.24, 25.10) {300};
\end{scope}
\begin{scope}
\path[clip] (  0.00,  0.00) rectangle (289.08,289.08);
\definecolor{drawColor}{RGB}{0,0,0}

\node[text=drawColor,anchor=base,inner sep=0pt, outer sep=0pt, scale=  1.60] at (165.95,  8.61) {\itshape $t$};
\end{scope}
\begin{scope}
\path[clip] (  0.00,  0.00) rectangle (289.08,289.08);
\definecolor{drawColor}{RGB}{0,0,0}

\node[text=drawColor,rotate= 90.00,anchor=base,inner sep=0pt, outer sep=0pt, scale=  1.60] at ( 16.52,151.82) {remaining taxa $\left[\%\right]$};
\end{scope}
\begin{scope}
\path[clip] (  0.00,  0.00) rectangle (289.08,289.08);
\definecolor{drawColor}{RGB}{0,0,0}

\node[text=drawColor,anchor=base,inner sep=0pt, outer sep=0pt, scale=  1.60] at (165.95,272.56) {(e) $k=25$};
\end{scope}
\end{tikzpicture}}
\end{minipage}
\begin{minipage}[t]{0.33\textwidth}
\scalebox{0.48}{
\begin{tikzpicture}[x=1pt,y=1pt]
\definecolor{fillColor}{RGB}{255,255,255}
\path[use as bounding box,fill=fillColor,fill opacity=0.00] (0,0) rectangle (289.08,289.08);
\begin{scope}
\path[clip] (  0.00,  0.00) rectangle (289.08,289.08);
\definecolor{drawColor}{RGB}{255,255,255}
\definecolor{fillColor}{RGB}{255,255,255}

\path[draw=drawColor,line width= 0.6pt,line join=round,line cap=round,fill=fillColor] (  0.00,  0.00) rectangle (289.08,289.08);
\end{scope}
\begin{scope}
\path[clip] ( 48.32, 39.69) rectangle (283.58,263.95);
\definecolor{fillColor}{gray}{0.92}

\path[fill=fillColor] ( 48.32, 39.69) rectangle (283.58,263.95);
\definecolor{drawColor}{RGB}{255,255,255}

\path[draw=drawColor,line width= 0.3pt,line join=round] ( 48.32, 75.37) --
	(283.58, 75.37);

\path[draw=drawColor,line width= 0.3pt,line join=round] ( 48.32,126.34) --
	(283.58,126.34);

\path[draw=drawColor,line width= 0.3pt,line join=round] ( 48.32,177.31) --
	(283.58,177.31);

\path[draw=drawColor,line width= 0.3pt,line join=round] ( 48.32,228.27) --
	(283.58,228.27);

\path[draw=drawColor,line width= 0.3pt,line join=round] ( 59.02, 39.69) --
	( 59.02,263.95);

\path[draw=drawColor,line width= 0.3pt,line join=round] (130.31, 39.69) --
	(130.31,263.95);

\path[draw=drawColor,line width= 0.3pt,line join=round] (201.60, 39.69) --
	(201.60,263.95);

\path[draw=drawColor,line width= 0.3pt,line join=round] (272.89, 39.69) --
	(272.89,263.95);

\path[draw=drawColor,line width= 0.6pt,line join=round] ( 48.32, 49.89) --
	(283.58, 49.89);

\path[draw=drawColor,line width= 0.6pt,line join=round] ( 48.32,100.85) --
	(283.58,100.85);

\path[draw=drawColor,line width= 0.6pt,line join=round] ( 48.32,151.82) --
	(283.58,151.82);

\path[draw=drawColor,line width= 0.6pt,line join=round] ( 48.32,202.79) --
	(283.58,202.79);

\path[draw=drawColor,line width= 0.6pt,line join=round] ( 48.32,253.76) --
	(283.58,253.76);

\path[draw=drawColor,line width= 0.6pt,line join=round] ( 94.66, 39.69) --
	( 94.66,263.95);

\path[draw=drawColor,line width= 0.6pt,line join=round] (165.95, 39.69) --
	(165.95,263.95);

\path[draw=drawColor,line width= 0.6pt,line join=round] (237.24, 39.69) --
	(237.24,263.95);
\definecolor{drawColor}{RGB}{0,0,0}

\path[draw=drawColor,line width= 0.6pt,line join=round] ( 59.02,233.37) --
	( 94.66,206.73) --
	(130.31,186.62) --
	(165.95,176.49) --
	(201.60,171.56) --
	(237.24,155.54) --
	(272.89,152.56);

\path[draw=drawColor,line width= 0.6pt,dash pattern=on 1pt off 3pt ,line join=round] ( 59.02,233.37) --
	( 94.66,203.20) --
	(130.31,179.73) --
	(165.95,168.34) --
	(201.60,157.69) --
	(237.24,143.12) --
	(272.89,138.70);

\path[draw=drawColor,line width= 0.6pt,dash pattern=on 4pt off 4pt ,line join=round] ( 59.02,233.10) --
	( 94.66,196.13) --
	(130.31,172.30) --
	(165.95,160.93) --
	(201.60,147.64) --
	(237.24,135.74) --
	(272.89,126.74);
\definecolor{fillColor}{RGB}{0,0,0}

\path[draw=drawColor,line width= 0.4pt,line join=round,line cap=round,fill=fillColor] ( 59.02,233.37) circle (  1.96);

\path[draw=drawColor,line width= 0.4pt,line join=round,line cap=round,fill=fillColor] ( 94.66,206.73) circle (  1.96);

\path[draw=drawColor,line width= 0.4pt,line join=round,line cap=round,fill=fillColor] (130.31,186.62) circle (  1.96);

\path[draw=drawColor,line width= 0.4pt,line join=round,line cap=round,fill=fillColor] (165.95,176.49) circle (  1.96);

\path[draw=drawColor,line width= 0.4pt,line join=round,line cap=round,fill=fillColor] (201.60,171.56) circle (  1.96);

\path[draw=drawColor,line width= 0.4pt,line join=round,line cap=round,fill=fillColor] (237.24,155.54) circle (  1.96);

\path[draw=drawColor,line width= 0.4pt,line join=round,line cap=round,fill=fillColor] (272.89,152.56) circle (  1.96);

\path[draw=drawColor,line width= 0.4pt,line join=round,line cap=round,fill=fillColor] ( 59.02,233.37) circle (  1.96);

\path[draw=drawColor,line width= 0.4pt,line join=round,line cap=round,fill=fillColor] ( 94.66,203.20) circle (  1.96);

\path[draw=drawColor,line width= 0.4pt,line join=round,line cap=round,fill=fillColor] (130.31,179.73) circle (  1.96);

\path[draw=drawColor,line width= 0.4pt,line join=round,line cap=round,fill=fillColor] (165.95,168.34) circle (  1.96);

\path[draw=drawColor,line width= 0.4pt,line join=round,line cap=round,fill=fillColor] (201.60,157.69) circle (  1.96);

\path[draw=drawColor,line width= 0.4pt,line join=round,line cap=round,fill=fillColor] (237.24,143.12) circle (  1.96);

\path[draw=drawColor,line width= 0.4pt,line join=round,line cap=round,fill=fillColor] (272.89,138.70) circle (  1.96);

\path[draw=drawColor,line width= 0.4pt,line join=round,line cap=round,fill=fillColor] ( 59.02,233.10) circle (  1.96);

\path[draw=drawColor,line width= 0.4pt,line join=round,line cap=round,fill=fillColor] ( 94.66,196.13) circle (  1.96);

\path[draw=drawColor,line width= 0.4pt,line join=round,line cap=round,fill=fillColor] (130.31,172.30) circle (  1.96);

\path[draw=drawColor,line width= 0.4pt,line join=round,line cap=round,fill=fillColor] (165.95,160.93) circle (  1.96);

\path[draw=drawColor,line width= 0.4pt,line join=round,line cap=round,fill=fillColor] (201.60,147.64) circle (  1.96);

\path[draw=drawColor,line width= 0.4pt,line join=round,line cap=round,fill=fillColor] (237.24,135.74) circle (  1.96);

\path[draw=drawColor,line width= 0.4pt,line join=round,line cap=round,fill=fillColor] (272.89,126.74) circle (  1.96);
\end{scope}
\begin{scope}
\path[clip] (  0.00,  0.00) rectangle (289.08,289.08);
\definecolor{drawColor}{gray}{0.30}

\node[text=drawColor,anchor=base east,inner sep=0pt, outer sep=0pt, scale=  1.40] at ( 43.37, 45.07) {0};

\node[text=drawColor,anchor=base east,inner sep=0pt, outer sep=0pt, scale=  1.40] at ( 43.37, 96.03) {25};

\node[text=drawColor,anchor=base east,inner sep=0pt, outer sep=0pt, scale=  1.40] at ( 43.37,147.00) {50};

\node[text=drawColor,anchor=base east,inner sep=0pt, outer sep=0pt, scale=  1.40] at ( 43.37,197.97) {75};

\node[text=drawColor,anchor=base east,inner sep=0pt, outer sep=0pt, scale=  1.40] at ( 43.37,248.94) {100};
\end{scope}
\begin{scope}
\path[clip] (  0.00,  0.00) rectangle (289.08,289.08);
\definecolor{drawColor}{gray}{0.20}

\path[draw=drawColor,line width= 0.6pt,line join=round] ( 45.57, 49.89) --
	( 48.32, 49.89);

\path[draw=drawColor,line width= 0.6pt,line join=round] ( 45.57,100.85) --
	( 48.32,100.85);

\path[draw=drawColor,line width= 0.6pt,line join=round] ( 45.57,151.82) --
	( 48.32,151.82);

\path[draw=drawColor,line width= 0.6pt,line join=round] ( 45.57,202.79) --
	( 48.32,202.79);

\path[draw=drawColor,line width= 0.6pt,line join=round] ( 45.57,253.76) --
	( 48.32,253.76);
\end{scope}
\begin{scope}
\path[clip] (  0.00,  0.00) rectangle (289.08,289.08);
\definecolor{drawColor}{gray}{0.20}

\path[draw=drawColor,line width= 0.6pt,line join=round] ( 94.66, 36.94) --
	( 94.66, 39.69);

\path[draw=drawColor,line width= 0.6pt,line join=round] (165.95, 36.94) --
	(165.95, 39.69);

\path[draw=drawColor,line width= 0.6pt,line join=round] (237.24, 36.94) --
	(237.24, 39.69);
\end{scope}
\begin{scope}
\path[clip] (  0.00,  0.00) rectangle (289.08,289.08);
\definecolor{drawColor}{gray}{0.30}

\node[text=drawColor,anchor=base,inner sep=0pt, outer sep=0pt, scale=  1.40] at ( 94.66, 25.10) {100};

\node[text=drawColor,anchor=base,inner sep=0pt, outer sep=0pt, scale=  1.40] at (165.95, 25.10) {200};

\node[text=drawColor,anchor=base,inner sep=0pt, outer sep=0pt, scale=  1.40] at (237.24, 25.10) {300};
\end{scope}
\begin{scope}
\path[clip] (  0.00,  0.00) rectangle (289.08,289.08);
\definecolor{drawColor}{RGB}{0,0,0}

\node[text=drawColor,anchor=base,inner sep=0pt, outer sep=0pt, scale=  1.60] at (165.95,  8.61) {\itshape $t$};
\end{scope}
\begin{scope}
\path[clip] (  0.00,  0.00) rectangle (289.08,289.08);
\definecolor{drawColor}{RGB}{0,0,0}

\node[text=drawColor,rotate= 90.00,anchor=base,inner sep=0pt, outer sep=0pt, scale=  1.60] at ( 16.52,151.82) {remaining taxa $\left[\%\right]$};
\end{scope}
\begin{scope}
\path[clip] (  0.00,  0.00) rectangle (289.08,289.08);
\definecolor{drawColor}{RGB}{0,0,0}

\node[text=drawColor,anchor=base,inner sep=0pt, outer sep=0pt, scale=  1.60] at (165.95,272.56) {(f) $k=30$};
\end{scope}
\end{tikzpicture}}
\end{minipage}
\begin{minipage}[t]{0.33\textwidth}
\scalebox{0.48}{
\begin{tikzpicture}[x=1pt,y=1pt]
\definecolor{fillColor}{RGB}{255,255,255}
\path[use as bounding box,fill=fillColor,fill opacity=0.00] (0,0) rectangle (289.08,289.08);
\begin{scope}
\path[clip] (  0.00,  0.00) rectangle (289.08,289.08);
\definecolor{drawColor}{RGB}{255,255,255}
\definecolor{fillColor}{RGB}{255,255,255}

\path[draw=drawColor,line width= 0.6pt,line join=round,line cap=round,fill=fillColor] (  0.00,  0.00) rectangle (289.08,289.08);
\end{scope}
\begin{scope}
\path[clip] ( 48.32, 39.69) rectangle (283.58,263.95);
\definecolor{fillColor}{gray}{0.92}

\path[fill=fillColor] ( 48.32, 39.69) rectangle (283.58,263.95);
\definecolor{drawColor}{RGB}{255,255,255}

\path[draw=drawColor,line width= 0.3pt,line join=round] ( 48.32, 75.37) --
	(283.58, 75.37);

\path[draw=drawColor,line width= 0.3pt,line join=round] ( 48.32,126.34) --
	(283.58,126.34);

\path[draw=drawColor,line width= 0.3pt,line join=round] ( 48.32,177.31) --
	(283.58,177.31);

\path[draw=drawColor,line width= 0.3pt,line join=round] ( 48.32,228.27) --
	(283.58,228.27);

\path[draw=drawColor,line width= 0.3pt,line join=round] ( 59.02, 39.69) --
	( 59.02,263.95);

\path[draw=drawColor,line width= 0.3pt,line join=round] (130.31, 39.69) --
	(130.31,263.95);

\path[draw=drawColor,line width= 0.3pt,line join=round] (201.60, 39.69) --
	(201.60,263.95);

\path[draw=drawColor,line width= 0.3pt,line join=round] (272.89, 39.69) --
	(272.89,263.95);

\path[draw=drawColor,line width= 0.6pt,line join=round] ( 48.32, 49.89) --
	(283.58, 49.89);

\path[draw=drawColor,line width= 0.6pt,line join=round] ( 48.32,100.85) --
	(283.58,100.85);

\path[draw=drawColor,line width= 0.6pt,line join=round] ( 48.32,151.82) --
	(283.58,151.82);

\path[draw=drawColor,line width= 0.6pt,line join=round] ( 48.32,202.79) --
	(283.58,202.79);

\path[draw=drawColor,line width= 0.6pt,line join=round] ( 48.32,253.76) --
	(283.58,253.76);

\path[draw=drawColor,line width= 0.6pt,line join=round] ( 94.66, 39.69) --
	( 94.66,263.95);

\path[draw=drawColor,line width= 0.6pt,line join=round] (165.95, 39.69) --
	(165.95,263.95);

\path[draw=drawColor,line width= 0.6pt,line join=round] (237.24, 39.69) --
	(237.24,263.95);
\definecolor{drawColor}{RGB}{0,0,0}

\path[draw=drawColor,line width= 0.6pt,line join=round] ( 59.02,241.25) --
	( 94.66,217.33) --
	(130.31,196.13) --
	(165.95,189.54) --
	(201.60,172.70) --
	(237.24,165.55) --
	(272.89,155.55);

\path[draw=drawColor,line width= 0.6pt,dash pattern=on 1pt off 3pt ,line join=round] ( 59.02,241.25) --
	( 94.66,215.29) --
	(130.31,191.33) --
	(165.95,180.84) --
	(201.60,164.00) --
	(237.24,155.13) --
	(272.89,142.74);

\path[draw=drawColor,line width= 0.6pt,dash pattern=on 4pt off 4pt ,line join=round] ( 59.02,240.44) --
	( 94.66,211.76) --
	(130.31,182.63) --
	(165.95,171.39) --
	(201.60,155.14) --
	(237.24,145.89) --
	(272.89,133.42);
\definecolor{fillColor}{RGB}{0,0,0}

\path[draw=drawColor,line width= 0.4pt,line join=round,line cap=round,fill=fillColor] ( 59.02,241.25) circle (  1.96);

\path[draw=drawColor,line width= 0.4pt,line join=round,line cap=round,fill=fillColor] ( 94.66,217.33) circle (  1.96);

\path[draw=drawColor,line width= 0.4pt,line join=round,line cap=round,fill=fillColor] (130.31,196.13) circle (  1.96);

\path[draw=drawColor,line width= 0.4pt,line join=round,line cap=round,fill=fillColor] (165.95,189.54) circle (  1.96);

\path[draw=drawColor,line width= 0.4pt,line join=round,line cap=round,fill=fillColor] (201.60,172.70) circle (  1.96);

\path[draw=drawColor,line width= 0.4pt,line join=round,line cap=round,fill=fillColor] (237.24,165.55) circle (  1.96);

\path[draw=drawColor,line width= 0.4pt,line join=round,line cap=round,fill=fillColor] (272.89,155.55) circle (  1.96);

\path[draw=drawColor,line width= 0.4pt,line join=round,line cap=round,fill=fillColor] ( 59.02,241.25) circle (  1.96);

\path[draw=drawColor,line width= 0.4pt,line join=round,line cap=round,fill=fillColor] ( 94.66,215.29) circle (  1.96);

\path[draw=drawColor,line width= 0.4pt,line join=round,line cap=round,fill=fillColor] (130.31,191.33) circle (  1.96);

\path[draw=drawColor,line width= 0.4pt,line join=round,line cap=round,fill=fillColor] (165.95,180.84) circle (  1.96);

\path[draw=drawColor,line width= 0.4pt,line join=round,line cap=round,fill=fillColor] (201.60,164.00) circle (  1.96);

\path[draw=drawColor,line width= 0.4pt,line join=round,line cap=round,fill=fillColor] (237.24,155.13) circle (  1.96);

\path[draw=drawColor,line width= 0.4pt,line join=round,line cap=round,fill=fillColor] (272.89,142.74) circle (  1.96);

\path[draw=drawColor,line width= 0.4pt,line join=round,line cap=round,fill=fillColor] ( 59.02,240.44) circle (  1.96);

\path[draw=drawColor,line width= 0.4pt,line join=round,line cap=round,fill=fillColor] ( 94.66,211.76) circle (  1.96);

\path[draw=drawColor,line width= 0.4pt,line join=round,line cap=round,fill=fillColor] (130.31,182.63) circle (  1.96);

\path[draw=drawColor,line width= 0.4pt,line join=round,line cap=round,fill=fillColor] (165.95,171.39) circle (  1.96);

\path[draw=drawColor,line width= 0.4pt,line join=round,line cap=round,fill=fillColor] (201.60,155.14) circle (  1.96);

\path[draw=drawColor,line width= 0.4pt,line join=round,line cap=round,fill=fillColor] (237.24,145.89) circle (  1.96);

\path[draw=drawColor,line width= 0.4pt,line join=round,line cap=round,fill=fillColor] (272.89,133.42) circle (  1.96);
\end{scope}
\begin{scope}
\path[clip] (  0.00,  0.00) rectangle (289.08,289.08);
\definecolor{drawColor}{gray}{0.30}

\node[text=drawColor,anchor=base east,inner sep=0pt, outer sep=0pt, scale=  1.40] at ( 43.37, 45.07) {0};

\node[text=drawColor,anchor=base east,inner sep=0pt, outer sep=0pt, scale=  1.40] at ( 43.37, 96.03) {25};

\node[text=drawColor,anchor=base east,inner sep=0pt, outer sep=0pt, scale=  1.40] at ( 43.37,147.00) {50};

\node[text=drawColor,anchor=base east,inner sep=0pt, outer sep=0pt, scale=  1.40] at ( 43.37,197.97) {75};

\node[text=drawColor,anchor=base east,inner sep=0pt, outer sep=0pt, scale=  1.40] at ( 43.37,248.94) {100};
\end{scope}
\begin{scope}
\path[clip] (  0.00,  0.00) rectangle (289.08,289.08);
\definecolor{drawColor}{gray}{0.20}

\path[draw=drawColor,line width= 0.6pt,line join=round] ( 45.57, 49.89) --
	( 48.32, 49.89);

\path[draw=drawColor,line width= 0.6pt,line join=round] ( 45.57,100.85) --
	( 48.32,100.85);

\path[draw=drawColor,line width= 0.6pt,line join=round] ( 45.57,151.82) --
	( 48.32,151.82);

\path[draw=drawColor,line width= 0.6pt,line join=round] ( 45.57,202.79) --
	( 48.32,202.79);

\path[draw=drawColor,line width= 0.6pt,line join=round] ( 45.57,253.76) --
	( 48.32,253.76);
\end{scope}
\begin{scope}
\path[clip] (  0.00,  0.00) rectangle (289.08,289.08);
\definecolor{drawColor}{gray}{0.20}

\path[draw=drawColor,line width= 0.6pt,line join=round] ( 94.66, 36.94) --
	( 94.66, 39.69);

\path[draw=drawColor,line width= 0.6pt,line join=round] (165.95, 36.94) --
	(165.95, 39.69);

\path[draw=drawColor,line width= 0.6pt,line join=round] (237.24, 36.94) --
	(237.24, 39.69);
\end{scope}
\begin{scope}
\path[clip] (  0.00,  0.00) rectangle (289.08,289.08);
\definecolor{drawColor}{gray}{0.30}

\node[text=drawColor,anchor=base,inner sep=0pt, outer sep=0pt, scale=  1.40] at ( 94.66, 25.10) {100};

\node[text=drawColor,anchor=base,inner sep=0pt, outer sep=0pt, scale=  1.40] at (165.95, 25.10) {200};

\node[text=drawColor,anchor=base,inner sep=0pt, outer sep=0pt, scale=  1.40] at (237.24, 25.10) {300};
\end{scope}
\begin{scope}
\path[clip] (  0.00,  0.00) rectangle (289.08,289.08);
\definecolor{drawColor}{RGB}{0,0,0}

\node[text=drawColor,anchor=base,inner sep=0pt, outer sep=0pt, scale=  1.60] at (165.95,  8.61) {\itshape $t$};
\end{scope}
\begin{scope}
\path[clip] (  0.00,  0.00) rectangle (289.08,289.08);
\definecolor{drawColor}{RGB}{0,0,0}

\node[text=drawColor,rotate= 90.00,anchor=base,inner sep=0pt, outer sep=0pt, scale=  1.60] at ( 16.52,151.82) {remaining taxa $\left[\%\right]$};
\end{scope}
\begin{scope}
\path[clip] (  0.00,  0.00) rectangle (289.08,289.08);
\definecolor{drawColor}{RGB}{0,0,0}

\node[text=drawColor,anchor=base,inner sep=0pt, outer sep=0pt, scale=  1.60] at (165.95,272.56) {(g) $k=35$};
\end{scope}
\end{tikzpicture}}
\end{minipage}
\begin{minipage}[t]{0.66\textwidth}
\scalebox{0.48}{\input{}}
\end{minipage}
\hspace{3cm}
\caption{The average percentage of taxa that remains after the subtree reduction (solid line), after the subtree and chain reduction (dotted line), and after all seven reductions (dashed line) \blue{depending on $t$.}}
\label{fig:reductions}
\end{figure}

\begin{figure}[h!]
\noindent\begin{minipage}[t]{0.33\textwidth}
\scalebox{0.48}{
\begin{tikzpicture}[x=1pt,y=1pt]
\definecolor{fillColor}{RGB}{255,255,255}
\path[use as bounding box,fill=fillColor,fill opacity=0.00] (0,0) rectangle (289.08,289.08);
\begin{scope}
\path[clip] (  0.00,  0.00) rectangle (289.08,289.08);
\definecolor{drawColor}{RGB}{255,255,255}
\definecolor{fillColor}{RGB}{255,255,255}

\path[draw=drawColor,line width= 0.6pt,line join=round,line cap=round,fill=fillColor] (  0.00,  0.00) rectangle (289.08,289.08);
\end{scope}
\begin{scope}
\path[clip] ( 48.32, 39.69) rectangle (283.58,263.95);
\definecolor{fillColor}{gray}{0.92}

\path[fill=fillColor] ( 48.32, 39.69) rectangle (283.58,263.95);
\definecolor{drawColor}{RGB}{255,255,255}

\path[draw=drawColor,line width= 0.3pt,line join=round] ( 48.32, 75.37) --
	(283.58, 75.37);

\path[draw=drawColor,line width= 0.3pt,line join=round] ( 48.32,126.34) --
	(283.58,126.34);

\path[draw=drawColor,line width= 0.3pt,line join=round] ( 48.32,177.31) --
	(283.58,177.31);

\path[draw=drawColor,line width= 0.3pt,line join=round] ( 48.32,228.27) --
	(283.58,228.27);

\path[draw=drawColor,line width= 0.3pt,line join=round] ( 59.02, 39.69) --
	( 59.02,263.95);

\path[draw=drawColor,line width= 0.3pt,line join=round] (130.31, 39.69) --
	(130.31,263.95);

\path[draw=drawColor,line width= 0.3pt,line join=round] (201.60, 39.69) --
	(201.60,263.95);

\path[draw=drawColor,line width= 0.3pt,line join=round] (272.89, 39.69) --
	(272.89,263.95);

\path[draw=drawColor,line width= 0.6pt,line join=round] ( 48.32, 49.89) --
	(283.58, 49.89);

\path[draw=drawColor,line width= 0.6pt,line join=round] ( 48.32,100.85) --
	(283.58,100.85);

\path[draw=drawColor,line width= 0.6pt,line join=round] ( 48.32,151.82) --
	(283.58,151.82);

\path[draw=drawColor,line width= 0.6pt,line join=round] ( 48.32,202.79) --
	(283.58,202.79);

\path[draw=drawColor,line width= 0.6pt,line join=round] ( 48.32,253.76) --
	(283.58,253.76);

\path[draw=drawColor,line width= 0.6pt,line join=round] ( 94.66, 39.69) --
	( 94.66,263.95);

\path[draw=drawColor,line width= 0.6pt,line join=round] (165.95, 39.69) --
	(165.95,263.95);

\path[draw=drawColor,line width= 0.6pt,line join=round] (237.24, 39.69) --
	(237.24,263.95);
\definecolor{drawColor}{RGB}{0,0,0}

\path[draw=drawColor,line width= 0.6pt,line join=round] ( 59.02,208.91) --
	( 94.66,179.96) --
	(130.31,163.24) --
	(165.95,153.86) --
	(201.60,138.28) --
	(237.24,128.99) --
	(272.89,122.93);

\path[draw=drawColor,line width= 0.6pt,dash pattern=on 1pt off 3pt ,line join=round] ( 59.02,208.91) --
	( 94.66,176.29) --
	(130.31,156.99) --
	(165.95,144.69) --
	(201.60,129.97) --
	(237.24,119.47) --
	(272.89,110.23);

\path[draw=drawColor,line width= 0.6pt,dash pattern=on 4pt off 4pt ,line join=round] ( 59.02,204.01) --
	( 94.66,169.76) --
	(130.31,154.54) --
	(165.95,136.74) --
	(201.60,123.61) --
	(237.24,112.68) --
	(272.89,101.61);
\definecolor{fillColor}{RGB}{0,0,0}

\path[draw=drawColor,line width= 0.4pt,line join=round,line cap=round,fill=fillColor] ( 59.02,208.91) circle (  1.96);

\path[draw=drawColor,line width= 0.4pt,line join=round,line cap=round,fill=fillColor] ( 94.66,179.96) circle (  1.96);

\path[draw=drawColor,line width= 0.4pt,line join=round,line cap=round,fill=fillColor] (130.31,163.24) circle (  1.96);

\path[draw=drawColor,line width= 0.4pt,line join=round,line cap=round,fill=fillColor] (165.95,153.86) circle (  1.96);

\path[draw=drawColor,line width= 0.4pt,line join=round,line cap=round,fill=fillColor] (201.60,138.28) circle (  1.96);

\path[draw=drawColor,line width= 0.4pt,line join=round,line cap=round,fill=fillColor] (237.24,128.99) circle (  1.96);

\path[draw=drawColor,line width= 0.4pt,line join=round,line cap=round,fill=fillColor] (272.89,122.93) circle (  1.96);

\path[draw=drawColor,line width= 0.4pt,line join=round,line cap=round,fill=fillColor] ( 59.02,208.91) circle (  1.96);

\path[draw=drawColor,line width= 0.4pt,line join=round,line cap=round,fill=fillColor] ( 94.66,176.29) circle (  1.96);

\path[draw=drawColor,line width= 0.4pt,line join=round,line cap=round,fill=fillColor] (130.31,156.99) circle (  1.96);

\path[draw=drawColor,line width= 0.4pt,line join=round,line cap=round,fill=fillColor] (165.95,144.69) circle (  1.96);

\path[draw=drawColor,line width= 0.4pt,line join=round,line cap=round,fill=fillColor] (201.60,129.97) circle (  1.96);

\path[draw=drawColor,line width= 0.4pt,line join=round,line cap=round,fill=fillColor] (237.24,119.47) circle (  1.96);

\path[draw=drawColor,line width= 0.4pt,line join=round,line cap=round,fill=fillColor] (272.89,110.23) circle (  1.96);

\path[draw=drawColor,line width= 0.4pt,line join=round,line cap=round,fill=fillColor] ( 59.02,204.01) circle (  1.96);

\path[draw=drawColor,line width= 0.4pt,line join=round,line cap=round,fill=fillColor] ( 94.66,169.76) circle (  1.96);

\path[draw=drawColor,line width= 0.4pt,line join=round,line cap=round,fill=fillColor] (130.31,154.54) circle (  1.96);

\path[draw=drawColor,line width= 0.4pt,line join=round,line cap=round,fill=fillColor] (165.95,136.74) circle (  1.96);

\path[draw=drawColor,line width= 0.4pt,line join=round,line cap=round,fill=fillColor] (201.60,123.61) circle (  1.96);

\path[draw=drawColor,line width= 0.4pt,line join=round,line cap=round,fill=fillColor] (237.24,112.68) circle (  1.96);

\path[draw=drawColor,line width= 0.4pt,line join=round,line cap=round,fill=fillColor] (272.89,101.61) circle (  1.96);
\end{scope}
\begin{scope}
\path[clip] (  0.00,  0.00) rectangle (289.08,289.08);
\definecolor{drawColor}{gray}{0.30}

\node[text=drawColor,anchor=base east,inner sep=0pt, outer sep=0pt, scale=  1.40] at ( 43.37, 45.07) {0};

\node[text=drawColor,anchor=base east,inner sep=0pt, outer sep=0pt, scale=  1.40] at ( 43.37, 96.03) {25};

\node[text=drawColor,anchor=base east,inner sep=0pt, outer sep=0pt, scale=  1.40] at ( 43.37,147.00) {50};

\node[text=drawColor,anchor=base east,inner sep=0pt, outer sep=0pt, scale=  1.40] at ( 43.37,197.97) {75};

\node[text=drawColor,anchor=base east,inner sep=0pt, outer sep=0pt, scale=  1.40] at ( 43.37,248.94) {100};
\end{scope}
\begin{scope}
\path[clip] (  0.00,  0.00) rectangle (289.08,289.08);
\definecolor{drawColor}{gray}{0.20}

\path[draw=drawColor,line width= 0.6pt,line join=round] ( 45.57, 49.89) --
	( 48.32, 49.89);

\path[draw=drawColor,line width= 0.6pt,line join=round] ( 45.57,100.85) --
	( 48.32,100.85);

\path[draw=drawColor,line width= 0.6pt,line join=round] ( 45.57,151.82) --
	( 48.32,151.82);

\path[draw=drawColor,line width= 0.6pt,line join=round] ( 45.57,202.79) --
	( 48.32,202.79);

\path[draw=drawColor,line width= 0.6pt,line join=round] ( 45.57,253.76) --
	( 48.32,253.76);
\end{scope}
\begin{scope}
\path[clip] (  0.00,  0.00) rectangle (289.08,289.08);
\definecolor{drawColor}{gray}{0.20}

\path[draw=drawColor,line width= 0.6pt,line join=round] ( 94.66, 36.94) --
	( 94.66, 39.69);

\path[draw=drawColor,line width= 0.6pt,line join=round] (165.95, 36.94) --
	(165.95, 39.69);

\path[draw=drawColor,line width= 0.6pt,line join=round] (237.24, 36.94) --
	(237.24, 39.69);
\end{scope}
\begin{scope}
\path[clip] (  0.00,  0.00) rectangle (289.08,289.08);
\definecolor{drawColor}{gray}{0.30}

\node[text=drawColor,anchor=base,inner sep=0pt, outer sep=0pt, scale=  1.40] at ( 94.66, 25.10) {100};

\node[text=drawColor,anchor=base,inner sep=0pt, outer sep=0pt, scale=  1.40] at (165.95, 25.10) {200};

\node[text=drawColor,anchor=base,inner sep=0pt, outer sep=0pt, scale=  1.40] at (237.24, 25.10) {300};
\end{scope}
\begin{scope}
\path[clip] (  0.00,  0.00) rectangle (289.08,289.08);
\definecolor{drawColor}{RGB}{0,0,0}

\node[text=drawColor,anchor=base,inner sep=0pt, outer sep=0pt, scale=  1.60] at (165.95,  8.61) {\itshape $t$};
\end{scope}
\begin{scope}
\path[clip] (  0.00,  0.00) rectangle (289.08,289.08);
\definecolor{drawColor}{RGB}{0,0,0}

\node[text=drawColor,rotate= 90.00,anchor=base,inner sep=0pt, outer sep=0pt, scale=  1.60] at ( 16.52,151.82) {remaining taxa $\left[\%\right]$};
\end{scope}
\begin{scope}
\path[clip] (  0.00,  0.00) rectangle (289.08,289.08);
\definecolor{drawColor}{RGB}{0,0,0}

\node[text=drawColor,anchor=base,inner sep=0pt, outer sep=0pt, scale=  1.60] at (165.95,272.56) {(a) $s=50$};
\end{scope}
\end{tikzpicture}}
\end{minipage}
\begin{minipage}[t]{0.33\textwidth}
\scalebox{0.48}{
\begin{tikzpicture}[x=1pt,y=1pt]
\definecolor{fillColor}{RGB}{255,255,255}
\path[use as bounding box,fill=fillColor,fill opacity=0.00] (0,0) rectangle (289.08,289.08);
\begin{scope}
\path[clip] (  0.00,  0.00) rectangle (289.08,289.08);
\definecolor{drawColor}{RGB}{255,255,255}
\definecolor{fillColor}{RGB}{255,255,255}

\path[draw=drawColor,line width= 0.6pt,line join=round,line cap=round,fill=fillColor] (  0.00,  0.00) rectangle (289.08,289.08);
\end{scope}
\begin{scope}
\path[clip] ( 48.32, 39.69) rectangle (283.58,263.95);
\definecolor{fillColor}{gray}{0.92}

\path[fill=fillColor] ( 48.32, 39.69) rectangle (283.58,263.95);
\definecolor{drawColor}{RGB}{255,255,255}

\path[draw=drawColor,line width= 0.3pt,line join=round] ( 48.32, 75.37) --
	(283.58, 75.37);

\path[draw=drawColor,line width= 0.3pt,line join=round] ( 48.32,126.34) --
	(283.58,126.34);

\path[draw=drawColor,line width= 0.3pt,line join=round] ( 48.32,177.31) --
	(283.58,177.31);

\path[draw=drawColor,line width= 0.3pt,line join=round] ( 48.32,228.27) --
	(283.58,228.27);

\path[draw=drawColor,line width= 0.3pt,line join=round] ( 59.02, 39.69) --
	( 59.02,263.95);

\path[draw=drawColor,line width= 0.3pt,line join=round] (130.31, 39.69) --
	(130.31,263.95);

\path[draw=drawColor,line width= 0.3pt,line join=round] (201.60, 39.69) --
	(201.60,263.95);

\path[draw=drawColor,line width= 0.3pt,line join=round] (272.89, 39.69) --
	(272.89,263.95);

\path[draw=drawColor,line width= 0.6pt,line join=round] ( 48.32, 49.89) --
	(283.58, 49.89);

\path[draw=drawColor,line width= 0.6pt,line join=round] ( 48.32,100.85) --
	(283.58,100.85);

\path[draw=drawColor,line width= 0.6pt,line join=round] ( 48.32,151.82) --
	(283.58,151.82);

\path[draw=drawColor,line width= 0.6pt,line join=round] ( 48.32,202.79) --
	(283.58,202.79);

\path[draw=drawColor,line width= 0.6pt,line join=round] ( 48.32,253.76) --
	(283.58,253.76);

\path[draw=drawColor,line width= 0.6pt,line join=round] ( 94.66, 39.69) --
	( 94.66,263.95);

\path[draw=drawColor,line width= 0.6pt,line join=round] (165.95, 39.69) --
	(165.95,263.95);

\path[draw=drawColor,line width= 0.6pt,line join=round] (237.24, 39.69) --
	(237.24,263.95);
\definecolor{drawColor}{RGB}{0,0,0}

\path[draw=drawColor,line width= 0.6pt,line join=round] ( 59.02,224.40) --
	( 94.66,192.60) --
	(130.31,172.75) --
	(165.95,148.56) --
	(201.60,139.43) --
	(237.24,131.43) --
	(272.89,127.59);

\path[draw=drawColor,line width= 0.6pt,dash pattern=on 1pt off 3pt ,line join=round] ( 59.02,219.51) --
	( 94.66,186.89) --
	(130.31,164.33) --
	(165.95,139.39) --
	(201.60,128.50) --
	(237.24,120.29) --
	(272.89,114.43);

\path[draw=drawColor,line width= 0.6pt,dash pattern=on 4pt off 4pt ,line join=round] ( 59.02,218.69) --
	( 94.66,177.10) --
	(130.31,155.36) --
	(165.95,134.09) --
	(201.60,124.26) --
	(237.24,111.32) --
	(272.89,102.54);
\definecolor{fillColor}{RGB}{0,0,0}

\path[draw=drawColor,line width= 0.4pt,line join=round,line cap=round,fill=fillColor] ( 59.02,224.40) circle (  1.96);

\path[draw=drawColor,line width= 0.4pt,line join=round,line cap=round,fill=fillColor] ( 94.66,192.60) circle (  1.96);

\path[draw=drawColor,line width= 0.4pt,line join=round,line cap=round,fill=fillColor] (130.31,172.75) circle (  1.96);

\path[draw=drawColor,line width= 0.4pt,line join=round,line cap=round,fill=fillColor] (165.95,148.56) circle (  1.96);

\path[draw=drawColor,line width= 0.4pt,line join=round,line cap=round,fill=fillColor] (201.60,139.43) circle (  1.96);

\path[draw=drawColor,line width= 0.4pt,line join=round,line cap=round,fill=fillColor] (237.24,131.43) circle (  1.96);

\path[draw=drawColor,line width= 0.4pt,line join=round,line cap=round,fill=fillColor] (272.89,127.59) circle (  1.96);

\path[draw=drawColor,line width= 0.4pt,line join=round,line cap=round,fill=fillColor] ( 59.02,219.51) circle (  1.96);

\path[draw=drawColor,line width= 0.4pt,line join=round,line cap=round,fill=fillColor] ( 94.66,186.89) circle (  1.96);

\path[draw=drawColor,line width= 0.4pt,line join=round,line cap=round,fill=fillColor] (130.31,164.33) circle (  1.96);

\path[draw=drawColor,line width= 0.4pt,line join=round,line cap=round,fill=fillColor] (165.95,139.39) circle (  1.96);

\path[draw=drawColor,line width= 0.4pt,line join=round,line cap=round,fill=fillColor] (201.60,128.50) circle (  1.96);

\path[draw=drawColor,line width= 0.4pt,line join=round,line cap=round,fill=fillColor] (237.24,120.29) circle (  1.96);

\path[draw=drawColor,line width= 0.4pt,line join=round,line cap=round,fill=fillColor] (272.89,114.43) circle (  1.96);

\path[draw=drawColor,line width= 0.4pt,line join=round,line cap=round,fill=fillColor] ( 59.02,218.69) circle (  1.96);

\path[draw=drawColor,line width= 0.4pt,line join=round,line cap=round,fill=fillColor] ( 94.66,177.10) circle (  1.96);

\path[draw=drawColor,line width= 0.4pt,line join=round,line cap=round,fill=fillColor] (130.31,155.36) circle (  1.96);

\path[draw=drawColor,line width= 0.4pt,line join=round,line cap=round,fill=fillColor] (165.95,134.09) circle (  1.96);

\path[draw=drawColor,line width= 0.4pt,line join=round,line cap=round,fill=fillColor] (201.60,124.26) circle (  1.96);

\path[draw=drawColor,line width= 0.4pt,line join=round,line cap=round,fill=fillColor] (237.24,111.32) circle (  1.96);

\path[draw=drawColor,line width= 0.4pt,line join=round,line cap=round,fill=fillColor] (272.89,102.54) circle (  1.96);
\end{scope}
\begin{scope}
\path[clip] (  0.00,  0.00) rectangle (289.08,289.08);
\definecolor{drawColor}{gray}{0.30}

\node[text=drawColor,anchor=base east,inner sep=0pt, outer sep=0pt, scale=  1.40] at ( 43.37, 45.07) {0};

\node[text=drawColor,anchor=base east,inner sep=0pt, outer sep=0pt, scale=  1.40] at ( 43.37, 96.03) {25};

\node[text=drawColor,anchor=base east,inner sep=0pt, outer sep=0pt, scale=  1.40] at ( 43.37,147.00) {50};

\node[text=drawColor,anchor=base east,inner sep=0pt, outer sep=0pt, scale=  1.40] at ( 43.37,197.97) {75};

\node[text=drawColor,anchor=base east,inner sep=0pt, outer sep=0pt, scale=  1.40] at ( 43.37,248.94) {100};
\end{scope}
\begin{scope}
\path[clip] (  0.00,  0.00) rectangle (289.08,289.08);
\definecolor{drawColor}{gray}{0.20}

\path[draw=drawColor,line width= 0.6pt,line join=round] ( 45.57, 49.89) --
	( 48.32, 49.89);

\path[draw=drawColor,line width= 0.6pt,line join=round] ( 45.57,100.85) --
	( 48.32,100.85);

\path[draw=drawColor,line width= 0.6pt,line join=round] ( 45.57,151.82) --
	( 48.32,151.82);

\path[draw=drawColor,line width= 0.6pt,line join=round] ( 45.57,202.79) --
	( 48.32,202.79);

\path[draw=drawColor,line width= 0.6pt,line join=round] ( 45.57,253.76) --
	( 48.32,253.76);
\end{scope}
\begin{scope}
\path[clip] (  0.00,  0.00) rectangle (289.08,289.08);
\definecolor{drawColor}{gray}{0.20}

\path[draw=drawColor,line width= 0.6pt,line join=round] ( 94.66, 36.94) --
	( 94.66, 39.69);

\path[draw=drawColor,line width= 0.6pt,line join=round] (165.95, 36.94) --
	(165.95, 39.69);

\path[draw=drawColor,line width= 0.6pt,line join=round] (237.24, 36.94) --
	(237.24, 39.69);
\end{scope}
\begin{scope}
\path[clip] (  0.00,  0.00) rectangle (289.08,289.08);
\definecolor{drawColor}{gray}{0.30}

\node[text=drawColor,anchor=base,inner sep=0pt, outer sep=0pt, scale=  1.40] at ( 94.66, 25.10) {100};

\node[text=drawColor,anchor=base,inner sep=0pt, outer sep=0pt, scale=  1.40] at (165.95, 25.10) {200};

\node[text=drawColor,anchor=base,inner sep=0pt, outer sep=0pt, scale=  1.40] at (237.24, 25.10) {300};
\end{scope}
\begin{scope}
\path[clip] (  0.00,  0.00) rectangle (289.08,289.08);
\definecolor{drawColor}{RGB}{0,0,0}

\node[text=drawColor,anchor=base,inner sep=0pt, outer sep=0pt, scale=  1.60] at (165.95,  8.61) {\itshape $t$};
\end{scope}
\begin{scope}
\path[clip] (  0.00,  0.00) rectangle (289.08,289.08);
\definecolor{drawColor}{RGB}{0,0,0}

\node[text=drawColor,rotate= 90.00,anchor=base,inner sep=0pt, outer sep=0pt, scale=  1.60] at ( 16.52,151.82) {remaining taxa $\left[\%\right]$};
\end{scope}
\begin{scope}
\path[clip] (  0.00,  0.00) rectangle (289.08,289.08);
\definecolor{drawColor}{RGB}{0,0,0}

\node[text=drawColor,anchor=base,inner sep=0pt, outer sep=0pt, scale=  1.60] at (165.95,272.56) {(b) $s=70$};
\end{scope}
\end{tikzpicture}}
\end{minipage}
\begin{minipage}[t]{0.33\textwidth}
\scalebox{0.48}{
\begin{tikzpicture}[x=1pt,y=1pt]
\definecolor{fillColor}{RGB}{255,255,255}
\path[use as bounding box,fill=fillColor,fill opacity=0.00] (0,0) rectangle (289.08,289.08);
\begin{scope}
\path[clip] (  0.00,  0.00) rectangle (289.08,289.08);
\definecolor{drawColor}{RGB}{255,255,255}
\definecolor{fillColor}{RGB}{255,255,255}

\path[draw=drawColor,line width= 0.6pt,line join=round,line cap=round,fill=fillColor] (  0.00,  0.00) rectangle (289.08,289.08);
\end{scope}
\begin{scope}
\path[clip] ( 48.32, 39.69) rectangle (283.58,263.95);
\definecolor{fillColor}{gray}{0.92}

\path[fill=fillColor] ( 48.32, 39.69) rectangle (283.58,263.95);
\definecolor{drawColor}{RGB}{255,255,255}

\path[draw=drawColor,line width= 0.3pt,line join=round] ( 48.32, 75.37) --
	(283.58, 75.37);

\path[draw=drawColor,line width= 0.3pt,line join=round] ( 48.32,126.34) --
	(283.58,126.34);

\path[draw=drawColor,line width= 0.3pt,line join=round] ( 48.32,177.31) --
	(283.58,177.31);

\path[draw=drawColor,line width= 0.3pt,line join=round] ( 48.32,228.27) --
	(283.58,228.27);

\path[draw=drawColor,line width= 0.3pt,line join=round] ( 59.02, 39.69) --
	( 59.02,263.95);

\path[draw=drawColor,line width= 0.3pt,line join=round] (130.31, 39.69) --
	(130.31,263.95);

\path[draw=drawColor,line width= 0.3pt,line join=round] (201.60, 39.69) --
	(201.60,263.95);

\path[draw=drawColor,line width= 0.3pt,line join=round] (272.89, 39.69) --
	(272.89,263.95);

\path[draw=drawColor,line width= 0.6pt,line join=round] ( 48.32, 49.89) --
	(283.58, 49.89);

\path[draw=drawColor,line width= 0.6pt,line join=round] ( 48.32,100.85) --
	(283.58,100.85);

\path[draw=drawColor,line width= 0.6pt,line join=round] ( 48.32,151.82) --
	(283.58,151.82);

\path[draw=drawColor,line width= 0.6pt,line join=round] ( 48.32,202.79) --
	(283.58,202.79);

\path[draw=drawColor,line width= 0.6pt,line join=round] ( 48.32,253.76) --
	(283.58,253.76);

\path[draw=drawColor,line width= 0.6pt,line join=round] ( 94.66, 39.69) --
	( 94.66,263.95);

\path[draw=drawColor,line width= 0.6pt,line join=round] (165.95, 39.69) --
	(165.95,263.95);

\path[draw=drawColor,line width= 0.6pt,line join=round] (237.24, 39.69) --
	(237.24,263.95);
\definecolor{drawColor}{RGB}{0,0,0}

\path[draw=drawColor,line width= 0.6pt,line join=round] ( 59.02,226.85) --
	( 94.66,199.12) --
	(130.31,190.15) --
	(165.95,171.19) --
	(201.60,162.75) --
	(237.24,151.28) --
	(272.89,146.00);

\path[draw=drawColor,line width= 0.6pt,dash pattern=on 1pt off 3pt ,line join=round] ( 59.02,224.40) --
	( 94.66,185.26) --
	(130.31,171.67) --
	(165.95,150.39) --
	(201.60,136.82) --
	(237.24,126.95) --
	(272.89,118.74);

\path[draw=drawColor,line width= 0.6pt,dash pattern=on 4pt off 4pt ,line join=round] ( 59.02,218.69) --
	( 94.66,166.91) --
	(130.31,162.69) --
	(165.95,142.85) --
	(201.60,130.95) --
	(237.24,114.04) --
	(272.89,108.83);
\definecolor{fillColor}{RGB}{0,0,0}

\path[draw=drawColor,line width= 0.4pt,line join=round,line cap=round,fill=fillColor] ( 59.02,226.85) circle (  1.96);

\path[draw=drawColor,line width= 0.4pt,line join=round,line cap=round,fill=fillColor] ( 94.66,199.12) circle (  1.96);

\path[draw=drawColor,line width= 0.4pt,line join=round,line cap=round,fill=fillColor] (130.31,190.15) circle (  1.96);

\path[draw=drawColor,line width= 0.4pt,line join=round,line cap=round,fill=fillColor] (165.95,171.19) circle (  1.96);

\path[draw=drawColor,line width= 0.4pt,line join=round,line cap=round,fill=fillColor] (201.60,162.75) circle (  1.96);

\path[draw=drawColor,line width= 0.4pt,line join=round,line cap=round,fill=fillColor] (237.24,151.28) circle (  1.96);

\path[draw=drawColor,line width= 0.4pt,line join=round,line cap=round,fill=fillColor] (272.89,146.00) circle (  1.96);

\path[draw=drawColor,line width= 0.4pt,line join=round,line cap=round,fill=fillColor] ( 59.02,224.40) circle (  1.96);

\path[draw=drawColor,line width= 0.4pt,line join=round,line cap=round,fill=fillColor] ( 94.66,185.26) circle (  1.96);

\path[draw=drawColor,line width= 0.4pt,line join=round,line cap=round,fill=fillColor] (130.31,171.67) circle (  1.96);

\path[draw=drawColor,line width= 0.4pt,line join=round,line cap=round,fill=fillColor] (165.95,150.39) circle (  1.96);

\path[draw=drawColor,line width= 0.4pt,line join=round,line cap=round,fill=fillColor] (201.60,136.82) circle (  1.96);

\path[draw=drawColor,line width= 0.4pt,line join=round,line cap=round,fill=fillColor] (237.24,126.95) circle (  1.96);

\path[draw=drawColor,line width= 0.4pt,line join=round,line cap=round,fill=fillColor] (272.89,118.74) circle (  1.96);

\path[draw=drawColor,line width= 0.4pt,line join=round,line cap=round,fill=fillColor] ( 59.02,218.69) circle (  1.96);

\path[draw=drawColor,line width= 0.4pt,line join=round,line cap=round,fill=fillColor] ( 94.66,166.91) circle (  1.96);

\path[draw=drawColor,line width= 0.4pt,line join=round,line cap=round,fill=fillColor] (130.31,162.69) circle (  1.96);

\path[draw=drawColor,line width= 0.4pt,line join=round,line cap=round,fill=fillColor] (165.95,142.85) circle (  1.96);

\path[draw=drawColor,line width= 0.4pt,line join=round,line cap=round,fill=fillColor] (201.60,130.95) circle (  1.96);

\path[draw=drawColor,line width= 0.4pt,line join=round,line cap=round,fill=fillColor] (237.24,114.04) circle (  1.96);

\path[draw=drawColor,line width= 0.4pt,line join=round,line cap=round,fill=fillColor] (272.89,108.83) circle (  1.96);
\end{scope}
\begin{scope}
\path[clip] (  0.00,  0.00) rectangle (289.08,289.08);
\definecolor{drawColor}{gray}{0.30}

\node[text=drawColor,anchor=base east,inner sep=0pt, outer sep=0pt, scale=  1.40] at ( 43.37, 45.07) {0};

\node[text=drawColor,anchor=base east,inner sep=0pt, outer sep=0pt, scale=  1.40] at ( 43.37, 96.03) {25};

\node[text=drawColor,anchor=base east,inner sep=0pt, outer sep=0pt, scale=  1.40] at ( 43.37,147.00) {50};

\node[text=drawColor,anchor=base east,inner sep=0pt, outer sep=0pt, scale=  1.40] at ( 43.37,197.97) {75};

\node[text=drawColor,anchor=base east,inner sep=0pt, outer sep=0pt, scale=  1.40] at ( 43.37,248.94) {100};
\end{scope}
\begin{scope}
\path[clip] (  0.00,  0.00) rectangle (289.08,289.08);
\definecolor{drawColor}{gray}{0.20}

\path[draw=drawColor,line width= 0.6pt,line join=round] ( 45.57, 49.89) --
	( 48.32, 49.89);

\path[draw=drawColor,line width= 0.6pt,line join=round] ( 45.57,100.85) --
	( 48.32,100.85);

\path[draw=drawColor,line width= 0.6pt,line join=round] ( 45.57,151.82) --
	( 48.32,151.82);

\path[draw=drawColor,line width= 0.6pt,line join=round] ( 45.57,202.79) --
	( 48.32,202.79);

\path[draw=drawColor,line width= 0.6pt,line join=round] ( 45.57,253.76) --
	( 48.32,253.76);
\end{scope}
\begin{scope}
\path[clip] (  0.00,  0.00) rectangle (289.08,289.08);
\definecolor{drawColor}{gray}{0.20}

\path[draw=drawColor,line width= 0.6pt,line join=round] ( 94.66, 36.94) --
	( 94.66, 39.69);

\path[draw=drawColor,line width= 0.6pt,line join=round] (165.95, 36.94) --
	(165.95, 39.69);

\path[draw=drawColor,line width= 0.6pt,line join=round] (237.24, 36.94) --
	(237.24, 39.69);
\end{scope}
\begin{scope}
\path[clip] (  0.00,  0.00) rectangle (289.08,289.08);
\definecolor{drawColor}{gray}{0.30}

\node[text=drawColor,anchor=base,inner sep=0pt, outer sep=0pt, scale=  1.40] at ( 94.66, 25.10) {100};

\node[text=drawColor,anchor=base,inner sep=0pt, outer sep=0pt, scale=  1.40] at (165.95, 25.10) {200};

\node[text=drawColor,anchor=base,inner sep=0pt, outer sep=0pt, scale=  1.40] at (237.24, 25.10) {300};
\end{scope}
\begin{scope}
\path[clip] (  0.00,  0.00) rectangle (289.08,289.08);
\definecolor{drawColor}{RGB}{0,0,0}

\node[text=drawColor,anchor=base,inner sep=0pt, outer sep=0pt, scale=  1.60] at (165.95,  8.61) {\itshape $t$};
\end{scope}
\begin{scope}
\path[clip] (  0.00,  0.00) rectangle (289.08,289.08);
\definecolor{drawColor}{RGB}{0,0,0}

\node[text=drawColor,rotate= 90.00,anchor=base,inner sep=0pt, outer sep=0pt, scale=  1.60] at ( 16.52,151.82) {remaining taxa $\left[\%\right]$};
\end{scope}
\begin{scope}
\path[clip] (  0.00,  0.00) rectangle (289.08,289.08);
\definecolor{drawColor}{RGB}{0,0,0}

\node[text=drawColor,anchor=base,inner sep=0pt, outer sep=0pt, scale=  1.60] at (165.95,272.56) {(c) $s=90$};
\end{scope}
\end{tikzpicture}}
\end{minipage}
\caption{\blue{The average percentage of taxa that remains after the subtree reduction (solid line), after the subtree and chain reduction (dotted line), and after all seven reductions (dashed line) for all tree pairs with $k=20$,} \revisionSK{for different skew values.}}
\label{fig:reductions-zoom-in}
\end{figure}
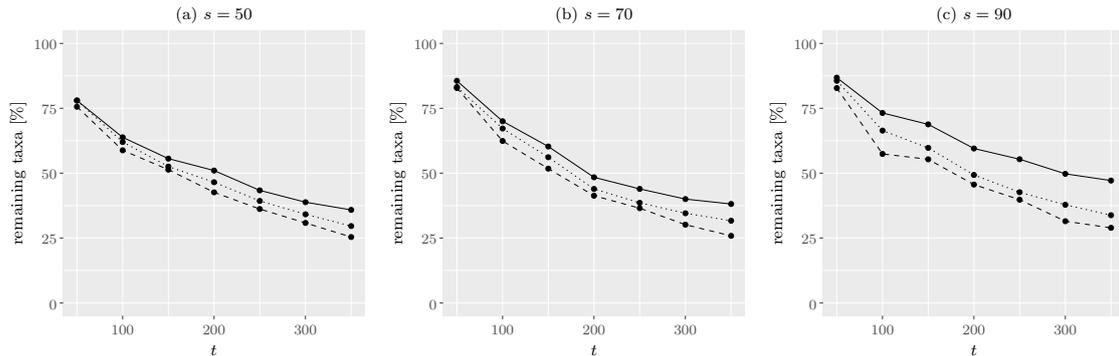

To evaluate whether or not the five new reductions further reduce phylogenetic trees that have already been reduced by the subtree and chain reduction, we have analyzed the percentage of remaining taxa of all 735 tree pairs  after (i) the subtree reduction,  (ii) the subtree and chain reductions, and (iii)  all seven reductions. More specifically, for each pair $t$ and $k$ with $t\in\{50,100,150,\ldots,350\}$ and $k\in\{5,10,15,\ldots,35\}$, we have calculated the three average percentages $$\frac{100}{t}\cdot s(T,T')\hspace{1cm} \frac{100}{t}\cdot sc(T,T') \hspace{1cm}\frac{100}{t}\cdot scn(T,T')$$ over all 15 tree pairs $(T,T')$ with parameter combination $(t,s,k)$, where $s\in\{50,70,90\}$. The results are summarized in Figure~\ref{fig:reductions}. For example, the seven reductions reduce tree pairs with $t=350$ and $k=5$ by about $90\%$ and tree pairs with $t=350$ and $k=35$ by about $60\%$. Except for the two parameter combinations $(50,s,30)$ and $(50,s,35)$, Figure~\ref{fig:reductions} shows that applying all seven reductions always results (on average) in smaller tree pairs than applying the subtree and chain reductions only.
Observe that, regardless of $k$, the percentage of the number of remaining taxa decreases as $t$ increases. For example, for $k=20$ and $t=50$, we have $\frac{100}{50}\cdot scn(T,T')\approx 80\%$, and for $k=20$ and $t=350$, we have $\frac{100}{350}\cdot scn(T,T')\approx 27\%$.  An analogous trend applies to all other values of $k$. This can be partially explained by recalling that, when applied to exhaustion, the seven reduction rules guarantee that the resulting reduced instance
has no more than $11d_{\TBR}-9$ taxa \cite{kelk2020new}. So, taking $k$ as a proxy for $d_{\TBR}$, the ratio $(11k-9)/t$ decreases as $t$ increases, if $k$ is fixed. \revisionSK{Hence, the curve behaves like $1/t$ (a hyperbola) for fixed $k$; decreasing, but at an ever slower rate.} A similar observation holds if we only apply the subtree and chain reduction, since the reduced instances are then guaranteed to have at most $15d_{\TBR}-9$ taxa. Interestingly, when only the subtree reduction is applied, the \blue{solid} curves in  \blue{Figure~\ref{fig:reductions}} have \blue{a very similar} downward-sloping tendency as the $sc$ and $scn$ curves, despite the fact that, when applied in isolation there is no $g(k)$
theoretical upper bound on the size of the reduced instance\footnote{\steven{Such a bound cannot exist. Consider, for example, the situation when $T$ and $T'$ are caterpillar trees - every internal vertex is incident to at least one taxon - on $n+2$ taxa, $n\geq 3$, with taxa ordered $x, 1, 2, \ldots, n, y$ and $y, 1, 2, \ldots, n, x$ in $T$ and $T'$, respectively. \purple{Irrespective of $n$, the TBR distance between $T$ and $T'$ is two, the} subtree reduction does nothing here, and the chain reduction collapses both trees to 5 taxa.}}. 
 \blue{To gain more insight into this phenomena, we have redone the analysis underlying Figure~\ref{fig:reductions}(d), but separately for each skew value $s\in\{50,70,90\}$.} 
\revisionSK{The results, presented in Figure~\ref{fig:reductions-zoom-in}, suggest that for $s \in \{50,70\}$ similar performance is observed, but that at $s=90$ the achieved reduction is decreasing when only the subtree reduction is applied. It seems that high skew militates against the creation of common subtrees. Intuitively, one would expect the chain reduction to work well at $s=90$, since high skew would seem to support the creation of multiple common long chains; perhaps the two effects (i.e. less effective subtree reduction, possibly more effective chain reduction) cancel each other out.}

\steven{\blue{Returning to Figure~\ref{fig:reductions}, a} further observation is as follows. \revisionSK{Recalling that $g(k)$ denotes the theoretical upper bound on the size of the kernel, }the fact that $g(k)/t$ decreases for fixed $k$ and increasing $t$ explains the downward-sloping shape of the $sc$ and $scn$ curves, but it only has limited explanatory power. If $t$ is smaller than
$11d_{\TBR}-9$, then the reduction rules offer no guarantees at all. For example, for $k=35$, \blue{we have $11k-9=376$}, which is larger than any $t$ used in our main dataset. Nevertheless, as shown in \purple{Figure~\ref{fig:reductions}(g)}, significant reduction is still achieved. This suggests that the upper bound on kernel size may, in practice, be \purple{much} smaller than $11d_{\TBR}-9$, a point we elaborate on in \purple{Section~\ref{sec:f(k)}.}}

\steven{Finally, a brief reflection on the fact that the $(50,s,30)$ and $(50,s,35)$ instances exhibited little reduction and that the difference between $sc$ and $scn$ was negligible. Such instances have an extremely high TBR distance relative to the number of taxa. This seems to destroy any of the structures that are targeted by the reduction rules. Interestingly, such instances are not only difficult to reduce, they also \purple{appear} potentially difficult to solve, as we saw in the earlier section. In both cases this seems linked to the phenomenon that, for such instances,  a maximum agreement forest is likely to have many very small components. This is a point we will return to later.}

\subsection{Empirical kernel size} \label{sec:f(k)}

\begin{figure}[t]
\begin{minipage}[t]{0.45\textwidth}
\scalebox{0.55}{\input{figs/bounds-on-f_k_}}
\end{minipage}
\hspace{-1cm}
\begin{minipage}[t]{0.45\textwidth}
\scalebox{0.55}{\input{figs/bounds-on-f_k_-BIG-TREES}}
\end{minipage}
\hspace{-1cm}
\begin{minipage}[t]{0.06\textwidth}
\vspace{-4cm}
\includegraphics[width=\textwidth]{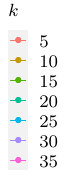}
\end{minipage}
\caption{\blue{Empirical estimates $f(k)$ of kernel size, depending on $t$, for the fully reduced tree pairs.}}
\label{fig:kernel-bound}
\end{figure}

\begin{figure}[t]
\begin{minipage}[t]{0.3\textwidth}
\scalebox{0.45}{\input{figs/bounds-on-f_k_-skew50}}
\end{minipage}
\begin{minipage}[t]{0.3\textwidth}
\scalebox{0.45}{\input{figs/bounds-on-f_k_-skew70}}
\end{minipage}
\begin{minipage}[t]{0.3\textwidth}
\scalebox{0.45}{\input{figs/bounds-on-f_k_-skew90}}
\end{minipage}
\hspace{0.2cm}
\begin{minipage}[t]{0.05\textwidth}
\vspace{-3.6cm}
\includegraphics[width=\textwidth]{figs/legend2}
\end{minipage}
\begin{minipage}[t]{0.3\textwidth}
\scalebox{0.45}{\input{figs/bounds-on-f_k_-BIG-TREES-skew50}}
\end{minipage}
\begin{minipage}[t]{0.3\textwidth}
\scalebox{0.45}{\input{figs/bounds-on-f_k_-BIG-TREES-skew70}}
\end{minipage}
\begin{minipage}[t]{0.3\textwidth}
\scalebox{0.45}{\input{figs/bounds-on-f_k_-BIG-TREES-skew90}}
\end{minipage}
\hspace{0.2cm}
\begin{minipage}[t]{0.05\textwidth}
\vspace{-3.6cm}
\includegraphics[width=\textwidth]{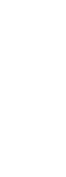}
\end{minipage}
\caption{\blue{Empirical estimates $f(k)$ of kernel size, depending on $t$, for the fully reduced tree pairs broken down by different skew values.}}
\label{fig:kernel-bound-skew}
\end{figure}


By~\cite[Theorem 6]{kelk2020new}, we have a theoretical upper bound of $g(k) = 11d_\TBR(T,T')-9$ on the size of the TBR kernel if $T$ and $T'$ are fully reduced. Hence, $scn(T,T')\leq 11d_{\TBR}-9 \leq 11k-9$. To evaluate how well this theoretical kernel compares to the empirical kernel size  obtained from our experimental analysis, we have computed $$f(k)=\frac{scn(T,T')}{\theta(T,T')}$$ for all tree pairs of the main and larger trees dataset, where  $\theta=d_\TBR$ if $d_\TBR(T,T')$ is known and, otherwise, $\theta=d^\ell_\MP$. The results are summarized in \blue{Figures~\ref{fig:kernel-bound} and~\ref{fig:kernel-bound-skew}}. For each pair $t$ and $k$ with $t\in\{50,100,150,\ldots,350\}$ and $k\in\{5,10,15,\ldots,35\}$, we have plotted the average value of $f(k)$ over all 15 tree pairs with parameter combinations $(t,s,k)$, where $s\in\{50,70,90\}$ in Figure~\ref{fig:kernel-bound}(a). Across all 735 tree pairs of the main dataset, the maximum empirical kernel size is 9.8.
Similarly for $k=35$ and each $$t\in\{50,100,150,\ldots,350,500,1000, 1500,\ldots,3000\},$$ we have plotted the average value of $f(k)$ over all 15 tree pairs with parameter combinations $(t,s,k)$ in Figure~\ref{fig:kernel-bound}(b). Across all 195 tree pairs (i.e. all 90 tree pairs of the large trees dataset and all 105 tree pairs of the main dataset with $k=35$), the maximum empirical kernel size is 7.37. In summary, our results show that the empirical kernel size $f(k)$ is, regardless of the parameter combination $(t,s,k)$, always significantly smaller than the theoretical upper bound. Note that the $g(k)$ bound in \cite{kelk2020new} is tight, in the sense that reduced instances can be constructed with \emph{exactly} $11d_{\TBR}-9$ taxa. This implies that the observed gap between the empirical and theoretical bound cannot be closed by strengthening the analysis in \cite{kelk2020new}. While $f(k)$ increases with $t$, the slope of the curve drops off for very large $t$ as shown in Figure~\ref{fig:kernel-bound}(b). However, even for $t=3000$, $f(k)=6.12$ is still much smaller than the theoretical upper bound of 11.  If we compare the empirical kernel sizes for a fixed $t$ and different values of $k$, we see in Figure~\ref{fig:kernel-bound}(a) that $f(k)$ decreases with increasing $k$. \steven{One possible explanation for this \blue{is} as follows. For a fixed $t$, the number of tree pairs in the dataset that have the property $t/k < c$ (for a given constant $c$) clearly increases as $k$ increases. For example, for $k=35$ all the trees with $t \in \{50,100\}$ satisfy the inequality if we take $c=3$; but at $k=10$ and $c=3$ none of the trees in the dataset do, because all the trees in the dataset have at least 50 taxa. Now, if we take $k$ as an estimator of $d_{\TBR}$, we see that $f(k)=\frac{scn(T,T')}{\theta(T,T')}$ will \emph{a priori} be at most $c$ for tree pairs satisfying the inequality, because $scn(T,T') \leq t$ and $\theta(T,T') \approx k$. Possibly this has the effect of pulling the $f(k)$ curves downwards for increasing $k$ i.e. because a higher proportion of the trees in the dataset have a small number of taxa relative to $d_{\TBR}$, and thus contribute very low $f(k)$ values.}

\blue{To evaluate the impact of different skew values on the $f(k)$ values, we have repeated the analysis that underlies Figure~\ref{fig:kernel-bound} for each of the three skew values $\{50,70,90\}$. The results are shown in Figure~\ref{fig:kernel-bound-skew} with curves that are overall very similar to those shown in Figure~\ref{fig:kernel-bound}. Moreover, for fixed values of $k$ and $t$, we see that $f(k)$ increases very slowly as $s$ increases which may again be the result of the subtree reduction being (on average) less effective when applied to trees pairs that are heavily skewed} \revisionSK{(possibly to the extent of cancelling out enhanced performance of the chain reduction - but whether this indeed occurs requires further investigation).}


\begin{figure}[t]
\noindent\begin{minipage}[t]{0.33\textwidth}
\scalebox{0.48}{
\begin{tikzpicture}[x=1pt,y=1pt]
\definecolor{fillColor}{RGB}{255,255,255}
\path[use as bounding box,fill=fillColor,fill opacity=0.00] (0,0) rectangle (289.08,289.08);
\begin{scope}
\path[clip] (  0.00,  0.00) rectangle (289.08,289.08);
\definecolor{drawColor}{RGB}{255,255,255}
\definecolor{fillColor}{RGB}{255,255,255}

\path[draw=drawColor,line width= 0.6pt,line join=round,line cap=round,fill=fillColor] (  0.00,  0.00) rectangle (289.08,289.08);
\end{scope}
\begin{scope}
\path[clip] ( 41.33, 39.69) rectangle (283.58,263.95);
\definecolor{fillColor}{gray}{0.92}

\path[fill=fillColor] ( 41.33, 39.69) rectangle (283.58,263.95);
\definecolor{drawColor}{RGB}{255,255,255}

\path[draw=drawColor,line width= 0.3pt,line join=round] ( 41.33, 59.60) --
	(283.58, 59.60);

\path[draw=drawColor,line width= 0.3pt,line join=round] ( 41.33,147.85) --
	(283.58,147.85);

\path[draw=drawColor,line width= 0.3pt,line join=round] ( 41.33,236.11) --
	(283.58,236.11);

\path[draw=drawColor,line width= 0.3pt,line join=round] ( 52.34, 39.69) --
	( 52.34,263.95);

\path[draw=drawColor,line width= 0.3pt,line join=round] (125.75, 39.69) --
	(125.75,263.95);

\path[draw=drawColor,line width= 0.3pt,line join=round] (199.16, 39.69) --
	(199.16,263.95);

\path[draw=drawColor,line width= 0.3pt,line join=round] (272.57, 39.69) --
	(272.57,263.95);

\path[draw=drawColor,line width= 0.6pt,line join=round] ( 41.33,103.72) --
	(283.58,103.72);

\path[draw=drawColor,line width= 0.6pt,line join=round] ( 41.33,191.98) --
	(283.58,191.98);

\path[draw=drawColor,line width= 0.6pt,line join=round] ( 89.04, 39.69) --
	( 89.04,263.95);

\path[draw=drawColor,line width= 0.6pt,line join=round] (162.45, 39.69) --
	(162.45,263.95);

\path[draw=drawColor,line width= 0.6pt,line join=round] (235.86, 39.69) --
	(235.86,263.95);
\definecolor{drawColor}{RGB}{0,0,0}

\path[draw=drawColor,line width= 0.6pt,line join=round] ( 52.34,113.67) --
	( 89.04,155.56) --
	(125.75,177.62) --
	(162.45,190.98) --
	(199.16,201.63) --
	(235.86,216.92) --
	(272.57,223.75);

\path[draw=drawColor,line width= 0.6pt,dash pattern=on 1pt off 3pt ,line join=round] ( 52.34,104.02) --
	( 89.04,125.55) --
	(125.75,136.20) --
	(162.45,143.91) --
	(199.16,140.20) --
	(235.86,150.09) --
	(272.57,153.15);

\path[draw=drawColor,line width= 0.6pt,dash pattern=on 4pt off 4pt ,line join=round] ( 52.34, 93.84) --
	( 89.04,110.19) --
	(125.75,104.19) --
	(162.45,114.20) --
	(199.16,119.96) --
	(235.86,121.14) --
	(272.57,123.49);
\definecolor{fillColor}{RGB}{0,0,0}

\path[draw=drawColor,line width= 0.4pt,line join=round,line cap=round,fill=fillColor] ( 52.34,113.67) circle (  1.96);

\path[draw=drawColor,line width= 0.4pt,line join=round,line cap=round,fill=fillColor] ( 89.04,155.56) circle (  1.96);

\path[draw=drawColor,line width= 0.4pt,line join=round,line cap=round,fill=fillColor] (125.75,177.62) circle (  1.96);

\path[draw=drawColor,line width= 0.4pt,line join=round,line cap=round,fill=fillColor] (162.45,190.98) circle (  1.96);

\path[draw=drawColor,line width= 0.4pt,line join=round,line cap=round,fill=fillColor] (199.16,201.63) circle (  1.96);

\path[draw=drawColor,line width= 0.4pt,line join=round,line cap=round,fill=fillColor] (235.86,216.92) circle (  1.96);

\path[draw=drawColor,line width= 0.4pt,line join=round,line cap=round,fill=fillColor] (272.57,223.75) circle (  1.96);

\path[draw=drawColor,line width= 0.4pt,line join=round,line cap=round,fill=fillColor] ( 52.34,104.02) circle (  1.96);

\path[draw=drawColor,line width= 0.4pt,line join=round,line cap=round,fill=fillColor] ( 89.04,125.55) circle (  1.96);

\path[draw=drawColor,line width= 0.4pt,line join=round,line cap=round,fill=fillColor] (125.75,136.20) circle (  1.96);

\path[draw=drawColor,line width= 0.4pt,line join=round,line cap=round,fill=fillColor] (162.45,143.91) circle (  1.96);

\path[draw=drawColor,line width= 0.4pt,line join=round,line cap=round,fill=fillColor] (199.16,140.20) circle (  1.96);

\path[draw=drawColor,line width= 0.4pt,line join=round,line cap=round,fill=fillColor] (235.86,150.09) circle (  1.96);

\path[draw=drawColor,line width= 0.4pt,line join=round,line cap=round,fill=fillColor] (272.57,153.15) circle (  1.96);

\path[draw=drawColor,line width= 0.4pt,line join=round,line cap=round,fill=fillColor] ( 52.34, 93.84) circle (  1.96);

\path[draw=drawColor,line width= 0.4pt,line join=round,line cap=round,fill=fillColor] ( 89.04,110.19) circle (  1.96);

\path[draw=drawColor,line width= 0.4pt,line join=round,line cap=round,fill=fillColor] (125.75,104.19) circle (  1.96);

\path[draw=drawColor,line width= 0.4pt,line join=round,line cap=round,fill=fillColor] (162.45,114.20) circle (  1.96);

\path[draw=drawColor,line width= 0.4pt,line join=round,line cap=round,fill=fillColor] (199.16,119.96) circle (  1.96);

\path[draw=drawColor,line width= 0.4pt,line join=round,line cap=round,fill=fillColor] (235.86,121.14) circle (  1.96);

\path[draw=drawColor,line width= 0.4pt,line join=round,line cap=round,fill=fillColor] (272.57,123.49) circle (  1.96);
\end{scope}
\begin{scope}
\path[clip] (  0.00,  0.00) rectangle (289.08,289.08);
\definecolor{drawColor}{gray}{0.30}

\node[text=drawColor,anchor=base east,inner sep=0pt, outer sep=0pt, scale=  1.40] at ( 36.38, 98.90) {5};

\node[text=drawColor,anchor=base east,inner sep=0pt, outer sep=0pt, scale=  1.40] at ( 36.38,187.16) {10};
\end{scope}
\begin{scope}
\path[clip] (  0.00,  0.00) rectangle (289.08,289.08);
\definecolor{drawColor}{gray}{0.20}

\path[draw=drawColor,line width= 0.6pt,line join=round] ( 38.58,103.72) --
	( 41.33,103.72);

\path[draw=drawColor,line width= 0.6pt,line join=round] ( 38.58,191.98) --
	( 41.33,191.98);
\end{scope}
\begin{scope}
\path[clip] (  0.00,  0.00) rectangle (289.08,289.08);
\definecolor{drawColor}{gray}{0.20}

\path[draw=drawColor,line width= 0.6pt,line join=round] ( 89.04, 36.94) --
	( 89.04, 39.69);

\path[draw=drawColor,line width= 0.6pt,line join=round] (162.45, 36.94) --
	(162.45, 39.69);

\path[draw=drawColor,line width= 0.6pt,line join=round] (235.86, 36.94) --
	(235.86, 39.69);
\end{scope}
\begin{scope}
\path[clip] (  0.00,  0.00) rectangle (289.08,289.08);
\definecolor{drawColor}{gray}{0.30}

\node[text=drawColor,anchor=base,inner sep=0pt, outer sep=0pt, scale=  1.40] at ( 89.04, 25.10) {100};

\node[text=drawColor,anchor=base,inner sep=0pt, outer sep=0pt, scale=  1.40] at (162.45, 25.10) {200};

\node[text=drawColor,anchor=base,inner sep=0pt, outer sep=0pt, scale=  1.40] at (235.86, 25.10) {300};
\end{scope}
\begin{scope}
\path[clip] (  0.00,  0.00) rectangle (289.08,289.08);
\definecolor{drawColor}{RGB}{0,0,0}

\node[text=drawColor,anchor=base,inner sep=0pt, outer sep=0pt, scale=  1.60] at (162.45,  8.61) {\itshape $t$};
\end{scope}
\begin{scope}
\path[clip] (  0.00,  0.00) rectangle (289.08,289.08);
\definecolor{drawColor}{RGB}{0,0,0}

\node[text=drawColor,rotate= 90.00,anchor=base,inner sep=0pt, outer sep=0pt, scale=  1.60] at ( 16.52,151.82) {$f(k)$};
\end{scope}
\begin{scope}
\path[clip] (  0.00,  0.00) rectangle (289.08,289.08);
\definecolor{drawColor}{RGB}{0,0,0}

\node[text=drawColor,anchor=base,inner sep=0pt, outer sep=0pt, scale=  1.60] at (162.45,272.56) {(a) $k=5$ main dataset};
\end{scope}
\end{tikzpicture}}
\end{minipage}
\begin{minipage}[t]{0.33\textwidth}
\scalebox{0.48}{
\begin{tikzpicture}[x=1pt,y=1pt]
\definecolor{fillColor}{RGB}{255,255,255}
\path[use as bounding box,fill=fillColor,fill opacity=0.00] (0,0) rectangle (289.08,289.08);
\begin{scope}
\path[clip] (  0.00,  0.00) rectangle (289.08,289.08);
\definecolor{drawColor}{RGB}{255,255,255}
\definecolor{fillColor}{RGB}{255,255,255}

\path[draw=drawColor,line width= 0.6pt,line join=round,line cap=round,fill=fillColor] (  0.00,  0.00) rectangle (289.08,289.08);
\end{scope}
\begin{scope}
\path[clip] ( 41.33, 39.69) rectangle (283.58,263.95);
\definecolor{fillColor}{gray}{0.92}

\path[fill=fillColor] ( 41.33, 39.69) rectangle (283.58,263.95);
\definecolor{drawColor}{RGB}{255,255,255}

\path[draw=drawColor,line width= 0.3pt,line join=round] ( 41.33, 59.60) --
	(283.58, 59.60);

\path[draw=drawColor,line width= 0.3pt,line join=round] ( 41.33,147.85) --
	(283.58,147.85);

\path[draw=drawColor,line width= 0.3pt,line join=round] ( 41.33,236.11) --
	(283.58,236.11);

\path[draw=drawColor,line width= 0.3pt,line join=round] ( 52.34, 39.69) --
	( 52.34,263.95);

\path[draw=drawColor,line width= 0.3pt,line join=round] (125.75, 39.69) --
	(125.75,263.95);

\path[draw=drawColor,line width= 0.3pt,line join=round] (199.16, 39.69) --
	(199.16,263.95);

\path[draw=drawColor,line width= 0.3pt,line join=round] (272.57, 39.69) --
	(272.57,263.95);

\path[draw=drawColor,line width= 0.6pt,line join=round] ( 41.33,103.72) --
	(283.58,103.72);

\path[draw=drawColor,line width= 0.6pt,line join=round] ( 41.33,191.98) --
	(283.58,191.98);

\path[draw=drawColor,line width= 0.6pt,line join=round] ( 89.04, 39.69) --
	( 89.04,263.95);

\path[draw=drawColor,line width= 0.6pt,line join=round] (162.45, 39.69) --
	(162.45,263.95);

\path[draw=drawColor,line width= 0.6pt,line join=round] (235.86, 39.69) --
	(235.86,263.95);
\definecolor{drawColor}{RGB}{0,0,0}

\path[draw=drawColor,line width= 0.6pt,line join=round] ( 52.34, 50.47) --
	( 89.04, 61.88) --
	(125.75, 75.03) --
	(162.45, 87.01) --
	(199.16, 95.89) --
	(235.86,103.81) --
	(272.57,109.25);

\path[draw=drawColor,line width= 0.6pt,dash pattern=on 1pt off 3pt ,line join=round] ( 52.34, 50.47) --
	( 89.04, 61.32) --
	(125.75, 73.05) --
	(162.45, 82.54) --
	(199.16, 90.20) --
	(235.86, 95.85) --
	(272.57, 97.89);

\path[draw=drawColor,line width= 0.6pt,dash pattern=on 4pt off 4pt ,line join=round] ( 52.34, 50.31) --
	( 89.04, 60.34) --
	(125.75, 69.53) --
	(162.45, 77.72) --
	(199.16, 84.44) --
	(235.86, 88.78) --
	(272.57, 89.61);
\definecolor{fillColor}{RGB}{0,0,0}

\path[draw=drawColor,line width= 0.4pt,line join=round,line cap=round,fill=fillColor] ( 52.34, 50.47) circle (  1.96);

\path[draw=drawColor,line width= 0.4pt,line join=round,line cap=round,fill=fillColor] ( 89.04, 61.88) circle (  1.96);

\path[draw=drawColor,line width= 0.4pt,line join=round,line cap=round,fill=fillColor] (125.75, 75.03) circle (  1.96);

\path[draw=drawColor,line width= 0.4pt,line join=round,line cap=round,fill=fillColor] (162.45, 87.01) circle (  1.96);

\path[draw=drawColor,line width= 0.4pt,line join=round,line cap=round,fill=fillColor] (199.16, 95.89) circle (  1.96);

\path[draw=drawColor,line width= 0.4pt,line join=round,line cap=round,fill=fillColor] (235.86,103.81) circle (  1.96);

\path[draw=drawColor,line width= 0.4pt,line join=round,line cap=round,fill=fillColor] (272.57,109.25) circle (  1.96);

\path[draw=drawColor,line width= 0.4pt,line join=round,line cap=round,fill=fillColor] ( 52.34, 50.47) circle (  1.96);

\path[draw=drawColor,line width= 0.4pt,line join=round,line cap=round,fill=fillColor] ( 89.04, 61.32) circle (  1.96);

\path[draw=drawColor,line width= 0.4pt,line join=round,line cap=round,fill=fillColor] (125.75, 73.05) circle (  1.96);

\path[draw=drawColor,line width= 0.4pt,line join=round,line cap=round,fill=fillColor] (162.45, 82.54) circle (  1.96);

\path[draw=drawColor,line width= 0.4pt,line join=round,line cap=round,fill=fillColor] (199.16, 90.20) circle (  1.96);

\path[draw=drawColor,line width= 0.4pt,line join=round,line cap=round,fill=fillColor] (235.86, 95.85) circle (  1.96);

\path[draw=drawColor,line width= 0.4pt,line join=round,line cap=round,fill=fillColor] (272.57, 97.89) circle (  1.96);

\path[draw=drawColor,line width= 0.4pt,line join=round,line cap=round,fill=fillColor] ( 52.34, 50.31) circle (  1.96);

\path[draw=drawColor,line width= 0.4pt,line join=round,line cap=round,fill=fillColor] ( 89.04, 60.34) circle (  1.96);

\path[draw=drawColor,line width= 0.4pt,line join=round,line cap=round,fill=fillColor] (125.75, 69.53) circle (  1.96);

\path[draw=drawColor,line width= 0.4pt,line join=round,line cap=round,fill=fillColor] (162.45, 77.72) circle (  1.96);

\path[draw=drawColor,line width= 0.4pt,line join=round,line cap=round,fill=fillColor] (199.16, 84.44) circle (  1.96);

\path[draw=drawColor,line width= 0.4pt,line join=round,line cap=round,fill=fillColor] (235.86, 88.78) circle (  1.96);

\path[draw=drawColor,line width= 0.4pt,line join=round,line cap=round,fill=fillColor] (272.57, 89.61) circle (  1.96);
\end{scope}
\begin{scope}
\path[clip] (  0.00,  0.00) rectangle (289.08,289.08);
\definecolor{drawColor}{gray}{0.30}

\node[text=drawColor,anchor=base east,inner sep=0pt, outer sep=0pt, scale=  1.40] at ( 36.38, 98.90) {5};

\node[text=drawColor,anchor=base east,inner sep=0pt, outer sep=0pt, scale=  1.40] at ( 36.38,187.16) {10};
\end{scope}
\begin{scope}
\path[clip] (  0.00,  0.00) rectangle (289.08,289.08);
\definecolor{drawColor}{gray}{0.20}

\path[draw=drawColor,line width= 0.6pt,line join=round] ( 38.58,103.72) --
	( 41.33,103.72);

\path[draw=drawColor,line width= 0.6pt,line join=round] ( 38.58,191.98) --
	( 41.33,191.98);
\end{scope}
\begin{scope}
\path[clip] (  0.00,  0.00) rectangle (289.08,289.08);
\definecolor{drawColor}{gray}{0.20}

\path[draw=drawColor,line width= 0.6pt,line join=round] ( 89.04, 36.94) --
	( 89.04, 39.69);

\path[draw=drawColor,line width= 0.6pt,line join=round] (162.45, 36.94) --
	(162.45, 39.69);

\path[draw=drawColor,line width= 0.6pt,line join=round] (235.86, 36.94) --
	(235.86, 39.69);
\end{scope}
\begin{scope}
\path[clip] (  0.00,  0.00) rectangle (289.08,289.08);
\definecolor{drawColor}{gray}{0.30}

\node[text=drawColor,anchor=base,inner sep=0pt, outer sep=0pt, scale=  1.40] at ( 89.04, 25.10) {100};

\node[text=drawColor,anchor=base,inner sep=0pt, outer sep=0pt, scale=  1.40] at (162.45, 25.10) {200};

\node[text=drawColor,anchor=base,inner sep=0pt, outer sep=0pt, scale=  1.40] at (235.86, 25.10) {300};
\end{scope}
\begin{scope}
\path[clip] (  0.00,  0.00) rectangle (289.08,289.08);
\definecolor{drawColor}{RGB}{0,0,0}

\node[text=drawColor,anchor=base,inner sep=0pt, outer sep=0pt, scale=  1.60] at (162.45,  8.61) {\itshape $t$};
\end{scope}
\begin{scope}
\path[clip] (  0.00,  0.00) rectangle (289.08,289.08);
\definecolor{drawColor}{RGB}{0,0,0}

\node[text=drawColor,rotate= 90.00,anchor=base,inner sep=0pt, outer sep=0pt, scale=  1.60] at ( 16.52,151.82) {$f(k)$};
\end{scope}
\begin{scope}
\path[clip] (  0.00,  0.00) rectangle (289.08,289.08);
\definecolor{drawColor}{RGB}{0,0,0}

\node[text=drawColor,anchor=base,inner sep=0pt, outer sep=0pt, scale=  1.60] at (162.45,272.56) {(b) $k=35$ main dataset};
\end{scope}
\end{tikzpicture}}
\end{minipage}
\begin{minipage}[t]{0.33\textwidth}
\scalebox{0.48}{\input{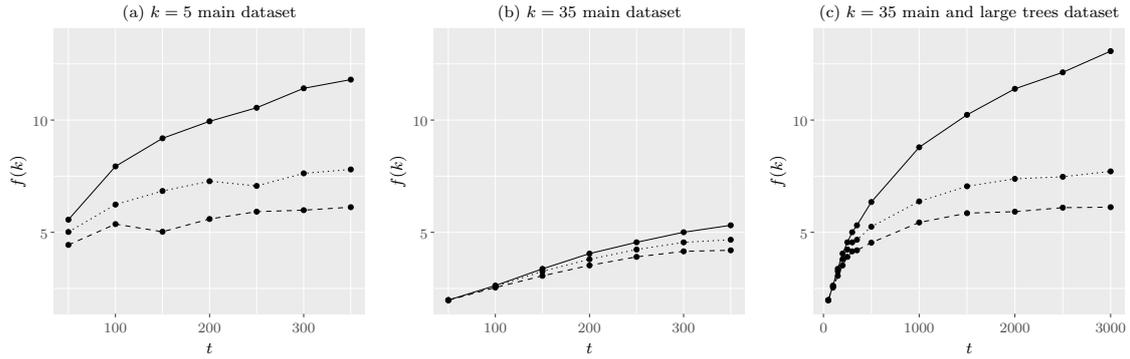}}
\end{minipage}
\caption{\blue{Empirical estimates $f(k)$ of kernel size, depending on $t$, for the subtree reduced (solid), the subtree and chain reduced (dotted), and the fully reduced (dashed) tree pairs.}}
\label{fig:more-kernel-bounds}
\end{figure}

\blue{Finally, we note that similar to the analysis that underlies Figures~\ref{fig:kernel-bound},} empirical kernel sizes can also be computed for the tree pairs that have been reduced under the subtree reduction only, or under the subtree and chain reduction only. \blue{For a selection of tree pairs of both data sets,} these are shown in Figure~\ref{fig:more-kernel-bounds}. In particular, we note that the $sc$ curve remains well below the worst-case bound of $15$; in Figure~\ref{fig:more-kernel-bounds}(c) the curve seems to be flattening at around \blue{$f(k)=7$}.

\revisionSK{These results hint at the possibility that the empirical kernel size is determined by more parameters than just $d_{\TBR}$. It is difficult to say what these hidden parameters might be. Popular `local' phylogenetic parameters such as \emph{level} \cite{bordewich2017fixed} do not explain the phenomenon, since the seven reduction rules
are largely uninfluenced by low level.
The phenomenon is probably linked to the fact that the tight instances described in \cite{tightkernel,kelk2020new} were extremely
carefully engineered. It is very unlikely that such instances will occur in our experimental setting, so the worst-case kernel size is unlikely to occur. To understand this phenomenon further it will be necessary to carefully analyze such tight instances and formalize how (and, quantitatively, how far) they differ from trees generated in our experimental setting; we defer this to future work.}

\subsection{Parameter reductions}\label{sec:param-red}
We have analyzed the frequency with which the five new reductions have triggered parameter reductions among all 735 tree pairs and observed that 499 tree pairs were reduced by at least one application of Reduction 3, 4, or 5. The maximum number of parameter reductions for a single tree pair is seven. A histogram that presents the distribution of the 499 tree pairs over all  combinations of $k\in\{5,10, 15,\ldots,35\}$ and $t=\{50,100,150,\ldots,350\}$ as well as a table that summarizes the frequency of parameter reductions over all 735 tree pairs is shown in Figure~\ref{fig:histo_para_red}.  Each of the 3  tree pairs that have undergone seven parameter reductions have a parameter combination with $t\geq 250$ and $k\geq 30$. Figure~\ref{fig:histo_para_red} indicates that the number of tree pairs that have triggered at least one parameter reduction increases as both $t$ and $k$ grow.  To confirm this trend, we have also analyzed the amount of parameter reductions for all tree pairs of the larger trees dataset. For this dataset, all 90 tree pairs have been reduced by at least one application of Reduction 3, 4, or 5 and the maximum number of such reductions applied to a single tree pair is ten. The associated tree pair has the parameter combination $t=3000$ and $k=35$.  These results show that not only the ordinary new reductions that preserve the TBR distance (these are Reductions 6 and 7) enhance the power of the subtree and chain reduction but that the same holds for the other three reductions.

 \begin{figure}[h]
 \begin{minipage}{0.5\linewidth}
		\centering
		\includegraphics{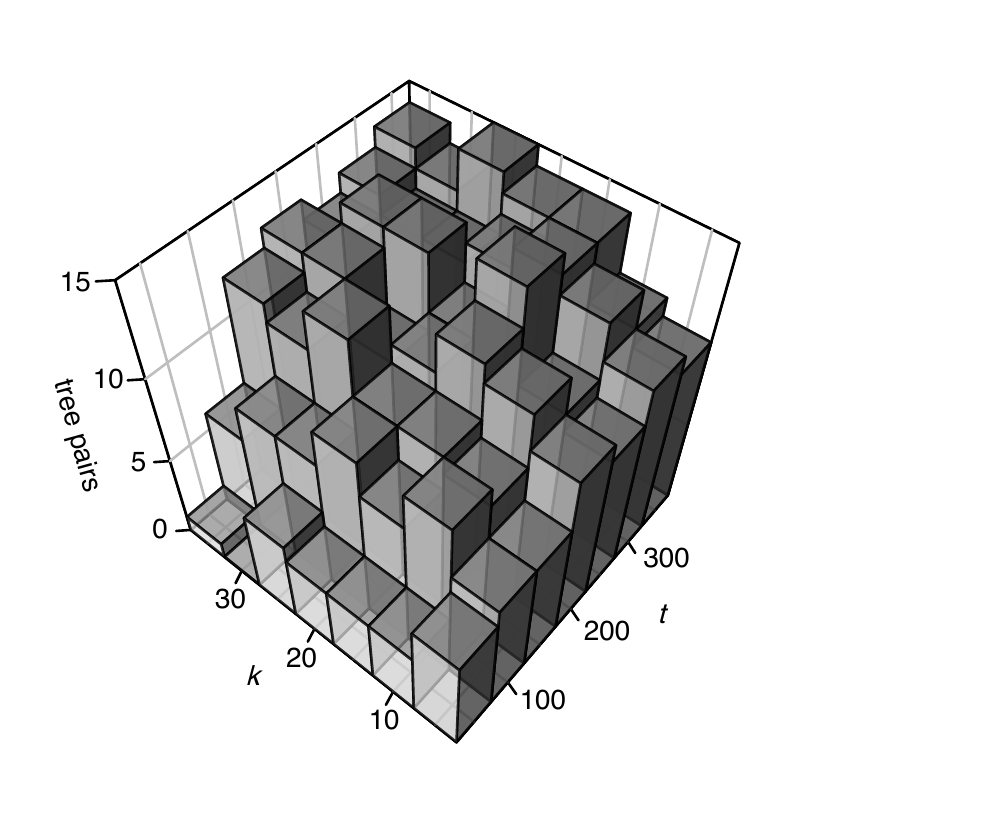}
	\end{minipage}
	\begin{minipage}{0.65\linewidth}
		\centering
		\begin{tabular}{|c|r|}
			\hline
			 parameter red.         & tree pairs \\
			\hline
			0&	236\\
			1&	217\\
			2&	133\\
			3&	89\\
			4&	34\\
			5&	17\\
			6&	6\\
			7&	3\\
			\hline
			skew         & tree pairs \\
			\hline
			50&	169\\
			70&	163\\
			90&	167\\
			\hline
		\end{tabular}
	\end{minipage}\hfill
\caption{Left: Distribution of all 499 tree pairs of the main dataset that were reduced by at least one application of Reduction 3, 4, or 5. Right: Summary of all tree pairs that were reduced by at least one application of Reduction 3, 4, or 5 depending on the total number of such reductions and the skew of the tree pairs over all 735 tree pairs.}
\label{fig:histo_para_red}	
\end{figure}

 \revisionSK{It is known that two of the parameter-reducing reductions, Reductions 4 and 5, trigger an immediate subsequent application
of the subtree rule, but beyond this it is too difficult to argue analytically why any of the reduction rules should trigger each other. This is because the seven reduction rules target rather different local structures, and after applying a reduction rule the local structure is either destroyed or remains incompatible with other reduction rules. The phenomenon of parameter reduction occurring repeatedly for a single pair of trees is therefore quite possibly caused by multiple parts of the trees triggering reduction rules independently. Understanding whether this is indeed what happens, or whether prioritizing certain reduction rules over others works well in practice, requires further empirical research.}

\subsection{The \steven{quality} of the $d_{\MP}$ lower bound}
To evaluate the quality of $d_{\MP}^\ell$, which we have used as a lower bound on the TBR distance throughout our experimental study, we have compared  $d^\ell_{\MP}(T,T')$ and $d_{\TBR}(T,T')$ for all $697$ tree pairs $(T,T')$ for which we could calculate the  TBR distance exactly (see Section~\ref{sec:exactTBR}). Only~62 (that is $8.9\%$)  out of 697 tree pairs of the main dataset have $d^\ell_{\MP}(T,T')<d_{\TBR}(T,T')$. Among these 62 pairs, a majority of 92\% have $d_{\TBR}(T,T')-d^\ell_{\MP}(T,T')=1$. The maximum difference between $d_{\TBR}(T,T')$ and $d^\ell_{\MP}(T,T')$ is 3 over all 62 tree pairs. A histogram that presents the distribution of these 62 tree pairs over all parameter combinations of $k\in\{5,10, 15,\ldots,35\}$ and $t=\{50,100,150,\ldots,350\}$ is shown in Figure~\ref{fig:histo_dMP}.  The tree pair with $d_{\TBR}(T,T')-d^\ell_{\MP}(T,T')=3$ has the parameter combinations $t=350$ and $k=35$. In comparison, 48 out of all 90 tree pairs of the large trees dataset have  $d_{\MP}^\ell(T,T')<d_\TBR(T,T')$. However, the maximum difference between $d_{\TBR}(T,T')$ and $d^\ell_{\MP}(T,T')$ remains 3. These results verify that $d_{\MP}^\ell$ is an effective lower for computing $d_\TBR$ and that the difference between $d_{\TBR}(T,T')$ and $d^\ell_{\MP}(T,T')$  grows very slowly, even for large trees with up to 3000 leaves.

Another point worth drawing attention to, is the \emph{efficiency} of computing $d_{\MP}^\ell$. In our main dataset $d_{\MP}^\ell$ was, for 679 tree pairs, after 10 seconds greater than or equal to the best lower bound that uSPR had computed after 5 minutes of iterative deepening.

 \begin{figure}[h]
 \begin{minipage}{0.5\linewidth}
		\centering
		\includegraphics{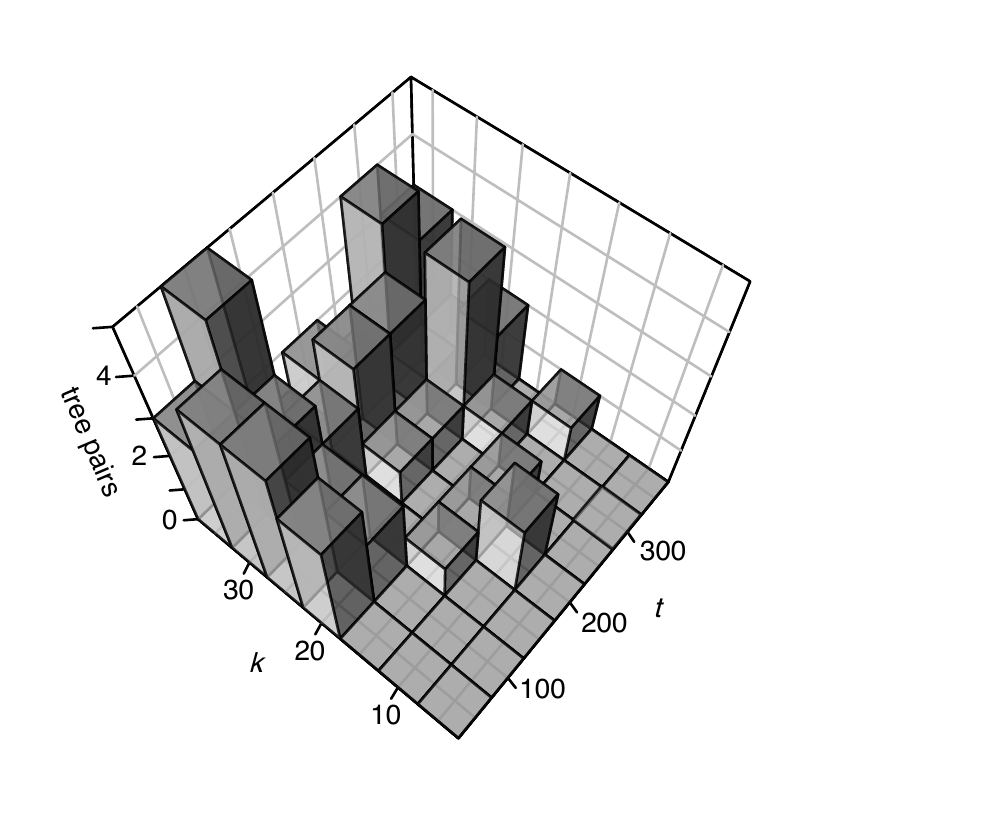}
	\end{minipage}
	\begin{minipage}{0.65\linewidth}
		\centering
		\begin{tabular}{|c|r|}
			\hline
			 $d_{\TBR}(T,T')-d^\ell_{\MP}(T,T')$        & tree pairs \\
			\hline
			0&	635\\
			1&	57\\
			2&	4\\
			3&	1\\	
			\hline
			skew         & tree pairs \\
			\hline
			50&	23\\
			70&	19\\
			90&	20\\
			\hline
		\end{tabular}
	\end{minipage}\hfill
\caption{Left: Distribution of all 62 tree pairs $(T,T')$ with $d^\ell_{\MP}(T,T')<d_{\TBR}(T,T')$. Right: Summary of the difference between  $d_{\TBR}$ and $d^\ell_{\MP}$ over all 697 tree pairs with known TBR distance, and summary of all 62 tree pairs whose $d_{\MP}$ lower bound is less than their TBR distance depending on the skew of these tree pairs.}
\label{fig:histo_dMP}\end{figure}

\section{Discussion: shattered forests?}
\label{sec:toughcookies}

Recall that there were 38 pairs of trees for which neither uSPR nor Tubro could compute $d_{\TBR}$ in 5 minutes. These all had high $d_{\TBR}$ ($k \geq 25$). One of the most striking tractability insights from the experiments is that, amongst these instances, 24 had only 50 taxa. Indeed, Figure~\ref{fig:histo-unsolved-tree-pairs} suggests that, within the group of 38 instances, and for fixed $k$, trees with fewer taxa are \emph{more} likely to confound the solvers than trees with more taxa. This is \revisionSK{noteworthy} and, as explained in Section \ref{sec:exactTBR}, it is not obvious why uSPR or Tubro would exhibit this behavior. Less surprisingly, instances with $t=50$ and high TBR distance $(k \in \{30,35\})$
are largely unaffected by the seven reduction rules (see Section \ref{subsec:avgreduc}), while all other parameter combinations exhibit higher levels of reduction. The emerging picture is that tree pairs with a small number of taxa $t$ but (very) high TBR distance relative to $t$, are both hard to solve, and hard to reduce. The common factor here, we suspect, is that for such pairs of trees a maximum agreement forest will necessarily have many components, of which many will be small. Manual inspection of such instances suggests that many components of the forest indeed contain only a single, or perhaps two, taxa. Such ``shattered'' forests will only very occasionally trigger the seven reduction rules. Apparently, such forests and their lack of topological structure can also cause both the combinatorial branching strategy of uSPR and the ILP-based branching of Tubro severe problems. 
 FPT algorithms such as uSPR are not designed to run quickly when the parameter in question (here, $d_{\TBR}$) is high, and the focus when developing such algorithms is usually to ensure fast running time when the size of the instance is \emph{large} relative to the parameter. Here we have a high parameter combined with \emph{small} instances. This requires further research and different algorithmic techniques.
 It is probably relevant that such instances increasingly start to resemble \emph{random} pairs of trees; the literature on the expected number of components in maximum agreement forests of random pairs of trees is therefore worth exploring~\cite{extremal2019}.

\section{Conclusions and future work}\label{sec:discussion}

In this article we have demonstrated that reduction rules for TBR distance have the potential to significantly reduce the size of instances, and that \steven{theoretically} stronger reduction rules do have clear added value \steven{in practice}. \purple{Our experimental results also highlight that}
the empirical bound on the size of the \purple{TBR} kernel is significantly lower than that predicted in the worst-case ($15k-9$ and $11k-9$ respectively, for the two different sets of reduction rules). \purple{Lastly, we} have shown that parameter reduction occurs quite often, and that a sampling strategy based on sampling convex characters quickly yields strong lower bounds on $d_{\TBR}$.

In addition to the phenomenon of ``shattered forests'' raised in the previous section, a number of interesting questions have emerged from our work
 concerning (i) kernelization and (ii) the actual computation of TBR distance. Regarding kernelization, it is natural to ask why the empirical kernel bound is
much better than the worst-case bound. Clearly, the carefully-constructed tight instances described in \cite{tightkernel,kelk2020new} are unlikely to occur in an experimental setting
such as ours. Such phenomena are pervasive in theoretical computer science. Nevertheless, can we rigorously explain \emph{why} the experimentally-generated instances can be more successfully reduced? One way to tackle this would be to search for \revisionSK{additional parameters} which, when combined with $d_{\TBR}$, produce a more accurate prediction of the obtained kernel size. \revisionSK{We currently do not have any concrete candidates for what these parameters could be, but understanding how the tight instances from~\cite{tightkernel,kelk2020new} differ from those in our experimental setting, and quantifying this dissimilarity, is likely to yield insights in this direction.} Next, it is natural to ask: is it possible, by developing new rules, to obtain a kernel smaller than $11k-9$, and does it make sense to implement these rules in practice? \revisionSK{Breaking the $11k-9$ barrier is likely to be primarily a theoretical exercise, guided by the limits of the counting arguments in \cite{kelk2020new}. However, perhaps an empirical examination of how solvers such as uSPR tackle fully reduced instances could offer some extra clues, at least on the question of whether any future reduction rules will be effective in practice.}

On the \steven{computational} side, there is \steven{still room} for improvement. \steven{While the combination of uSPR and Tubro is capable of solving most of the trees in our experimental dataset, there still exist} pairs of
sometimes small trees where neither uSPR or Tubro can compute $d_{\TBR}$ in 5 minutes. Perhaps the branching factor in the
FPT branching algorithm underpinning uSPR \cite{chen2015parameterized,whidden2018calculating} can be lowered by a deeper study of the combinatorics of agreement forests.
\revisionSK{It also seems important to understand why certain pairs of trees trigger multiple parameter reductions.} On the engineering side we are optimistic that more advanced ILP engineering techniques
can be used to speed up Tubro. \purple{Future research should focus on continuing the strategy of blending and merging existing techniques into an ensemble that we adopted in this article. These techniques include}  kernelization to reduce the size of instances (and to generate useful combinatorial insights, such as chain \purple{preservation}), fast generation of lower and upper bounds, FPT branching algorithms, ILP,  \purple{and} other exponential-time algorithms. In this article the techniques were coupled fairly loosely. We anticipate that an ensemble in which the techniques are more deeply integrated, \revisionSK{i.e. interleaved,} will yield further speedups. This is a challenging engineering task, but also a challenging mathematical one, since it is not always easy to translate intermediate solutions or bounds produced by one of the techniques to
another. \revisionSK{For example, as soon as a branching algorithm starts cutting edges in the trees, the instance becomes a more general type of instance: a pair of forests, rather than a pair of trees. To adapt the reduction rules from \cite{kelk2020new} to forests will require a significant theoretical effort.} 


\section*{Acknowledgements}

We would like to thank the participants of the \emph{Algorithms and Complexity in Phylogenetics} seminar in 2019 (Dagstuhl Seminar 19443) \purple{and Schloss Dagstuhl for hosting the seminar. We also thank A. Alhazmi for useful conversations on the topic of this paper and an anonymous referee for their helpful comments.}

\clearpage
\pagebreak
\appendix
\section{Tubro: Using kernelization and Integer Linear Programming to compute $d_{\TBR}$}

As is well known, kernelization does not in itself solve a problem, it simply reduces it in size. To solve the kernel an (efficient) exponential-time algorithm is required. Throughout the experimental section we mainly used uSPR to compute TBR distances, and this was primarily to be able to compute $f(k)$ values \revisionSK{(i.e. estimates of the empirical kernel size)}. As part of our work, we experimented not just with kernelization but also with an alternative exact algorithm for computing the TBR distance exactly. The result is our package Tubro, which incorporates all seven reductions and can, if desired, output the reduced trees to be solved by another package. However, it also incorporates its own exact algorithm, based on augmenting a core Integer Linear Programming (ILP) formulation with certain additional features. Before we present the ILP formulation, and describe the various augmentations as well as Tubro's strengths and weaknesses, we need a couple new definitions.

Let $T$ be a phylogenetic tree on $X$. A {\it quartet} is a phylogenetic tree with exactly four leaves. For example, if $\{a,b,c,d\}\subseteq X$ , we say that $ab|cd$ is a quartet of $T$ if  the path from $a$ to $b$ does not intersect the path from $c$ to $d$ in $T$. Note that, if $ab|cd$ is not a quartet of $T$, then either $ac|bd$ or $ad|bc$ is a quartet of $T$.  If $ab|cd$ is a quartet of $T$, we say that $T$ {\it displays} $ab|cd$. As a consequence, $T$ displays exactly $\binom{n}{4}$ quartets.
\\

\subsection{A polynomial-size Integer Linear Programming formulation for {\MAF}}

The ILP formulation we use is adapted from \cite{MILPsprWu}. Both our formulation and the formulation in~\cite{MILPsprWu} are based on the idea of cutting a minimum number of edges in one of the trees, such that an agreement forest is obtained. \purple{To this end,} decision variables represent which edges are cut. The formulation in \cite{MILPsprWu} works on \emph{rooted} binary phylogenetic trees, leading to two different types of constraints. Here we are working with \emph{unrooted} binary phylogenetic trees. As we shall see this allows us to obtain a simplified ILP in which there is only one type of constraint. We start by describing the formulation and proving its correctness.

Let $T$ and $T'$ be the two phylogenetic trees on $X$ with $n=|X|$. We select arbitrarily one of the trees; here we take $T$. Let $E(T)$ be the set of edges in $T$. For two taxa $x,y \in X$, let $P_T(xy)$ be the edges on the unique path from $x$ to $y$ in $T$.
Furthermore, let $Q$ be the set consisting of those quartets which are displayed by $T$ but not by $T'$.  For each $e \in E(T)$ the ILP contains a binary decision variable $x_e$, so $2n-3$ such variables in total. There is one constraint for each quartet in $Q$. The number of constraints thus depends on how different $T$ and $T'$ are, but is in any case $O(n^4)$. The ILP formulation has a Hitting Set flavor:

\begin{equation*}
\begin{array}{l@{}ll}
\text{minimize}  & \hspace{0.2cm} \displaystyle\sum\limits_{e \in E(T)}x_{e} &\\
\text{subject to}& \displaystyle\sum\limits_{ e \in P_T(ab) \cup P_T(cd) } \hspace{-0.5cm}  x_{e} \geq 1  & \hspace{1cm}\text{for each } ab|cd \in Q\\
\text{and}                 &                                                x_{e} \in \{0,1\} &\hspace{1cm}\text{for each }e \in E(T)
\end{array}
\end{equation*}
\noindent
Rather than proving (only) that the optima of the ILP coincides with $d_\MAF(T,T')$, we prove the following slightly more general statement, which allows also non-optimal solutions to the ILP to be mapped to agreement forests. This \purple{is crucial} to extract valid agreement forests even if it takes too long for the ILP solver to reach optimality.

\begin{theorem}
\label{thm:ILPcorrect}
Let $T$ and $T'$ be two phylogenetic trees on $X$. An agreement forest $F$ for $T$ and $T'$ with $|F|=t$ induces a feasible solution to the ILP with objective function value $t-1$. Furthermore, a feasible solution to the ILP with objective function value $t'$ induces an agreement forest $F$ of $T$ and $T'$ with $|F|\leq t'+1$ components.
\end{theorem}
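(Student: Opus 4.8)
The plan is to treat the ILP as a combinatorial encoding of the operation ``cut a set of edges of $T$ and read off the taxon sets of the resulting components.'' The two halves of the statement then correspond to the two directions of a correspondence between edge cuts in $T$ and partitions of $X$, and the whole proof turns on translating the defining conditions (1) and (2) of an agreement forest into statements about quartets. The single fact I would isolate first is a characterisation of overlap: for disjoint $B_i,B_j$ with $|B_i|,|B_j|\ge 2$, the embeddings $T'[B_i]$ and $T'[B_j]$ share a vertex if and only if there are $a,b\in B_i$ and $c,d\in B_j$ such that $T'$ does \emph{not} display $ab|cd$; and for singleton blocks no overlap can occur, since a leaf of $T'$ lies on no path between two other leaves. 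I would prove this directly from the identity $T'[B_i]=\bigcup_{a,b\in B_i}P_{T'}(ab)$.

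For the forward direction I start from an agreement forest $F=\{B_0,\dots,B_{t-1}\}$. By condition (2) the embeddings $T[B_0],\dots,T[B_{t-1}]$ are vertex-disjoint connected subtrees of $T$, and every leaf of $T$ lies in one of them. I would choose a set $S\subseteq E(T)$ with $|S|=t-1$ whose removal leaves exactly $t$ components, each containing exactly one embedding; equivalently each component carries the taxa of exactly one block. Existence of such $S$ is the routine combinatorial step: cutting every edge not internal to an embedding isolates the $t$ embeddings together with the ``free'' internal vertices, and one then reattaches each free vertex to a single neighbour (greedily, towards a fixed embedding), saving one cut per free vertex and arriving at exactly $t-1$ cuts. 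Setting $x_e=1$ for $e\in S$ gives objective value $t-1$, and I verify feasibility by contradiction: if some $ab|cd\in Q$ had no cut edge on $P_T(ab)\cup P_T(cd)$, then $a,b$ would lie in a common component, hence a common block $B_i$, and likewise $c,d\in B_j$. If $i=j$ then $ab|cd$, being displayed by $T$, is displayed by $T|B_i=T'|B_i$ and hence by $T'$; if $i\ne j$ then $P_{T'}(ab)\subseteq T'[B_i]$ and $P_{T'}(cd)\subseteq T'[B_j]$ are disjoint by condition (2), so again $T'$ displays $ab|cd$. Either way $ab|cd\notin Q$, a contradiction.

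For the reverse direction I take a feasible solution, let $S=\{e:x_e=1\}$ with $|S|=t'$, and cut $S$ in $T$, obtaining $t'+1$ components. Discarding the components with no taxa, I let $F$ be the collection of taxon sets of the remaining components; then $|F|\le t'+1$ and $F$ partitions $X$. Condition (2) for $T$ is immediate, since distinct blocks sit in distinct components and each embedding stays inside its component. For condition (1), any quartet $ab|cd$ of $T$ with all four taxa in one block $B_i$ has $P_T(ab)\cup P_T(cd)\subseteq T[B_i]$, which contains no edge of $S$; feasibility then forces $ab|cd\notin Q$, i.e. $T'$ displays it, so $T|B_i$ and $T'|B_i$ display the same quartets and are therefore equal (blocks of size $\le 3$ being trivial). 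For condition (2) in $T'$ I argue contrapositively via the overlap characterisation: if $T'[B_i]$ and $T'[B_j]$ overlapped for some $i\ne j$, there would be $a,b\in B_i$ and $c,d\in B_j$ with $ab|cd$ not displayed by $T'$; but $a,b$ and $c,d$ lie in different components of $T-S$, so $P_T(ab)$ and $P_T(cd)$ are disjoint and $T$ \emph{does} display $ab|cd$, whence $ab|cd\in Q$ while no edge of $S$ lies on $P_T(ab)\cup P_T(cd)$ --- contradicting feasibility.

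I expect the main obstacle to be condition (2) for $T'$ in the reverse direction: unlike the other conditions it cannot be read off directly from the component structure of $T-S$, and it must be extracted from the constraints even though those constraints only range over $Q$, the quartets displayed by $T$ but not by $T'$. The overlap characterisation is what bridges this gap, and getting its degenerate (singleton-block) cases right, together with the exact edge count $t-1$ in the forward direction, are the points that need the most care; everything else reduces to the fact that a binary phylogenetic tree is determined by the set of quartets it displays.
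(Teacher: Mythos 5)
Your proof is correct and takes essentially the same approach as the paper's: both directions rest on the same quartet-based contradiction arguments, and your ``overlap characterisation'' lemma is a mild repackaging of the paper's shared-edge argument for establishing condition (2) in $T'$. The only substantive difference is that you explicitly construct the $(t-1)$-edge cut set realizing the forest in the forward direction (via contracting embeddings and reattaching free vertices), a step the paper simply asserts as following from the definition of an agreement forest.
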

\begin{proof}
Let $F$ be an agreement forest of $T$ and $T'$ containing $t$ components. By the definition of an agreement forest, there exists a subset $E_F$ of $E(T)$ with  $|E_F|=t-1$ such that deleting exactly the edges of $E_F$  in $T$ results in a graph that contains $t$ connected components and the partition of $X$ induced by the taxa in these components is exactly equal to $F$.
Note that $E_F$ is not necessarily unique. We argue that setting the decision variables corresponding to $E_F$ to 1, and all other decision variables to 0, yields a feasible solution to the ILP. Towards a contradiction, assume that this is not the case. Then there is a quartet $ab|cd$ such that $T$ displays $ab|cd$, $T'$ does not display  $ab|cd$, and none of the decision variables corresponding to edges in $P(ab) \cup P(cd)$ are set to 1. Observe that, for each component $B \in F$, $\{a,b,c,d\} \not \subseteq B$, because otherwise $T|B\ne T'|B$ contradicting that $F$ is an agreement forest for $T$ and $T'$.
So $\{a,b,c,d\}$ intersects at least two components of $F$. If $\{a,b\} \subseteq B$ and $\{c,d\} \subseteq B'$, where $B \neq B'$, then because $T[B]$ and $T[B']$ (resp. $T'[B]$ and $T'[B']$) are vertex-disjoint in $T$ (resp. $T'$) both $T$ and $T'$ display $ab|cd$; again a contradiction. In fact, the only two possibilities are (1) that each taxon in $\{a,b,c,d\}$ intersects a different component of $F$, and (2) three of $\{a,b,c,d\}$ occur in a component $B$ and the remaining taxon in $B'$, where $B' \neq B$. Hence, regardless which of (1) and (2) applies, $a$ and $b$ occur in different components of $F$ and/or $c$ and $d$ occur in different components of $F$. Suppose without loss of generality that $a$ and $b$ occur in different components of $F$. Then $P(ab) \cap E_F \neq \emptyset$ which implies that there exists an edge in $P(ab) $ whose corresponding decision variable is set to 1;
yielding a final contradiction.

For the second statement, assume that we have a feasible solution to the ILP with objective function value $t'$. Let $E_I$ be the edges of $T$ corresponding to decision variables that have been set to 1 in this solution. Let $P$ be the partition of $X$ induced by the connected components of $E(T) \setminus E_I$ after deleting any connected components that do not contain any taxa, if they exist\footnote{An optimal solution to the \purple{ILP  never induces} such taxa-free connected components, but a sub-optimal solution might.}. Clearly, $|P| \leq t'+1$, since the deletion of an edge increases the number of connected components by at most one. We claim that $P$ is in fact an agreement forest of $T$ and $T'$. Once again towards a contradiction, assume this is not the case. Suppose that condition (1) in the definition of an agreement forest is violated. Then, there exists a block $B \in P$ such that $|B| \geq 4$ and $T|B \neq T'|B$. It follows that there exist 4 taxa $\{a,b,c,d\} \subseteq B$ such that, without loss of generality, $ab|cd$ is displayed by $T|B$ (and hence by $T$) but not by $T'|B$ (and hence not by $T'$). But none of the edges on $T[B]$ are in $E_I$. In particular, \purple{none of the edges on the path from $a$ to $b$ and  none of the edges on the path from $c$ to $d$ in $T$ have been cut}, contradicting the feasibility of the ILP solution. Now, suppose that condition (2) is violated. Recall that, in $T$, the images of $B$ and $B'$ do not intersect, by construction. Thus there exist two distinct blocks $B, B' \in P$ such that the two embeddings $T'[B]$ and $T'[B']$ are not vertex-disjoint in $T'$.  In fact, they are not edge-disjoint, due to the fact that $T'$ is binary.
In turn, this implies that $|B|, |B'| \geq 2$.  Let $e = \{u,v\}$ be any edge in $T'$ that is shared by $T'[B]$ and $T'[B']$. Deleting $e$ naturally induces a partition of $B$ into $B_u$ and $B_v$, where $B_u$ (resp. $B_v$) are those taxa in $B$ that are closer in $T'$ to $u$ than $v$ (resp. closer in $T'$ to $v$ than $u$). \purple{Observe that each of $B_u$ and $B_v$ contains at least one taxon.}
Let $B'_u$ and $B'_v$ be defined analogously \purple{with respect to $B'$}. Furthermore, let $a \in B_u$, $b \in B_v$, $c \in B'_u$ and $d \in B'_v$. Since $T[B]$ and  $T[B']$ do not intersect in $T$, and $\{a,b\} \subseteq B$, and $\{c,d\} \subseteq B'$, it follows that $T$ displays $ab|cd$.  However, $T'$ displays $ac|bd$, because there is a path from $a$ to $c$ passing through $u$ and a path from $b$ to $d$ passing through $v$ and these two paths are disjoint. Hence, the ILP included a constraint to cut at least one edge on the paths in $T$ from $a$ to $b$ and from $c$ to $d$.  But no such edge was cut, because $\{a,b\} \subseteq B$ and $\{c,d\} \subseteq B'$; a contradiction. $\qed$
\end{proof}

\purple{The next corollary is an immediate consequence of Theorem~\ref{thm:ILPcorrect}.}

\begin{corollary}
The optimum value computed by the ILP equals $d_\TBR(T,T') = d_\MAF(T,T') - 1$.
\end{corollary}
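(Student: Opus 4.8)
The plan is to derive the corollary by sandwiching the ILP optimum between $d_\MAF(T,T')-1$ from both sides, using the two halves of Theorem~\ref{thm:ILPcorrect} separately, and then invoking the Allen--Steel identity $d_\TBR(T,T')=d_\MAF(T,T')-1$ stated at the start of the excerpt. Let $\mathrm{OPT}$ denote the optimum objective value of the ILP, and abbreviate $d_\MAF:=d_\MAF(T,T')$. The two inequalities I need are $\mathrm{OPT}\leq d_\MAF-1$ and $\mathrm{OPT}\geq d_\MAF-1$.

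For the upper bound, I would take a \emph{maximum} agreement forest $F$ for $T$ and $T'$, so that $|F|=d_\MAF$. By the first statement of Theorem~\ref{thm:ILPcorrect}, this $F$ induces a \emph{feasible} solution of the ILP whose objective value is $|F|-1=d_\MAF-1$. Since the ILP is a minimization problem and feasible solutions give upper bounds on the optimum, it follows immediately that $\mathrm{OPT}\leq d_\MAF-1$.

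For the lower bound, I would start instead from an optimal ILP solution, which is in particular feasible and has objective value $\mathrm{OPT}$. By the second statement of Theorem~\ref{thm:ILPcorrect}, this solution induces an agreement forest $F$ for $T$ and $T'$ with $|F|\leq \mathrm{OPT}+1$ components. Because $d_\MAF$ is by definition the minimum number of components over \emph{all} agreement forests, we get $d_\MAF\leq |F|\leq \mathrm{OPT}+1$, i.e. $\mathrm{OPT}\geq d_\MAF-1$. Combining the two inequalities yields $\mathrm{OPT}=d_\MAF-1$, and substituting the Allen--Steel identity $d_\TBR(T,T')=d_\MAF-1$ gives $\mathrm{OPT}=d_\TBR(T,T')=d_\MAF(T,T')-1$, as claimed.

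I do not anticipate a genuine obstacle here, since the corollary is a direct bookkeeping consequence of the theorem; the only point deserving care is that the second statement of Theorem~\ref{thm:ILPcorrect} gives only the inequality $|F|\leq \mathrm{OPT}+1$ rather than equality (reflecting the possibility of taxa-free components for sub-optimal solutions). This is precisely why it is important that the theorem was phrased for arbitrary feasible solutions and not just optima, but the inequality direction is exactly what the lower-bound argument requires, so no additional work is needed.
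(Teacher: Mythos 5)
Your proof is correct and is exactly the argument the paper intends: the paper states the corollary as an ``immediate consequence'' of Theorem~\ref{thm:ILPcorrect}, and your two-sided sandwich (maximum agreement forest $\Rightarrow$ feasible solution for the upper bound on $\mathrm{OPT}$; optimal ILP solution $\Rightarrow$ agreement forest for the lower bound, followed by the Allen--Steel identity) is precisely that consequence spelled out. Your closing remark about why the theorem's second statement only needs to give an inequality is also on point; no gaps.
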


Note that, if it is known a priori that there exists a maximum agreement forest in which a certain edge $e$ of $T$ is \emph{not}
cut, this can easily be enforced by adding the constraint $x_e = 0$ (or better by removing $x_e$ from the set of decision variables). As we shall see in due course, Tubro can make good use of this, for example to stipulate that it is unnecessary to cut edges in preserved common chains even if the chains themselves cannot be further reduced. More complex restrictions on feasible solutions, such as ``at least one of edge $e$ and $e'$ must be cut'', can easily be added using standard ILP modeling techniques.

\subsection{High-level description of Tubro}
\label{subsec:def_tubro2}

Tubro is highly configurable, with many options that can be switched on or off. \purple{The} core functionality can be summarized as follows.

\begin{enumerate}
\item First, the instance is kernelized by applying all seven reductions.
\item Next, the ILP formulation described in the previous section is generated.
\item \label{stepstep}The ILP formulation is simplified using \emph{chain preservation} (see below).
\item \label{step} A simple greedy algorithm is applied to produce an agreement forest that is not-necessarily maximum, which gives an upper bound on $d_\TBR$. This is the classical greedy algorithm for Hitting Set instances~\cite{chvatal1979greedy}: select a decision variable not yet set to 1 which intersects with a maximum number of unsatisfied constraints (cq. resolves a maximum number of as yet unresolved quartet conflicts which are quartets displayed by only one of $T$ and $T'$).
\item The ILP is fed to the ILP solver Gurobi~\cite{gurobi}.  In the experiments we used Gurobi version 8.1.1. and warm-started it with the upper bound from the greedy algorithm described in Step~\ref{step}.
\item While Gurobi is solving, the $d_\MP$ lower bound sampling strategy algorithm is run in parallel and every time an improved lower bound is found, this is communicated to Gurobi.
\end{enumerate}
The chain preservation  mentioned above in Step~\ref{stepstep} leverages \cite[Theorem 5]{kelk2020new}, which is as follows:

\begin{theorem}
\label{thm:allchainsintact}
Let $T$ and $T'$ be two phylogenetic trees on $X$.
Let $K$ be an (arbitrary) set of mutually taxa-disjoint chains that are common to $T$ and $T'$.
Then there exists a maximum agreement forest $F$ of $T$ and $T'$ such that
\begin{enumerate}
\item every $n$-chain in $K$ with $n\geq 3$
is preserved in $F$, and
\item every 2-chain in $K$
that is pendant in at least one of $T$ and $T'$
is preserved in $F$.
\end{enumerate}
\end{theorem}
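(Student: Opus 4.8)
The plan is to argue by exchange, starting from an arbitrary maximum agreement forest (MAF) $F$ and repairing broken chains one local move at a time. Concretely, I would take the MAF minimizing the total number of chains of $K$ that it fails to preserve (a chain $C=(\ell_1,\dots,\ell_n)$ being \emph{preserved} when all its leaves lie in one component, equivalently when no edge interior to $C$ is cut), and show this minimum is $0$. Recall from the ILP discussion that a MAF corresponds to a minimum edge cut in $T$; I will freely switch between the partition and edge-cut viewpoints, and will repeatedly use that $C$ is \emph{common}, so its backbone $p_1,\dots,p_n$ has the same combinatorial position in $T$ and in $T'$, and an edge is cut along $C$ in $T$ if and only if the corresponding edge is cut in $T'$.

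The structural heart is the following observation, valid when $n\geq 3$: each interior parent $p_j$ (with $1<j<n$) has degree exactly $3$, with neighbours $\ell_j,p_{j-1},p_{j+1}$, so no foreign taxon attaches to the chain except beyond the two ends $p_1,p_n$. Combining this with vertex-disjointness of the embeddings $T[B]$, I would prove that (i) the chain leaves inside any block form a contiguous run; and (ii) no interior leaf is a singleton block and no chain-internal edge between two multi-leaf runs is cut, since in either case the two neighbouring blocks can be merged into one contiguous run --- valid simultaneously in $T'$ because the same backbone region is free there --- strictly lowering the component count and contradicting maximality. Hence in a MAF the only way a chain can be broken is that an \emph{endpoint} leaf, say $\ell_n$, is split off as a singleton while $\ell_1,\dots,\ell_{n-1}$ sit together in a block $B$ that may additionally ``thread'' through $p_n$ to pick up a pendant subtree $R$ beyond the chain.

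The crucial step is to show this residual configuration is still reducible when $n\geq 3$: I would merge $\{\ell_n\}$ back into $B$ and verify legality in both trees. In $T$ this is immediate. In $T'$ the delicate point is the relative position of $R$: because $B$ retains the run $\ell_1,\dots,\ell_{n-1}$ with \emph{at least two} leaves, the isomorphism $T|B=T'|B$ together with the common backbone pins $R$ to the side beyond $p'_n$, so $\ell_n$ slots in between the run and $R$ in exactly the same way in $T'$ as in $T$, the isomorphism condition survives, and vertex-disjointness is maintained. The merge then reduces the component count, contradicting maximality, so every $n\geq 3$ chain is preserved. This orientation argument is what I expect to be the main obstacle, since it is precisely the interplay between the two trees --- where only the chain is shared and the rest of $T'$ may be arranged arbitrarily --- that must be controlled, and it is exactly here that the hypothesis $n\geq 3$ (equivalently, a retained run of length $\geq 2$) is indispensable.

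For a $2$-chain the retained run has a single leaf, so no orientation of $R$ can be deduced and the merge may genuinely fail; this is why non-pendant $2$-chains are excluded, as one can then be trapped in an irreducible configuration in which a block carries foreign subtrees on \emph{both} sides of the chain. Under the pendancy hypothesis, however, $\{\ell_1,\ell_2\}$ is a cherry in one of the trees, which offers a single, clean attachment point and rules out the two-sided obstruction; I would then use a different exchange --- replacing $B$ and the singleton $\{\ell_2\}$ by the cherry $\{\ell_1,\ell_2\}$ and the leftover block $B\setminus\{\ell_1\}$ --- and check that the leftover (a restriction of a valid block) is legitimate and that the cherry's embedding, occupying only the shared parent and its two pendant edges, is vertex-disjoint from it in both trees. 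This keeps the component count unchanged while preserving the $2$-chain. Finally, since the chains in $K$ are mutually taxa-disjoint and each move is confined to a single chain's backbone region (a merge heals one edge; the cherry split-off keeps the leftover intact as one block, so no other chain is re-broken), the moves do not interfere, and the potential-minimization argument delivers a single MAF preserving every eligible chain.
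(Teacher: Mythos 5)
First, a point of comparison: the paper itself does not prove Theorem~\ref{thm:allchainsintact}; it imports it verbatim as Theorem~5 of \cite{kelk2020new}, where the published argument is an exchange argument over the components meeting a chain. Your high-level strategy (a break-minimizing MAF, local exchanges, orientation-pinning once a block retains at least two chain leaves, and the two-sided obstruction for 2-chains) matches the spirit of that argument, but two of your concrete steps fail, and both failures trace back to the same structural error: you insist that every repair \emph{strictly lowers} the component count, contradicting maximality. That cannot be the mechanism, because there genuinely exist maximum agreement forests that break chains of $K$; any correct proof must also use count-\emph{preserving} moves and derive a contradiction with the extremal choice (break-minimality), not with maximality.

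Concretely, for $n\geq 3$ your claimed residual configuration (one endpoint split off as a singleton, the remaining $n-1$ leaves together in a block $B$) is not the only one. Let $T$ be the caterpillar $a,\ell_1,\ell_2,\ell_3,r$ and $T'$ the caterpillar $r,\ell_1,\ell_2,\ell_3,a$. Then $(\ell_1,\ell_2,\ell_3)$ is a common 3-chain, and $F=\{\{a,\ell_2,r\},\{\ell_1\},\{\ell_3\}\}$ is a maximum agreement forest (one checks no agreement forest of size 2 exists). Here \emph{both} endpoints are singletons and the middle block retains only one chain leaf while threading foreign taxa through both ends of the backbone, so its orientation in $T'$ is reversed relative to $T$: merging $\{\ell_1\}$ (or $\{\ell_3\}$) into $\{a,\ell_2,r\}$ produces unequal restrictions, and merging the two singletons together violates vertex-disjointness. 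Your cases (i)/(ii) do not exclude this configuration and your ``crucial step'' cannot repair it. The missing move is a count-preserving dissolve-and-reassemble step: replace the threading block and the singletons by the two foreign sides and the intact chain, $F\mapsto\{\{\ell_1,\ell_2,\ell_3\},\{a\},\{r\}\}$, together with the counting argument showing this never increases the number of components when at least two chain leaves have been split off.

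Your pendant 2-chain exchange is also invalid. Take $T$ the quartet $a\ell_1|\ell_2r$ and $T'$ the quartet $\ell_1\ell_2|ar$; the 2-chain $(\ell_1,\ell_2)$ is pendant in $T'$, and $F=\{\{a,\ell_1,r\},\{\ell_2\}\}$ is a MAF breaking it. Your proposed replacement $\{\{\ell_1,\ell_2\},\{a,r\}\}$ is not an agreement forest: in $T$ the embeddings of $\{\ell_1,\ell_2\}$ and $\{a,r\}$ share the two internal vertices $p_1,p_2$. Note that pendancy does not ``rule out the two-sided obstruction'' as you assert --- the block can still carry foreign taxa on both sides of the chain in the \emph{non-pendant} tree, exactly as here. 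What pendancy actually buys is that the far-side foreign part can be cleanly expelled: the valid, count-preserving repair is $\{\{a,\ell_1,\ell_2\},\{r\}\}$, i.e.\ absorb $\ell_2$ into the block and split off the foreign part lying beyond $p_2$ in $T$ as its own component. Without these two additional exchange moves and the associated counting, the proposal does not go through.
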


In \cite{kelk2020new} the theorem was used as part of a more comprehensive argument to prove the correctness of several of the new reduction rules introduced there. Interestingly, even after all reduction rules have been applied to exhaustion, there can exist common chains that, although they cannot be reduced further in length, still obey
the above theorem. As a result, it is safe to assume the existence of a maximum agreement forest in which no such chain is split across two or more components of this forest. Translated into the language of ILP, this means that for each edge $e$ contained in such a chain, we can \emph{a priori} set $x_e = 0$ in the ILP.  These variables can thus be removed from the ILP, simplifying the system. Tubro uses a simple greedy strategy to select a set of mutually taxa-disjoint chains with maximum total length, thus optimizing the number of variables that can be removed. In our analysis of the main dataset, chain preservation was switched \emph{on}.

Although a full explanation is beyond the scope of this paper, we observed that chain preservation does remove quite a lot of variables, \emph{even in fully reduced instances}.  To illustrate this, we performed a secondary experiment on all fully-reduced instances from the main dataset where the original trees had 50, 100, or 150 taxa. This was in total 315 tree pairs. We observed that on average the number of variables in the ILP was reduced by 26.29\% (standard deviation of 8.77) when chain preservation was switched on, compared to when it was switched off. This suggests that a significant number of common chains survive, that cannot be further reduced by the new reduction rules, but which \emph{do} fall under the chain preservation theorem. Indeed, the counting argument used in \cite{kelk2020new} to obtain the $11k-9$ upper bound on kernel size is based on the possibility of $O(k)$ irreducible 3-chains or 2-chains existing. These chains cannot be reduced by the existing reduction rules, but (for those that fall under the chain preservation theorem for a given set of mutually taxa-disjoint chains) we can still  algorithmically exploit the fact that they are preserved in some maximum agreement forest, by stipulating that the edges in the chains should not be cut.


\subsection{Optional extra: cluster reduction}
Let $T$ and $T'$ be two phylogenetic trees on $X$. For $Y\subset X$ with $|Y|\geq 2$ and $|X-Y|\geq 2$, we say that $Y$ is a {\it cluster} of $T$ if there exists a single edge in $T$ whose deletion disconnects $T$ into two parts such that the leaves of one part are bijectively labeled by elements in $Y$ while the leaves of the other part are bijectively labeled by elements in $X-Y$.
Now, let $Y$ be a cluster of both $T$ and $T'$; this is often referred to as a \emph{common cluster}. In \cite{bordewich2017fixed} a divide-and-conquer reduction rule is presented which,  at a high level, computes $d_{\MAF}(T, T')$ by computing $d_{\MAF}(T|Y, T'|Y)$ and $d_{\MAF}(T|X-Y, T'|X-Y)$ separately. It can be shown that $d_{\MAF}(T,T')$ is either equal to,
\begin{equation}
d_{\MAF}(T|Y, T'|Y) + d_{\MAF}(T|X-Y, T'|X-Y)
\end{equation}
or
\begin{equation}
d_{\MAF}(T|Y, T'|Y) + d_{\MAF}(T|X-Y, T'|X-Y) - 1.
\end{equation}
Deciding  whether (1) or (2) holds, first requires the addition of a ``placeholder'' taxon $\rho_1$ into \purple{$T|Y$ and $T'|Y$} which represents the location of the $X-Y$ side of the trees. Similarly, a placeholder taxon $\rho_2$ is added to
\purple{$T|X-Y$ and $T'|X-Y$} which represents the location of the $Y$ side of the trees. It is then necessary to query whether there exists a maximum agreement forest for the $\rho_1$-augmented instance (respectively, the $\rho_2$-augmented
instance) that isolates $\rho_1$ (respectively, $\rho_2$) in a singleton component. Depending on the outcome, a distinction can be made between (1) and~(2); for full details see \cite{bordewich2017fixed}. Tubro answers these queries by first solving
two separate ILPs for the $\rho_1$-augmented instance; one in which the edge entering $\rho_1$ is definitely cut, and one where the decision whether to cut this edge is left to the ILP. Symmetrically, it then solves two ILPs for the $\rho_2$-augmented
instance.
Hence, for a \purple{cluster $Y$ that is common to $T$ and $T'$}, four ILPs need to be solved in total. There are a number of additional subtle technicalities concerning the interaction of the placeholder taxon and the reduction rules from \cite{kelk2020new} but these are out of scope of the current article.

The decomposition of one ILP into four ILPs sounds rather cumbersome, but given that the bottleneck for \purple{Tubro's} performance is the number of taxa in an instance - recall that the ILP has size $O(|X|^4)$ - the cluster reduction can potentially cause a significant speedup
\purple{if $|Y|$ and $|X-Y|$ are roughly the same, and/or the reduction activates multiple times.}

We did not switch \purple{on} the cluster reduction when applying Tubro to the 89 instances that uSPR could not solve (post-kernelization) after 5 minutes. However, we did briefly check whether the cluster reduction might have helped
with these 89 instances. We observed that 86 of the instances had no \purple{common cluster}.
\purple{The} 3 instances that did have \purple{a common cluster}  had \purple{parameter combinations} $(t,s,k)$  of $(250,50,35)$, $(300,50,35)$ and $(200,70,35)$. \purple{For these tree pairs, we observed} that the clusters were heavily imbalanced, meaning that $|Y|$ was small and $|X-Y|$ was large.  This limited the practical impact of the cluster reduction since, for the larger \purple{tree pair on $X-Y$}, it was still necessary to construct a prohibitively large ILP.\\



\bibliography{article}{}
\bibliographystyle{plain}

\end{document}